\setlist{  
  listparindent=\parindent,
}
\newcommand{\foldop}{\textnormal{fold}}
\newcommand{\addop}{\textnormal{add}}
\newcommand{\mulop}{\textnormal{mult}}
\newcommand{\exprogsymbol}{\textsf{P}}
\newcommand{\exlistsymbol}{\textsf{L}}
\newcommand{\exEsymbol}{\textsf{E}}
\newcommand{\revision}[1]{{{{#1}}}}
\newcommand{\sslash}{\mkern-0mu/\mkern-5mu/\mkern-0mu}
\newcommand{\mydots}{\cdots}
\newcommand{\mycdots}{\cdots}
\newcommand{\assign}{:=}
\newcommand{\doubleprime}{{\prime\mkern-2mu\prime\mkern-3mu}}
\newcommand{\tool}{\xspace\textsc{Arborist}\xspace}
\newcommand{\webrobot}{\xspace\textsc{WebRobot}\xspace}
\newcommand{\helena}{\xspace\text{Helena}\xspace}
\newcommand{\newpara}[1]{{{\vspace{3pt} \noindent \emph{\textbf{#1}}}}}
\newcommand{\finalized}[1]{{{{#1}}}}
\newcommand{\mybox}[1]{
\vspace{3pt}
\begin{center}
\fcolorbox{black}{white}{\parbox{.94\linewidth}{
{#1}
}}
\vspace{3pt}
\end{center}
}
\newcommand{\evalmybox}[1]{
\vspace{3pt}
\begin{center}
\fcolorbox{black}{white}{\parbox{.95\linewidth}{
{#1}
}}
\vspace{3pt}
\end{center}
}
\algrenewcommand\algorithmicindent{1em}
\newcommand{\footprint}{\Omega}
\newenvironment{centermath}{\begin{center}$\displaystyle}{$\end{center}}
\newcommand{\emptylist}{\texttt{[]}}
\algnewcommand\algorithmicswitch{\textbf{switch}}
\algnewcommand\algorithmiccase{\textbf{case}}
\newcommand{\myprocedure}[2]{
\Statex\hspace{-12pt} {\bf procedure} \textsc{#1}({#2})
\vspace{1pt}
}
\algnewcommand\Input{
\hspace{-10pt}
\textbf{input:}
\vspace{0pt}
}
\algnewcommand\Output{
\hspace{-10pt}
\textbf{output:}
\vspace{0pt}
}
\newcommand{\inputcontexts}{\textit{GetContexts}}
\newcommand{\antiunifyfunc}{\textsc{AntiUnify}}
\newcommand{\parametrizefunc}{\textsc{Parametrize}}
\newcommand{\getannot}{\textit{Footprint}}
\newcommand{\newstate}{\textit{MkState}}
\newcommand{\evalftastate}[3]{#1 \vdash #2 \rightsquigarrow (#3)}
\newcommand{\evalftatransition}[3]{#1 \vdash #2 \rightsquigarrow (#3)}
\newcommand{\absolutexpath}{{\raisebox{1pt}{$\chi$}}}
\newcommand{\fullxpath}{\absolutexpath}
\newcommand{\traceconcat}{\listconcat}
\newcommand{\emptytrace}{\emptylist}
\newcommand{\irule}[2]{\mkern-2mu\displaystyle\frac{#1}{\vphantom{,}#2}\mkern-2mu}
\newcommand{\inputcontext}{C}
\newcommand{\outputval}{O}
\newcommand{\antiunify}[3]{#1 \vdash #2 \twoheadrightarrow #3}
\newcommand{\grammarsymbol}{\textsf{s}}
\newcommand{\grammareq}{::=}
\newcommand{\program}{P}
\newcommand{\programsymbol}{\textsf{P}}
\newcommand{\skipprog}{\emph{skip}}
\newcommand{\seqprog}{\emph{Seq}}
\newcommand{\expression}{E}
\newcommand{\statement}{\expression}
\newcommand{\expressionsymbol}{\textsf{E}}
\newcommand{\statementsymbol}{\expressionsymbol}
\newcommand{\clickexpression}{\emph{Click}}
\newcommand{\clickstatement}{\clickexpression}
\newcommand{\scrapetextexpression}{\emph{ScrapeText}}
\newcommand{\scrapetextstatement}{\scrapetextexpression}
\newcommand{\scrapelinkexpression}{\emph{ScrapeLink}}
\newcommand{\scrapelinkstatement}{\scrapelinkexpression}
\newcommand{\downloadexpression}{\emph{Download}}
\newcommand{\downloadstatement}{\downloadexpression}
\newcommand{\sendkeysexpression}{\emph{SendKeys}}
\newcommand{\senddataexpression}{\emph{EnterData}}
\newcommand{\enterdataexpression}{\senddataexpression}
\newcommand{\senddatastatement}{\senddataexpression}
\newcommand{\enterdatastatement}{\senddatastatement}
\newcommand{\gobackexpression}{\emph{GoBack}}
\newcommand{\extracturlexpression}{\emph{ExtractURL}}
\newcommand{\forselectorloop}{\emph{ForSelectors}}
\newcommand{\forselectorsloop}{\emph{ForSelectors}}
\newcommand{\fordataloop}{\emph{ForData}}
\newcommand{\whileloop}{\emph{While}}
\newcommand{\inputvar}{x}
\newcommand{\selectorexpr}{se}
\newcommand{\selectorexprsymbol}{\textsf{se}}
\newcommand{\selectorvar}{y}
\newcommand{\concreteselector}{\selectorexpr}
\newcommand{\dataexpr}{de}
\newcommand{\dataexprsymbol}{\textsf{de}}
\newcommand{\datavar}{z}
\newcommand{\datakey}{\emph{key}}
\newcommand{\dataexprlist}{lst}
\newcommand{\predicate}{\psi}
\newcommand{\domtag}{t}
\newcommand{\domattribute}{\tau}
\newcommand{\constantstring}{str}
\newcommand{\clickexpr}{\emph{Click}}
\newcommand{\cfg}{G}
\newcommand{\fta}{\mathcal{A}}
\newcommand{\ftastatespeculated}{p}
\newcommand{\ftastate}{q}
\newcommand{\ftastatep}{p}
\newcommand{\ftastater}{r}
\newcommand{\ftastates}{Q}
\newcommand{\transitions}{\Delta}
\newcommand{\ftatransitions}{\transitions}
\newcommand{\transition}{\delta}
\newcommand{\ftatransition}{\transition}
\newcommand{\transitionoperator}{f}
\newcommand{\ftatransitionoperator}{\transitionoperator}
\newcommand{\transitionarrow}{\rightarrow}
\newcommand{\ftatransitionarrow}{\transitionarrow}
\newcommand{\ftalang}{L}
\newcommand{\ftafinalstates}{\ftastates_{f}}
\newcommand{\ftaalphabet}{F}
\newcommand{\annot}{\Omega}
\newcommand{\inputcontexttooutput}{\mapsto}
\newcommand{\inputtooutput}{\mapsto}
\newcommand{\subfta}{\textit{SubFTA}}
\newcommand{\eval}[3]{#1 \vdash #2 : #3}
\newcommand{\inputdata}{I}
\newcommand{\bindsto}{\mapsto}
\newcommand{\env}{\Gamma}
\newcommand{\domtrace}{\Pi}
\newcommand{\dom}{\pi}
\newcommand{\listconcat}{\mkern-0mu \texttt{+} \mkern-1mu \texttt{+} \mkern1mu}
\newcommand{\actiontrace}{A}
\newcommand{\action}{a}
\newcommand{\datavalue}{v}
\begin{document}

\title{Efficient Bottom-Up Synthesis for Programs with Local Variables}


\author{Xiang Li*}
\affiliation{
  University of Michigan, Ann Arbor, USA
}
\email{xkevli@umich.edu}          

\author{Xiangyu Zhou*}
\affiliation{
    University of Michigan, Ann Arbor, USA
}
\email{xiangyz@umich.edu}         

\author{Rui Dong}
\affiliation{
  University of Michigan, Ann Arbor, USA
}

\author{Yihong Zhang}
\affiliation{
  University of Washington, USA
}

\author{Xinyu Wang}
\affiliation{
  University of Michigan, Ann Arbor, USA
}

\begin{abstract}

\revision{We propose a new synthesis algorithm that can \emph{efficiently} search programs with \emph{local} variables (e.g., those introduced by lambdas). 
Prior bottom-up synthesis algorithms are not able to evaluate programs with \emph{free local variables}, and therefore cannot effectively reduce the search space of such programs (e.g., using standard observational equivalence reduction techniques), making synthesis slow. 
Our algorithm can reduce the space of programs with local variables. 
The key idea, dubbed \emph{lifted interpretation}, is to lift up the program interpretation process, from evaluating one program at a time to simultaneously evaluating all programs from a grammar. 
Lifted interpretation provides a mechanism to systematically enumerate all binding contexts for local variables, thereby enabling us to evaluate and reduce the space of programs with local variables.  
Our ideas are instantiated in the domain of web automation. The resulting tool, $\tool$, can automate a significantly broader range of challenging tasks more efficiently than state-of-the-art techniques including $\webrobot$ and $\helena$.}

\end{abstract}

\begin{CCSXML}
<ccs2012>
   <concept>
       <concept_id>10011007.10011074.10011092.10011782</concept_id>
       <concept_desc>Software and its engineering~Automatic programming</concept_desc>
       <concept_significance>500</concept_significance>
       </concept>
   <concept>
       <concept_id>10011007.10011006.10011050.10011056</concept_id>
       <concept_desc>Software and its engineering~Programming by example</concept_desc>
       <concept_significance>500</concept_significance>
       </concept>
 </ccs2012>
\end{CCSXML}

\ccsdesc[500]{Software and its engineering~Automatic programming}
\ccsdesc[500]{Software and its engineering~Programming by example}

\keywords{Program Synthesis, Observational Equivalence, Web Automation}  

\def\thefootnote{*}\footnotetext{Xiang Li and Xiangyu Zhou contributed equally to this work.}\def\thefootnote{\arabic{footnote}}


\maketitle

\section{Introduction}\label{sec:intro}

Web automation can automate web-related tasks such as scraping data and filling web forms. 
While an increasing number of populations have found it useful~\cite{uipath-studiox-webinar-slides,chasins2019thesis,katongo2021towards}, it is notoriously difficult to create web automation programs~\cite{krosnick2021understanding}. 
Let us consider the following example, which we also use as a running example throughout the paper.

\begin{example}
{\small\url{https://haveibeenpwned.com/}} is a website where one could check whether or not an email address has been compromised (i.e., ``pwned''). 
Figures~\ref{fig:intro:not-pwned} and~\ref{fig:intro:pwned} show the webpage DOMs for an email with no pwnage detected and for a pwned email respectively; both figures are simplified from the original DOMs solely for presentation purposes.
Consider the task of scraping the pwnage text for each email from a list of emails.\footnote{This is a real-life task from the iMacros forum: \url{https://forum.imacros.net/viewtopic.php?f=7&t=26683}.} 
Figure~\ref{fig:pwnage-program} shows an automation program $P$ for this task. 
While the high-level logic of $P$ is rather simple, implementing it turns out to be very difficult. 
First, one must implement the right control-flow structure, such as the loop in $P$ with three instructions. 
Second, each instruction must use a \emph{generalizable} selector to locate the desired DOM element \emph{for all emails} across all iterations. 
For example, line 4 from Figure~\ref{fig:pwnage-program} uses one generalizable selector that utilizes the {\small\texttt{aria-expanded}} attribute. This attribute is necessary for generalization, since both messages (pwned and not pwned) are \emph{always} in the DOM but which one to render is determined by this attribute's value. 
On the other hand, a full XPath expression {\small\texttt{/html/body/div/div/div/div/h2}} locates the element with ``{\textsf{Good news -- no pwnage found!}}'' in \emph{both} Figure~\ref{fig:intro:not-pwned} and Figure~\ref{fig:intro:pwned}, which is \emph{not} desired. 
In general, one has to try many \emph{candidate selectors} before finding a generalizable one --- in other words, this is fundamentally a search problem.
\label{ex:intro:example}
\end{example}

\begin{figure}
\centering
\begin{minipage}[b]{.49\linewidth}
\hspace{3pt}
\includegraphics[height=4cm]{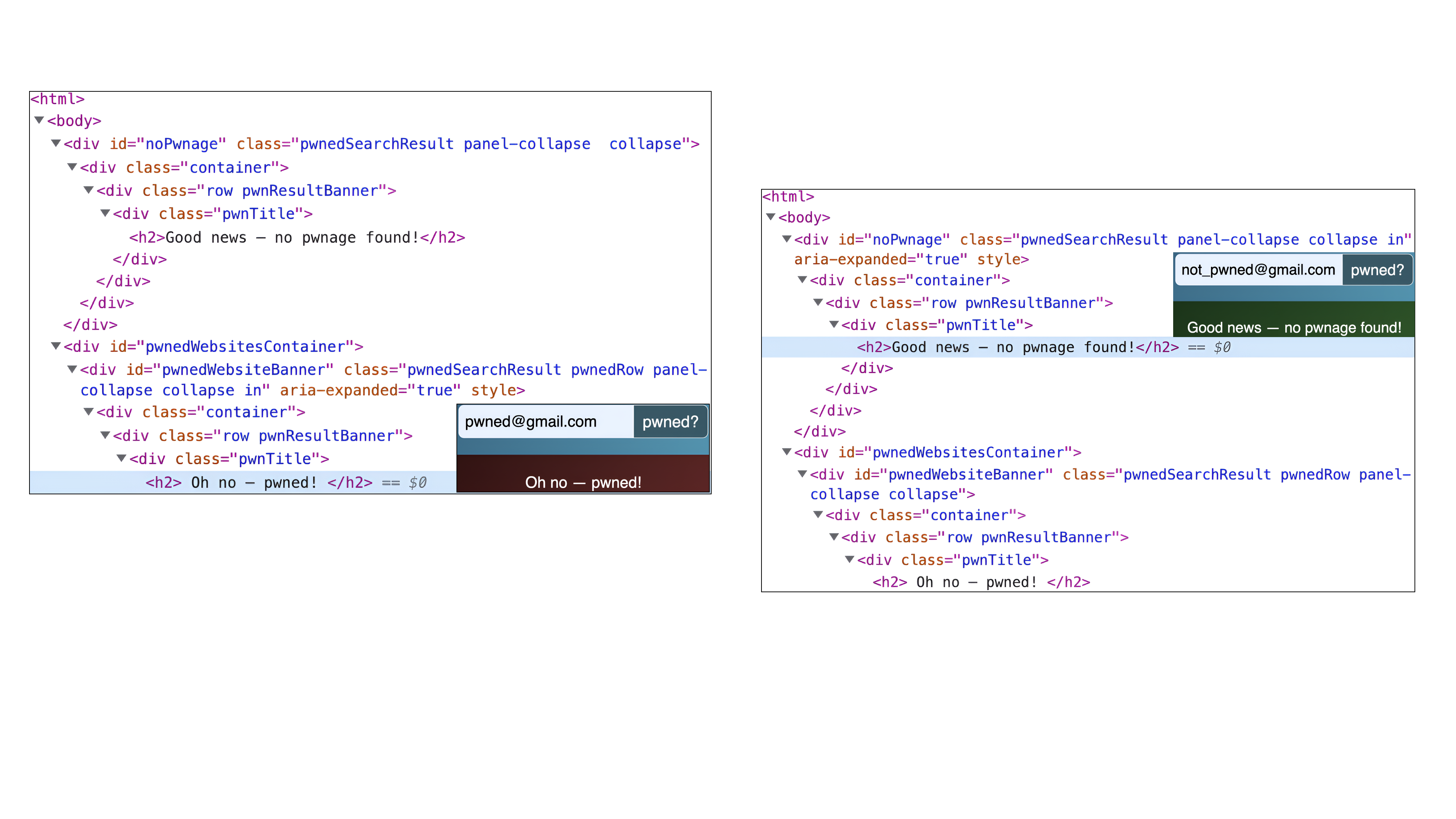}
\vspace{-5pt}
\caption{DOM when no pwnage detected.}
\label{fig:intro:not-pwned}
\end{minipage}
\hspace{2pt}
\begin{minipage}[b]{.49\linewidth}
\includegraphics[height=4cm]{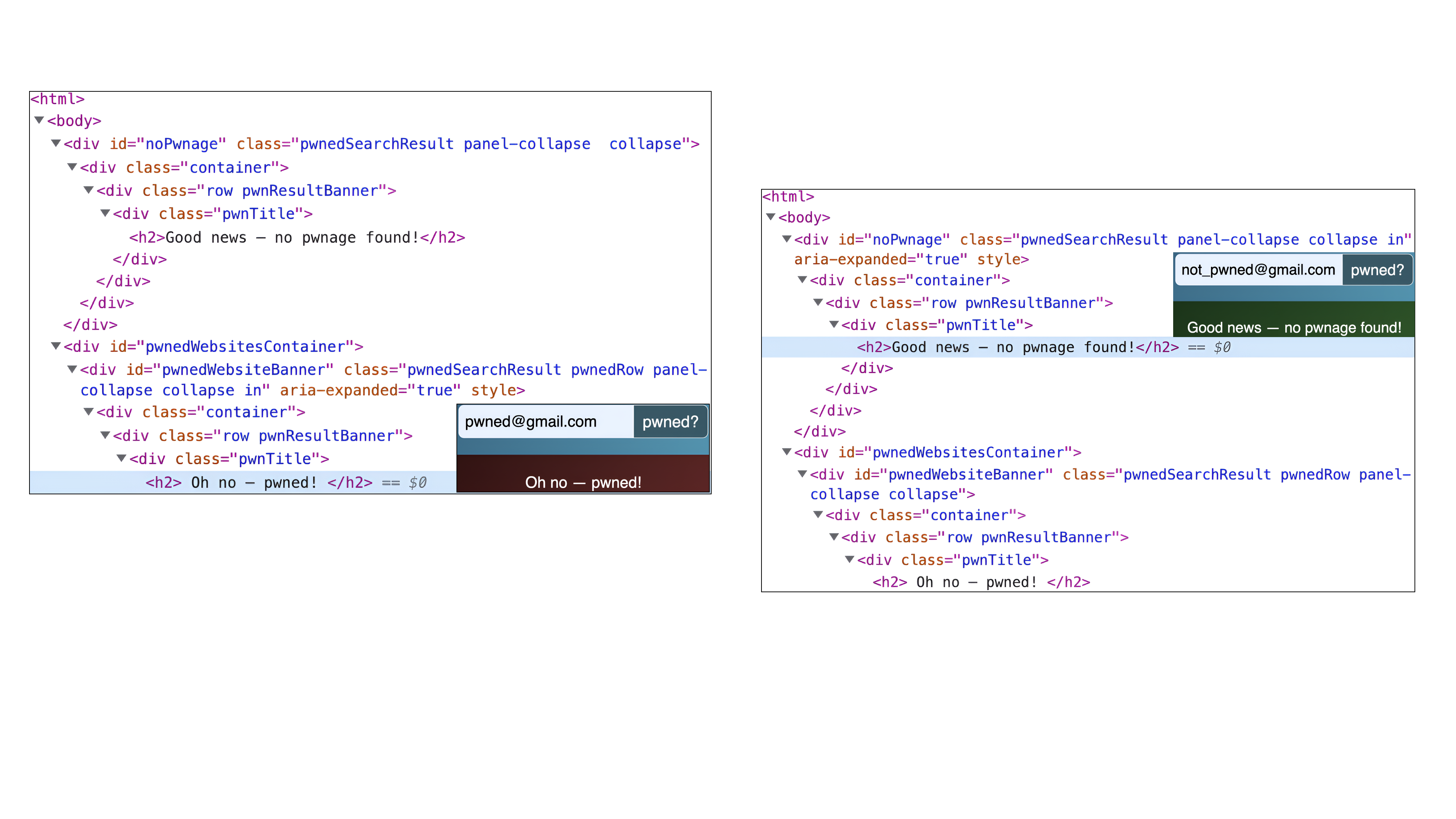}
\vspace{-5pt}
\caption{DOM when pwnage detected.}
\label{fig:intro:pwned}
\end{minipage}
\\[5pt] 
\lstset{%
  basicstyle=\ttfamily\footnotesize,
  keywordstyle=\bfseries,
  breaklines=true,
  showstringspaces=false,
  columns=flexible,
  numbers=left,
  numberstyle=\tiny\color{gray},
  commentstyle=\color{gray},
  framexleftmargin=0pt,
  xleftmargin=10pt,
  tabsize=2, 
  morekeywords={foreach,in}, 
  emph={}, 
}
\begin{minipage}{.98\linewidth}
\begin{lstlisting}
foreach email in list_of_emails: 
    EnterData( email, //input[@id='account'] )               # enter each email to search box
    Click( //button[@id='searchPwnage'] )                    # click search button         
    ScrapeText( //div[@aria-expanded='true']//h2 )           # scrape pwnage message 
\end{lstlisting}
\end{minipage}
\vspace{-10pt}
\caption{A web automation program that scrapes pwnage result for each email address from a list.}
\vspace{-10pt}
\label{fig:pwnage-program}
\end{figure}

\newpara{Synthesizing web automation programs.}
In general, implementing web automation programs requires creating both the desired control-flow structure (with arbitrarily nested loops potentially) and identifying generalizable selectors for all of its instructions; this is very hard and time-consuming. 
$\webrobot$~\cite{dong2022webrobot} is a state-of-the-art technique that allows non-experts to create web automation programs from a short trace $\actiontrace$ of user actions (e.g., clicking a button, scraping text). 
A key underpinning idea is its \emph{trace semantics}: given a program $\program$, trace semantics outputs a sequence $\actiontrace'$ of actions that $P$ executes, by resolving any free variables (such as the \emph{local} loop variable from Figure~\ref{fig:pwnage-program}).  
Then, one can check if $P$ satisfies $\actiontrace$ by checking if $\program$'s output trace $\actiontrace'$ matches $\actiontrace$. 

\vspace{5pt}
\newpara{Key challenge: synthesizing programs with local variables.}
While trace semantics significantly bridges the gap between programming-by-demonstration (PBD) and programming-by-example (PBE) by enabling a ``guess-and-check'' style synthesis approach for web automation~\cite{dong2022webrobot,chen2022semanticon,miwa,dilogics}, state-of-the-art synthesis algorithms unfortunately fail to scale to challenging web automation tasks due to the heavy use of local variables in those tasks.
Consider the enormous space of potential loop bodies to be searched. All these bodies use loop variables and therefore must be evaluated under an \emph{extended} context that \emph{also} binds such local variables to values. 
Conventional observational equivalence (OE) from the program synthesis literature~\cite{udupa2013transit,albarghouthi2013recursive} {fundamentally} cannot reduce this space: they track bindings for only input variables, but not for local variables. 
As a result, they cannot build equivalence classes for loop bodies, and hence fall back to enumeration. 
The only existing work (to our best knowledge) that pushed the boundary of OE is RESL~\cite{peleg2020oopsla}. 
Briefly, its idea is to ``infer'' bindings for local variables given a higher-order sketch, which enables applying OE over the space of lambda bodies. 
A fundamental problem, however, is the data dependency across iterations (for functions like fold). 
Its implication to synthesis is succinctly summarized by RESL as a ``chicken-and-egg'' problem: 
we need output values of programs in order to apply OE for more efficient synthesis, but we need the programs first in order to obtain their output values. 
This \emph{cyclic dependency} fundamentally limits all prior work (such as RESL, among others~\cite{feser2015synthesizing,smith2016mapreduce}) to sketch-based approaches with crafted binding-inference rules, which still resort to enumeration in many cases. 
The general problem of how to reduce the space of programs with local variables remains open~\cite{peleg2020oopsla}.

\vspace{5pt}
\newpara{Our idea: lifted interpretation.}
The same problem occurs in our domain: as we will show shortly, loops in web automation use local variables and exhibit data dependency across iterations. 
\revision{In this work, we propose a new algorithm that can apply OE-based reduction for \emph{any programs}, without requiring binding inference. 
We build upon the OE definition from RESL~\cite{peleg2020oopsla}: two programs belong to the same equivalence class if they share the same \emph{context} and yield the same output. Notably, ``context'' here is a binding context with \emph{all free variables} including both input and local variables. 
Our key insight can be summarized as follows.}

\mybox{We can compute an equivalence relation of programs based on OE under \emph{all reachable} contexts, by creating equivalence classes \emph{simultaneously while} evaluating \emph{all programs} from a given tree grammar (with respect to a given input). 
Furthermore, we can use (a generalized form of) finite tree automata to compactly store the equivalence relation.}
Here, a \emph{reachable context} is one that emerges during the execution of at least one program from the grammar, for a given input. 
Furthermore, programs {rooted at the same grammar symbol} share reachable contexts. 
For instance, the loop from Figure~\ref{fig:pwnage-program} introduces the same binding context for loop bodies that can be put inside it.
However, computing reachable contexts requires program evaluation which again requires reachable contexts --- this is the aforementioned ``chicken-and-egg'' problem described in RESL~\cite{peleg2020oopsla}. 
To break this cycle, our key insight is to \emph{simultaneously} evaluate \emph{all} programs \emph{top-down} from the grammar, \emph{during} which we construct equivalence classes of programs \emph{bottom-up} based on their outputs under their reachable contexts. 
\revision{This idea essentially \emph{lifts up} an interpreter from evaluating one single program at a time to \emph{simultaneously} evaluating \emph{all} programs from a grammar, with respect to a given input. 
This \emph{lifted interpretation} process allows us to systematically enumerate all reachable contexts, and hence build equivalence classes for all programs including those with local variables.}

\vspace{5pt}
\newpara{General idea, instantiation, and evaluation.}
\revision{In the rest of this paper, we first illustrate how our idea works in general in Section~\ref{sec:overview} using a small functional language. 
Then, in Section~\ref{sec:alg}, we present an instantiation of our approach in the domain of web automation.}
We implement this instantiation in a tool called $\tool$\footnote{$\tool$ is a specialist that can manage a lot of trees (i.e., programs), even if they have local variables.}. 
Our evaluation results show that $\tool$ can solve more challenging benchmarks using much less time, significantly advancing the state-of-the-art for web automation. 

\revision{
\vspace{5pt}
\newpara{Contributions.}
This paper makes the following contributions. 
\begin{itemize}[leftmargin=*]
\setlength\itemsep{1pt}
\item 
We propose a new synthesis algorithm, based on {lifted interpretation} and finite tree automata, that can reduce the search space of programs with local variables. 
\item 
We instantiate this approach in the domain of web automation and develop a new programming-by-demonstration algorithm. 
\item 
We implement our technique in a tool called $\tool$ and evaluate it on \finalized{131} benchmarks. 
Our results highlight that the idea of lifted interpretation yields a significantly faster synthesizer. 
\end{itemize}
}
\section{Preliminaries}\label{sec:prelim}

In this section, we review the standard concepts of observational equivalence (OE) and finite tree automata (FTAs) from the literature, focusing on their application to program synthesis.

\subsection{Synthesis using Observational Equivalence}\label{sec:prelim:oe}

Observational equivalence (OE) was originally proposed by \citet{hennessy1980observing} to define the semantics of concurrent programs, and has been used widely within the programming languages community. 
Intuitively, two terms are observationally equivalent whenever they are interchangeable in \emph{all observable contexts.} 
In the field of programming-by-example (PBE), OE has been utilized to reduce the search space of programs, typically in \emph{bottom-up} synthesis algorithms.  

\newpara{Bottom-up synthesis.}
Bottom-up algorithms synthesize programs by first constructing smaller programs which are later used as building blocks to create bigger ones. 
Specifically, the algorithm begins with an initial set $W$ containing all atomic programs of size 1 (e.g., input variables, constants), and then iteratively grows $W$ by adding new programs of larger sizes that are composed of those already in $W$.  
The algorithm terminates when $W$ has a program $P$ that meets the given specification. 
For instance, for a language that includes variable $x$ and integer constants $1$ and $2$, $W$ is initially $\{ x, 1, 2 \}$ but later will contain more terms such as $x+1$ and $x+2$, assuming $+$ operator is allowed by the language. If the specification is given as an input-output example pair $(1, 3)$ meaning ``return $3$ when $x = 1$'', then $x+2$ is a correct program whereas $x+1$ is not. 

\newpara{Observational equivalence reduction.}
Bottom-up synthesis often uses observational equivalence to reduce the program space in order to improve the search efficiency. 
The key idea is to \emph{not} add a new program $P$ to $W$, if there already exists some $P' \in W$ that behaves the same as $P$ \emph{observationally}. 
In particular, existing PBE work~\cite{albarghouthi2013recursive,udupa2013transit} defines two programs $P_1, P_2$ to be observationally equivalent if they yield the same output \emph{on each input example}. 
This idea keeps only programs that are observationally distinct (given input examples), thereby reducing the size of $W$ and accelerating the search. 
\revision{RESL~\cite{peleg2020oopsla} further generalizes OE to consider an \emph{extended} context that also includes local variables: two programs are observationally equivalent if they yield the same output given a shared context (which may include local variables). However, conventional bottom-up synthesis algorithms no longer work under this OE definition, as they cannot evaluate programs with free local variables before knowing their binding context.}

\subsection{Synthesis using Finite Tree Automata}\label{sec:prelim:fta}

OE essentially defines an equivalence relation of programs, which can be stored using finite tree automata (FTAs)~\cite{wang2017program}.

\newpara{Finite tree automata.}
Finite tree automata (FTAs)~\cite{comon2008tree} deal with tree-structured data: they generalize standard finite (word) automata by accepting trees rather than words/strings. 

\begin{definition}[Finite Tree Automata]\label{def:FTAs}
A (bottom-up) finite tree automaton (FTA) over alphabet $\ftaalphabet$ is a tuple $\fta = (\ftastates, \ftaalphabet, \ftafinalstates, \ftatransitions)$, where $\ftastates$ is a set of states, $\ftafinalstates \subseteq \ftastates$ is a set of final states, and $\transitions$ is a set of transitions of the form $\transitionoperator(\ftastate_1, \mydots, \ftastate_n) \transitionarrow \ftastate$ where $\ftastate_1, \mydots, \ftastate_n, \ftastate \in \ftastates$ and $\transitionoperator \in \ftaalphabet$. 
A term $t$ is \emph{accepted} by $\fta$ if $t$ can be rewritten to a final state according to the transitions (i.e., rewrite rules).
The language of $\fta$, denoted $\ftalang(\fta)$, is the set of terms accepted by $\fta$. 
\end{definition}

\vspace{-3pt}
\newpara{Notations.}
We  also use $\fta = (\ftafinalstates, \transitions)$ as a simpler notation, since  $\ftastates$ and $\ftaalphabet$ can be  determined by $\transitions$. 
We use $\subfta(\ftastate, \fta)$ to mean the sub-FTA of $\fta$ that is rooted at state $\ftastate$.

\newpara{Program synthesis using FTAs.}
Given a tree grammar $\cfg$ defining the syntax of a language and given an input-output example $\mathcal{E} = (\mathcal{E}_{\emph{in}}, \mathcal{E}_{\emph{out}})$, we can construct an FTA $\fta = (\ftastates, \ftaalphabet, \ftafinalstates, \ftatransitions)$, such that $\ftalang(\fta)$ contains all programs from $\cfg$ (up to a finite size) that satisfy $\mathcal{E}$. 
In particular, the alphabet $\ftaalphabet$ consists of all operators from $\cfg$. 
We have a state $\ftastate^{c}_{s} \in \ftastates$ if there exists a program rooted at symbol $s$ from $\cfg$ that outputs $c$ given input $\mathcal{E}_{\emph{in}}$. 
We have a transition $\ftatransitionoperator(\ftastate^{c_1}_{s_1}, \mydots, \ftastate^{c_n}_{s_n}) \ftatransitionarrow \ftastate^{c_0}_{s_0} \in \ftatransitions$ if applying function $\ftatransitionoperator$ on $c_1, \mydots, c_n$ yields $c_0$. 
A state $\ftastate^{c}_{s}$ is marked final if $c = \mathcal{E}_{\emph{out}}$ and $s$ is a start symbol of $\cfg$. 
Once $\fta$ is constructed, one can extract a program $P$ from $\ftalang(\fta)$ heuristically (e.g., smallest in size) and return $P$ as the final synthesized program~\cite{wang2017program}.

\newpara{Remarks.}
Every state $\ftastate^{c}_{s} \in \ftastates$ represents an equivalence class of all programs rooted at grammar symbol $s$ that produce the same value $c$ on $\mathcal{E}_{\emph{in}}$. 
In other words, $\ftastate^{c}_{s}$ stores all \emph{observationally equivalent} programs. 
\revision{To our best knowledge, all existing FTA-based synthesis techniques~\cite{wang2017program,wang2017synthesis,wang2018relational,yaghmazadeh2018automated,miltner2022bottom} are based on the notion of OE that considers only input variables. In other words, all existing techniques resort to enumeration of programs with free local variables~\cite{peleg2020oopsla}.}

\revision{\section{Lifted Interpretation}\label{sec:overview}

This section illustrates how the general idea of lifted interpretation works on a simple functional language. 
Section~\ref{sec:alg} will later describe a full-fledged instantiation to the domain of web automation.

\subsection{A Simple Programming-by-Example Task}\label{sec:overview:task}

\begin{example}
Given the simple functional language from Figure~\ref{fig:overview:syntax}, let us consider the following programming-by-example (PBE) task: synthesize a program that returns $7$ given input list $[1,2,4]$. 
Suppose the intended program is: 
\[
P_1: 
\foldop \big( x, (acc, elem) \Rightarrow \addop( acc, elem ) \big)
\]
which calculates the sum of all elements from the input list $x$. 
Consider another program: 
\[
P_2: 
\foldop\big( x, (acc, elem) \Rightarrow \addop( \mulop( acc,  2 ) , 1 ) \big) 
\]
which returns the same output $7$ as $P_1$, given the example input $[1, 2, 4]$. 
Notably, while the lambda bodies in $P_1$ and $P_2$ are different, they share the same context-output behaviors (or \emph{footprint}), for the given input list. 
Please see Table~\ref{tab:overview:footprints} which shows their local variable bindings and corresponding output values across all iterations. 
In what follows, we will illustrate how to synthesize $P_1$ and $P_2$ from the input-output example $[1, 2, 4] \mapsto 7$, using our lifted interpretation idea. 
\label{ex:overview:example}
\end{example}

\begin{figure}[!t]
\centering
\begin{minipage}{.99\linewidth}
\centering
\[
\arraycolsep=3pt
\def\arraystretch{1.1}
\begin{array}{rll}
 
P & 
\grammareq & 
\foldop \big( L, (acc, elem) \Rightarrow E \big) \\ 

E & 
\grammareq & 
acc \ | \ 
elem \ | \ 
1 \ | \ 
2 \ | \ 
\addop(E, E) \ | \ 
\mulop(E, E) \\ 

L & 
\grammareq & 
x \\ 

\end{array}
\]
\end{minipage}
\vspace{-5pt}
\caption{A simple functional language. Here, $x$ is the \emph{input variable}, which is a list of integers. We simplify the standard $\foldop$ operator to use a default seed of $0$ (which is implicit and not shown as an argument). Note that $\foldop$ introduces two \emph{local} variables: $acc$ is the accumulator, and $elem$ will be bound to each element from $L$. $E$ is the lambda body, which may use local variables $acc$ and $elem$. The ``$\addop$'' and ``$\mulop$'' operators are the standard addition and multiplication.}
\label{fig:overview:syntax}
\end{figure}

\begin{table}[!h]
\small 
\centering
\caption{Footprints of lambda bodies from $P_1$ and $P_2$ respectively, across all iterations.}
\label{tab:overview:footprints}
\vspace{-5pt}
\begin{tabular}{ r | c | c | c }
\toprule 
& local variable bindings & $P_1$: $\addop( acc, elem )$ & $P_2$: $\addop( \mulop( acc,  2 ) , 1 )$
\\ \midrule
\textnormal{iteration 1} & $acc \mapsto 0, \ \ elem \mapsto 1$ & 1 & 1 
\\ 
\textnormal{iteration 2} & $acc \mapsto 1, \ \ elem \mapsto 2$ & 3 & 3 
\\ 
\textnormal{iteration 3} & $acc \mapsto 3, \ \ elem \mapsto 4$ & 7 & 7 
\\
\bottomrule
\end{tabular}
\end{table}

\subsection{FTAs based on Observational Equivalence}\label{sec:overview:fta}

Let us first present a new FTA-based data structure that our lifted interpretation approach utilizes to succinctly encode equivalence classes of programs. 
The main ingredient is its generalization of the FTA state definition from prior work~\cite{wang2017program}: our state includes a \emph{context} $\inputcontext$ which contains information (e.g., all variable bindings) to evaluate programs with free local variables. 
In particular, we define an FTA state $\ftastate$ as a pair: 
\[
( 
\grammarsymbol, 
\footprint 
)
\ \ 
\emph{where} 
\ \ 
\footprint = 
\{ \ 
\inputcontext_1 \inputcontexttooutput \outputval_1, \mydots, \inputcontext_l \inputcontexttooutput \outputval_l 
\ \} 
\]
Here, $\grammarsymbol$ is a grammar symbol, and $\footprint$ is a \emph{footprint} which maps a context $\inputcontext_i$ to an output $\outputval_i$. 
Each entry $\inputcontext_i \mapsto \outputval_i$ is called a \emph{behavior}; so a footprint is a set of behaviors. 
Intuitively, if there exists a program $\program$ rooted at $\grammarsymbol$ that evaluates to $\outputval_i$ under $\inputcontext_i$ for all $i \in [1, l]$, then our FTA $\fta$ has a state $( \grammarsymbol, \{  \inputcontext_1 \inputcontexttooutput \outputval_1, \mydots, \inputcontext_l \inputcontexttooutput \outputval_l \} )$, and vice versa. 
A context is \emph{reachable} if it can actually emerge, when executing programs in a given grammar for a given input. 
Given a finite grammar, if all programs terminate, then the number of reachable contexts is finite. 
Our work assumes a finite number of reachable contexts, which we believe is a reasonable assumption for program synthesis.


The remaining definitions are relatively standard. 
Our alphabet $\ftaalphabet$ includes all operators from the programming language. 
A transition $\ftatransition \in \ftatransitions$ is of the form $\ftatransitionoperator(\ftastate_1, \mydots, \ftastate_n) \ftatransitionarrow \ftastate$ which connects multiple states to one state. 
However, because our state definition is more general, the condition under which to include a transition now becomes different from prior work~\cite{wang2017program}.
Specifically, $\fta$ includes a transition $\ftatransition = \ftatransitionoperator(\ftastate_1, \mydots, \ftastate_n) \ftatransitionarrow \ftastate$, if for every behavior $\inputcontext \inputcontexttooutput \outputval$ in $\ftastate$, we have behavior $\inputcontext_{k} \inputcontexttooutput \outputval_{k}$ in $\ftastate_k$ (for all $k \in [1, n]$), such that according to $\ftatransitionoperator$'s semantics and given $\inputcontext_1 \inputcontexttooutput \outputval_1, \mydots, \inputcontext_n \inputcontexttooutput \outputval_n$, evaluating $\ftatransitionoperator$ under context $\inputcontext$ indeed yields output $\outputval$.

\subsection{Illustrating Lifted Interpretation}\label{sec:overview:explanation}

Now we are ready to explain how our lifted interpretation idea works for Example~\ref{ex:overview:example}. 

\newpara{Setup.}
First, we build an FTA $\fta_s = ( \{ \ftastate_1 \}, \ftatransitions_s)$ --- see  Figure~\ref{fig:overview:grammar-fta} --- for the grammar in Figure~\ref{fig:overview:syntax}. 
We will later apply lifted interpretation to $\fta_s$. 
Each state in $\fta_s$ is annotated with a grammar symbol and an \emph{empty} footprint. 
Notice the cyclic transitions around $\ftastate_3$, due to the recursive $\addop$ and $\mulop$ productions. This induces an infinite space of programs --- our approach finitizes the grammar by bounding the size of its programs, as standard in the literature~\cite{wang2017program}.

\newpara{Applying lifted interpretation.}
Then, we use lifted interpretation to ``evaluate'' $\fta_s$ under an initial context, which would eventually produce another FTA $\fta_e = ( \{ \ftastate_{11} \}, \ftatransitions_e)$ (see a part of it in Figure~\ref{fig:overview:final-fta}). 
Different from $\fta_s$, $\fta_e$ will cluster programs (including all sub-programs) into equivalence classes based on OE. 
Figure~\ref{fig:overview:final-fta-annotations} shows the annotations (i.e., grammar symbols and footprints) for $\fta_e$.

At a high-level, lifted interpretation traverses $\fta_s$ systematically, computes reachable contexts on-the-fly during traversal, and most importantly, constructs the equivalence classes simultaneously given these reachable contexts. 
In particular, given an initial context $\inputcontext$ and an FTA $\fta_s = (\ftastates_f, \ftatransitions)$ with final states $\ftastates_f$ and transitions $\ftatransitions$, lifted interpretation returns an FTA $\fta_e = (\ftastates'_f, \ftatransitions')$ with final states $\ftastates'_f$ and transitions $\ftatransitions'$, such that if two programs from $\ftalang(\fta_s)$ yield the same output under $\inputcontext$, then they belong to the same (final) state in $\fta_e$. 
More specifically, we have: 
\[
\begin{array}{rcc}
\ftastates'_f 
& = &
\big\{ 
\ 
\ftastate'_i
\ \ \big| \ \
\ftastate_i \in \ftastates_f, \ \ 
\evalftastate{
\inputcontext
}{
\ftastate_i; \ftatransitions
}{
\ftastate'_i, \ftatransitions'_i
}
\ 
\big\}
\\[5pt]
\ftatransitions' 
& = & 
\bigcup_{\ftastate_i \in \ftastates_f} 
{
\ftatransitions'_i
}
\ \ \ 
\textnormal{ where }
\evalftastate{
\inputcontext
}{
\ftastate_i; \ftatransitions
}{
\ftastate'_i, \ftatransitions'_i
}
\end{array}
\]
That is, if any final state $\ftastate_i$ in $\fta_s$ ``evaluates to'' a state $\ftastate'_i$ under $\inputcontext$, then $\ftastate'_i$ is a final state of $\fta_e$. 
This process also yields a set $\ftatransitions'_i$ of transitions for each $\ftastate_i$, which are added as transitions to $\fta_e$.

\newpara{Key judgment.}
The key judgment that drives the lifted interpretation process is of the following form. Note that this judgment is non-deterministic; that is, a state may evaluate to multiple states. 
\[
\begin{array}{c}
\evalftastate{
\inputcontext
}{
\ftastate; \ftatransitions
}{
\ftastate', \ftatransitions'
}
\end{array}
\]
It reads as follows: given context $\inputcontext$, state $\ftastate$ (with respect to transitions $\ftatransitions$) evaluates to state $\ftastate'$ with transitions $\ftatransitions'$. 
The guarantee is that: for all programs $P$ from $\ftalang(\{ \ftastate \}, \ftatransitions)$ that yield the same output under context $\inputcontext$, we have \emph{one unique} state $\ftastate'$ (and transitions $\ftatransitions'$) such that these programs $P$ are in $\ftalang(\{ \ftastate' \}, \ftatransitions')$. 
For instance, given $\inputcontext = \{ x \mapsto [ 1, 2, 4] \}$, $\ftastate_1$ from $\fta_s$ may evaluate to $\ftastate_{11}$ in $\fta_e$: all programs in $\ftalang( \{ \ftastate_1 \}, \ftatransitions_s )$ that yield $7$ given $\inputcontext$ are merged into state $\ftastate_{11}$.

\begin{figure}[!t]
\centering
\begin{minipage}[b]{.53\linewidth}
\centering
\includegraphics[height=2.3cm]{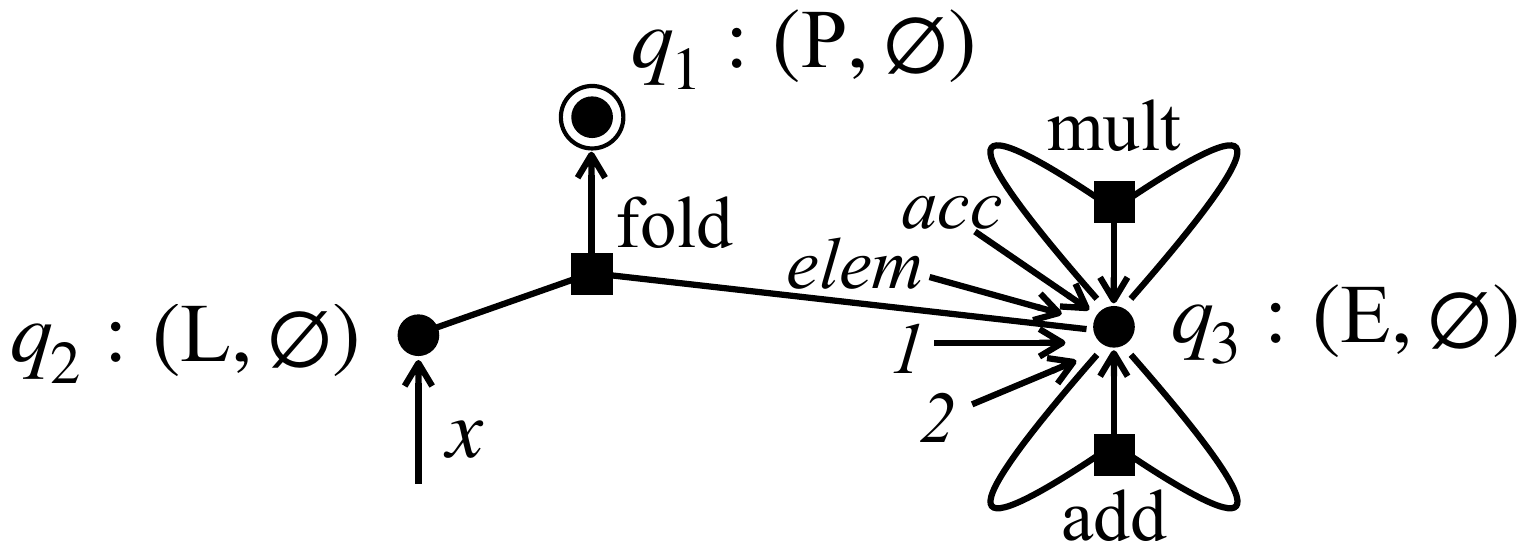}
\vspace{5pt}
\caption{FTA $\fta_s$ constructed for grammar from Figure~\ref{fig:overview:syntax}. Each FTA state is annotated with a grammar symbol and a footprint (which maps a reachable context to a value).}
\label{fig:overview:grammar-fta}
\end{minipage}
\ \ \ \ \ \ \ \ \ 
\begin{minipage}[b]{.4\linewidth}
\centering
\includegraphics[height=3.4cm]{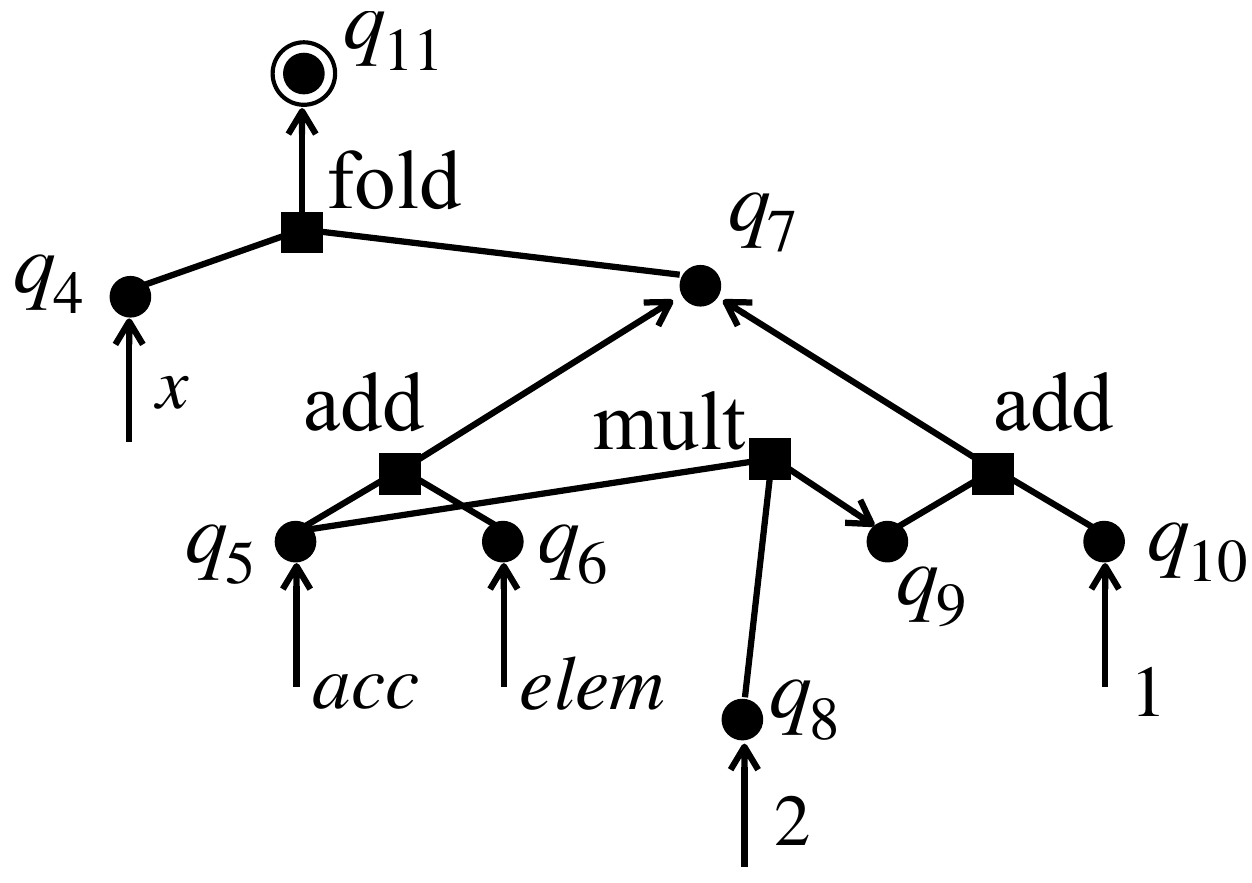}
\vspace{-10pt}
\caption{A part of $\fta_e$ after applying lifted interpretation on $\fta_s$ from Figure~\ref{fig:overview:grammar-fta}. Annotations on FTA states can be found in Figure~\ref{fig:overview:final-fta-annotations}.}
\label{fig:overview:final-fta}
\end{minipage}
\\
\begin{minipage}{.99\linewidth}
\footnotesize 
\centering
\[
\arraycolsep=2pt\def\arraystretch{1}
\begin{array}{ll}

\ftastate_4 :  
\big( \ \ \exlistsymbol, \left\{ 
\begin{array}{l}
\{ x \mapsto [1, 2, 4] \} \mapsto [1, 2, 4]
\end{array}
\right\}  \ \ \big)

\\ \\ 

\ftastate_5 : 
( \exEsymbol, \left\{ 
\begin{array}{l}
\left\{ 
\begin{array}{l}
x \mapsto [1, 2, 4] \\ 
acc \mapsto 0 \\ 
elem \mapsto 1
\end{array}
\right\} 
\mapsto 0 
\\[15pt]
\left\{ 
\begin{array}{l}
x \mapsto [1, 2, 4] \\ 
acc \mapsto 1 \\ 
elem \mapsto 2
\end{array}
\right\} 
\mapsto 1 
\\[15pt]
\left\{ 
\begin{array}{l}
x \mapsto [1, 2, 4] \\ 
acc \mapsto 3 \\ 
elem \mapsto 4
\end{array}
\right\} 
\mapsto 3 
\end{array}
\right\})
\ \  
\ftastate_6 : 
( \exEsymbol, \left\{ 
\begin{array}{l}
\left\{ 
\begin{array}{l}
x \mapsto [1, 2, 4] \\ 
acc \mapsto 0 \\ 
elem \mapsto 1
\end{array}
\right\} 
\mapsto 1 
\\[15pt]
\left\{ 
\begin{array}{l}
x \mapsto [1, 2, 4] \\ 
acc \mapsto 1 \\ 
elem \mapsto 2
\end{array}
\right\} 
\mapsto 2 
\\[15pt]
\left\{ 
\begin{array}{l}
x \mapsto [1, 2, 4] \\ 
acc \mapsto 3 \\ 
elem \mapsto 4
\end{array}
\right\} 
\mapsto 4 
\end{array}
\right\})
\ \ 
\ftastate_7 : 
( \exEsymbol, \left\{ 
\begin{array}{l}
\left\{ 
\begin{array}{l}
x \mapsto [1, 2, 4] \\ 
acc \mapsto 0 \\ 
elem \mapsto 1
\end{array}
\right\} 
\mapsto 1 
\\[15pt]
\left\{ 
\begin{array}{l}
x \mapsto [1, 2, 4] \\ 
acc \mapsto 1 \\ 
elem \mapsto 2
\end{array}
\right\} 
\mapsto 3 
\\[15pt]
\left\{ 
\begin{array}{l}
x \mapsto [1, 2, 4] \\ 
acc \mapsto 3 \\ 
elem \mapsto 4
\end{array}
\right\} 
\mapsto 7 
\end{array}
\right\})

\\ \\ 

\ftastate_8 : 
( \exEsymbol, \left\{ 
\begin{array}{l}
\left\{ 
\begin{array}{l}
x \mapsto [1, 2, 4] \\ 
acc \mapsto 0 \\ 
elem \mapsto 1
\end{array}
\right\} 
\mapsto 2 
\\[15pt]
\left\{ 
\begin{array}{l}
x \mapsto [1, 2, 4] \\ 
acc \mapsto 1 \\ 
elem \mapsto 2
\end{array}
\right\} 
\mapsto 2 
\\[15pt]
\left\{ 
\begin{array}{l}
x \mapsto [1, 2, 4] \\ 
acc \mapsto 3 \\ 
elem \mapsto 4
\end{array}
\right\} 
\mapsto 2  
\end{array}
\right\})
\ \  
\ftastate_9 : 
( \exEsymbol, \left\{ 
\begin{array}{l}
\left\{ 
\begin{array}{l}
x \mapsto [1, 2, 4] \\ 
acc \mapsto 0 \\ 
elem \mapsto 1
\end{array}
\right\} 
\mapsto 0 
\\[15pt]
\left\{ 
\begin{array}{l}
x \mapsto [1, 2, 4] \\ 
acc \mapsto 1 \\ 
elem \mapsto 2
\end{array}
\right\} 
\mapsto 2 
\\[15pt]
\left\{ 
\begin{array}{l}
x \mapsto [1, 2, 4] \\ 
acc \mapsto 3 \\ 
elem \mapsto 4
\end{array}
\right\} 
\mapsto 6 
\end{array}
\right\})
\ \ 
\ftastate_{10} : 
( \exEsymbol, \left\{ 
\begin{array}{l}
\left\{ 
\begin{array}{l}
x \mapsto [1, 2, 4] \\ 
acc \mapsto 0 \\ 
elem \mapsto 1
\end{array}
\right\} 
\mapsto 1 
\\[15pt]
\left\{ 
\begin{array}{l}
x \mapsto [1, 2, 4] \\ 
acc \mapsto 1 \\ 
elem \mapsto 2
\end{array}
\right\} 
\mapsto 1 
\\[15pt]
\left\{ 
\begin{array}{l}
x \mapsto [1, 2, 4] \\ 
acc \mapsto 3 \\ 
elem \mapsto 4
\end{array}
\right\} 
\mapsto 1 
\end{array}
\right\})

\\ \\

\ftastate_{11} :  
\big( \ \  \exprogsymbol, \left\{ 
\begin{array}{l}
\{ x \mapsto [1, 2, 4] \} \mapsto 7
\end{array}
\right\}  \ \ \big)

\end{array}
\]
\vspace{-5pt}
\caption{Annotations (i.e., grammar symbols and footprints) for all states in $\fta_e$ from Figure~\ref{fig:overview:final-fta}.}
\label{fig:overview:final-fta-annotations}
\end{minipage}
\end{figure}

\vspace{3pt}
\newpara{Inference rules.}
The key inference rule that implements this judgment is the \textsc{Transition} rule: 
\[\small 
\centering
\begin{array}{c}
\textsc{(Transition)} 
\ \ 
\irule{
\begin{array}{c}
\ftatransition = \ftatransitionoperator(\ftastate_1, \mydots, \ftastate_n) \ftatransitionarrow \ftastate  \in \ftatransitions \ \ \  \ \ \ 
\evalftatransition{\inputcontext}{\ftatransition; \ftatransitions}{\ftastate', \ftatransitions'}
\end{array}
}{
\evalftastate{\inputcontext}{\ftastate; \ftatransitions}{\ftastate', \ftatransitions'}
}
\end{array}
\]
It says: evaluating a state $\ftastate$ boils down to evaluating each of $\ftastate$'s incoming transitions $\ftatransition$. For instance, to evaluate $\ftastate_1$ in $\fta_s$ (see Figure~\ref{fig:overview:grammar-fta}) under $\inputcontext = \{ x \mapsto [1, 2, 4] \}$, we would evaluate $\foldop(\ftastate_2, \ftastate_3) \ftatransitionarrow \ftastate_1$ under $\inputcontext$. The following inference rules describe how to actually evaluate a $\foldop$ transition. 
\[\small 
\arraycolsep=3pt\def\arraystretch{1.2}
\centering
\begin{array}{lc}
\textsc{(Fold-1)}  & 
\irule{
\begin{array}{c}
\evalftatransition{\inputcontext}{\ftastate_1; \ftatransitions}{\ftastate'_1, \ftatransitions'_1} \ \ \ \ 
\inputcontext \inputtooutput lst \in \getannot(\ftastate'_1) \ \ \ \ 
\evalftatransition{\inputcontext, lst}{\ftastate_2; \ftatransitions}{\ftastate'_2, \ftatransitions'_2} \\ 
v_0 = 0 \ \ \ \ 
\inputcontext \big[ acc \mapsto v_{i-1}, elem \mapsto lst[i] \big] \inputtooutput v_i \in \getannot(\ftastate'_2) \ \ \ \ 
i \in [ 1, |lst| ] \\ 
\annot = \getannot(\ftastate) \ \ \ \ 
\annot' = \annot \cup \{ \inputcontext \inputtooutput v_{|lst|} \} \ \ \ \ 
\ftastate' = \newstate(\exprogsymbol, \annot')
\end{array}
}{
\evalftatransition{\inputcontext}{\foldop(\ftastate_1, \ftastate_2) \ftatransitionarrow \ftastate; \ftatransitions}{\ftastate', \ftatransitions'_1 \cup \ftatransitions'_2 \cup \{ \foldop(\ftastate'_1, \ftastate'_2) \ftatransitionarrow \ftastate' \} }
}
\\  \\ 
\textsc{(Fold-2)}  & 
\irule{
\begin{array}{c}
v_0 = 0 \ \ \ \ 
\ftastate'_0 = \ftastate  \ \ \ \ 
\ftatransitions'_0 = \ftatransitions \ \ \ \ 
i \in [1, |lst| ] \ \ \ \  
\inputcontext_{i-1} = \inputcontext \big[ acc \mapsto v_{i-1}, elem \mapsto lst[i] \big] \\ 
\evalftastate{\inputcontext_{i-1}}{\ftastate'_{i-1}; \ftatransitions'_{i-1}}{\ftastate'_i, \ftatransitions'_i} \ \ \ \ 
\inputcontext_{i-1} \inputtooutput v_i \in \getannot(\ftastate'_i)
\end{array}
}{
\evalftatransition{\inputcontext, lst}{\ftastate; \ftatransitions}{ 
\ftastate'_{|lst|}, \ftatransitions'_{|lst|} }
}
\end{array}
\]
Let us take $\foldop(\ftastate_2, \ftastate_3) \ftatransitionarrow \ftastate_1$ from $\fta_s$ as an example and explain how these two rules work.

The \textsc{Fold-1} rule first evaluates $\ftastate_2$ from $\fta_s$ to $\ftastate_4$ in $\fta_e$, which, as mentioned above, boils down to evaluating $\ftastate_2$'s incoming transition $\inputvar \ftatransitionarrow \ftastate_2$. This is done using the following \textsc{Input-Var} rule.
\[\small 
\arraycolsep=3pt\def\arraystretch{1.2}
\centering
\begin{array}{lc}
\textsc{(Input-Var)}  & 
\irule{
\begin{array}{c}
\inputcontext[x] = lst \ \ \ \ 
\footprint = \getannot(\ftastate)  \ \ \ \ 
\footprint' = \footprint \cup \{ \inputcontext \mapsto lst \} \ \ \ \ 
\ftastate' = \newstate(\exlistsymbol, \footprint')
\end{array}
}{
\evalftastate{\inputcontext}{x \ftatransitionarrow \ftastate; \ftatransitions}{\ftastate', \{ x \ftatransitionarrow \ftastate' \} }
}
\end{array}
\]
Specifically, \textsc{Input-Var} first obtains the value $lst$, which is $[1, 2, 4]$, that $x$ binds to. Then it creates a footprint $\footprint'$ that includes all behaviors from $\ftastate$ and a new binding $\inputcontext\mapsto lst$. Finally, a new state $\ftastate'$ is created, with grammar symbol $\exlistsymbol$ and footprint $\footprint'$. Here, $\newstate$ is simply a state constructor. 
In our example, \textsc{Input-Var} yields $\ftastate_4$, which has footprint $\{ \{ x \mapsto [1, 2, 4] \} \mapsto [1, 2, 4] \}$: it means  all programs in $\ftalang( \{ \ftastate_4\}, \ftatransitions_e )$ produce $[1,2,4]$ given context $\inputcontext = \{ x \mapsto [1, 2, 4] \}$. 

Popping up to \textsc{Fold-1}, given $\ftastate_4$, it retrieves the output $lst$ for $\ftastate_4$ given context $\inputcontext$, notably, by looking up $\ftastate_4$'s footprint. 
It then uses an auxiliary rule, \textsc{Fold-2}, to evaluate $\ftastate_3$ from $\fta_s$ --- which corresponds to lambda bodies --- given $\inputcontext$ and $lst$. This yields $\ftastate_7$ in $\fta_e$, among potentially other states which are not shown in Figure~\ref{fig:overview:final-fta}. 
Again, the guarantee is: all programs in $\ftalang( \{ \ftastate_7 \}, \ftatransitions_e)$ share the same footprint. 
Now given $\ftastate_7$, \textsc{Fold-1} finally creates $\ftastate_{11}$ in $\fta_e$, as well as transition $\foldop(\ftastate_4, \ftastate_7) \ftatransitionarrow \ftastate_{11}$. The key is to compute $\ftastate_{11}$'s footprint $\footprint'$, which inherits everything from $\ftastate_1$'s footprint $\footprint$ but also includes additionally the behavior $\inputcontext \mapsto v_{|lst|} $. Here, $v_{|lst|}$ (which is $v_3$ in our example, as $|lst| = 3$) is the output for the fold operation, under context $\inputcontext=\{ x \mapsto [1, 2, 4] \}$ which is the input example we are concerned with. We note that the computation of $v_{|lst|}$ is based on looking up $\ftastate_7$'s footprint. 

Now let us briefly explain how the \textsc{Fold-2} rule evaluates $\ftastate_3$ to $\ftastate_7$. 
The evaluation is an iterative process that follows the fold semantics. 
It begins with context $\inputcontext_0$ that binds the accumulator $acc$ to the default seed $0$ and binds $elem$ to the first element of $lst$, then recursively evaluates $\ftastate'_0$ (i.e., $\ftastate_3$ in our example) under $\inputcontext_0$ which yields $\ftastate'_1$ (not shown in Figure~\ref{fig:overview:final-fta}), and finally obtains $v_{1}$ (which $acc$ should be bound to in the next iteration) by (again) looking up the footprint of $\ftastate'_1$. 
Note that $\ftastate'_1$ is an intermediate state whose footprint has one behavior, since we have only seen $\inputcontext_0$ so far. 
The second iteration will repeat the same process but for $\ftastate'_1$ and using $\inputcontext_1$, which would yield $\ftastate'_2$ whose footprint has two behaviors. 
This process continues until we reach the end of $lst$, eventually yielding $\ftastate'_{|lst|}$; $\ftastate_7$ in $\fta_e$ is one such state. 
As shown in Figure~\ref{fig:overview:final-fta-annotations}, $\ftastate_7$ has three behaviors in its footprint.

We skip the discussion of the other rules which are used to construct all the other states in $\fta_e$, and refer readers to the appendix 
of our paper for a complete list of rules.
In the end, given $\fta_e$, we will mark states whose footprint satisfies the specification as final states, and extract a program from $\fta_e$. In our example, $\ftastate_{11}$ is final, and both $P_1$ and $P_2$ are in $\ftalang( \{ \ftastate_{11} \}, \ftatransitions_e)$.

}

\revision{\section{Instantiation to Web Automation}\label{sec:alg}

This section presents a full-fledged instantiation of the lifted interpretation idea to web automation. 

\subsection{Web Automation Language}\label{sec:alg:dsl}

Figure~\ref{fig:alg:dsl} presents our web automation language. 
The syntax is slightly different from the one in $\webrobot$~\cite{dong2022webrobot}. 
First, our syntax looks more ``functional'': for example, loop bodies are presented as lambdas. This is solely for the purpose of making it easier to later present our approach. 
Second, the language is also slightly more expressive: $\forselectorloop$ allows starting from the $i$-th child/{descendant} with $i \geq 1$, whereas $\webrobot$'s syntax requires $i=1$. 
This extension is motivated by our observation when curating new benchmarks: many tasks require this more relaxed form of loop. 
We refer interested readers to the $\webrobot$ work for more details about the syntax, but in brief, a web automation program $\program$ is always a sequence of statements. 
It supports different types of statements. 
For example, $\clickexpr$ clicks a DOM element located by selector expression $\selectorexpr$. 
We use an XPath-like syntax for selector expressions: $\selectorexpr\slash\predicate[i]$ gives the $i$-th child of $\selectorexpr$ that satisfies $\predicate$, whereas $\sslash$ considers $\selectorexpr$'s descendants. $\inputvar$ is an input variable that is bound to a user-provided data source (like the list of emails from Example~\ref{ex:intro:example}), whereas $\selectorvar$ and $\datavar$ are \emph{local} variables introduced by and internal to the program. 
$\fordataloop$ is a loopy statement that iterates over a list of data entries (such as emails from Example~\ref{ex:intro:example}) given by $\dataexpr$, binds $\datavar$ to each of them, and executes loop body $\program$. 
$\forselectorloop$ is quite similar, but it loops over a list of selector expressions returned by $\selectorexpr$. 
$\whileloop$ handles pagination, where it repeatedly clicks the ``next page'' button located by $\selectorexpr$ and executes loop body $\program$, until $\selectorexpr$ no longer exists on the webpage.

\begin{figure}
\centering
\begin{minipage}[b]{.46\linewidth}
\small 
\centering
\[
\arraycolsep=3pt
\def\arraystretch{1.1}
\begin{array}{rlll}

\emph{Program} & 
\program & 
\grammareq & 
\seqprog( \statement, \program ) \ | \ 
\skipprog \\ 

\emph{Statement} & 
\statement & 
\grammareq & 
\clickexpr( \selectorexpr ) \\  
& & \ \ \ | & 
\scrapetextexpression( \selectorexpr ) \\ 
& & \ \ \ | & 
\scrapelinkexpression( \selectorexpr ) \\  
& & \ \ \ | & 
\downloadstatement( \selectorexpr )  \\ 
& & \ \ \ | & 
\gobackexpression \ | \ 
\extracturlexpression  \\ 
& & \ \ \ | & 
\sendkeysexpression( \constantstring, \selectorexpr )  \\ 
& & \ \ \ | & 
\enterdataexpression( \dataexpr, \selectorexpr)  \\ 
& & \ \ \ | & 
\fordataloop( \dataexpr, \lambda \datavar. \program )  \\
& & \ \ \ | & 
\forselectorloop( \selectorexpr\slash\predicate[i], \lambda \selectorvar. \program ) \\ 
& & \ \ \ | & 
\forselectorloop( \selectorexpr\sslash\predicate[i], \lambda \selectorvar. \program ) \\ 
& & \ \ \ | & 
\whileloop( \emph{true}, \program, \selectorexpr ) \\

\emph{Selector Expr} & 
\selectorexpr & 
\grammareq & 
\epsilon \ | \ 
\selectorvar \ | \ 
\selectorexpr\slash\predicate[i] \ | \ 
\selectorexpr\sslash\predicate[i]  \\

\emph{Data Expr} & 
\dataexpr & 
\grammareq & 
\inputvar \ | \ 
\datavar \ | \ 
\dataexpr[\datakey] \ | \ 
\dataexpr[i] \\

\emph{Predicate} & 
\predicate & 
\grammareq & 
\domtag \ | \ 
\domtag[@\domattribute=\constantstring] \\

\end{array}
\]
\end{minipage}
\ \ 
\begin{minipage}[b]{.52\linewidth}
\[
\arraycolsep=3pt
\def\arraystretch{1.05}
\small 
\begin{array}{lc}
(\textsc{Seq}) &  
\irule{
\begin{array}{c}
\eval{\domtrace, \env}{\expression}{\actiontrace'_1, \domtrace'_1}  \ \ \ 
\eval{\domtrace'_1, \env}{\program}{\actiontrace'_2, \domtrace'} \\ 
\end{array}
}{
\eval{\domtrace, \env}{\seqprog( \expression, \program )}{\actiontrace'_1 \traceconcat \actiontrace'_2, \domtrace'}
}
\\ \\ 
(\textsc{Click}) & 
\irule{
\begin{array}{c}
\domtrace = [ \dom_1, \mydots, \dom_m ] \ \ \ 
\eval{\dom_1, \env}{\selectorexpr}{\fullxpath} \\ 
\domtrace' = [ \dom_2, \mydots, \dom_m ] 
\end{array}
}{
\eval{\domtrace, \env}{\clickexpr(\selectorexpr)} { [ \clickexpr(\fullxpath) ], \domtrace' }
}
\\ \\ 
(\textsc{ForData-1}) & 
\irule{
\begin{array}{c}
\eval{ \env}{\dataexpr}{\dataexprlist} \\ 
\eval{\domtrace, \env, \dataexprlist}{\program}{\actiontrace', \domtrace'} \\ 
\end{array}
}{
\eval{\domtrace, \env}{\fordataloop( \dataexpr, \lambda \datavar. \program )} { \actiontrace', \domtrace' }
}
\\ \\ 
(\textsc{ForData-2}) & 
\irule{
\begin{array}{c}
\domtrace'_0 = \domtrace \ \ \ 
i \in [ 1, |\dataexprlist| ]  \\ 
\eval{\domtrace'_{i-1}, \env \big[ \datavar \mapsto \dataexprlist[i] \big]}{\program}{\actiontrace'_i, \domtrace'_i} \\ 
\actiontrace' = \actiontrace'_1 \traceconcat \mydots \traceconcat \actiontrace'_{|\dataexprlist|} 
\end{array}
}{
\eval{\domtrace, \env, \dataexprlist}{\fordataloop( \dataexpr, \lambda \datavar. \program )} { \actiontrace', \domtrace'_{|\dataexprlist|} }
}
\end{array}
\]
\end{minipage}
\vspace{-5pt}
\caption{Web automation language. Left is syntax, where $\constantstring$ is a string, $i$ is an integer, $\domtag$ is an HTML tag, and $\domattribute$ is an HTML attribute. Right is a subset of the trace semantics rules; please find the complete set of rules in the appendix.}
\label{fig:alg:dsl}
\end{figure}

Figure~\ref{fig:alg:dsl} also presents a subset of the trace semantics rules, which are cleaner than $\webrobot$'s. 
The evaluation judgment is of the form: 
\[
\eval{\domtrace, \env}{\program}{\actiontrace', \domtrace'}
\]
which reads: given a \emph{context} --- consisting of a DOM trace $\domtrace$ and a binding environment $\env$ (that binds all free variables in scope) --- evaluating program $\program$ yields an action trace $\actiontrace'$ and a DOM trace $\domtrace'$. 
This evaluation does \emph{not} execute $\program$ in browser; instead, it simulates the execution by ``replaying'' $\program$ given $\domtrace$. We refer interested readers to the $\webrobot$ paper~\cite{dong2022webrobot} for the design rationale. 
Here, we briefly explain a few representative rules. 
The \textsc{Seq} rule is standard: it evaluates $\expression$ and $\program$ in sequence, and concatenates the resulting action traces. 
The \textsc{Click} rule is more interesting. 
It first evaluates $\selectorexpr$ (which may use a variable) under $\dom_1$ and $\env$, yielding a selector $\fullxpath$ which is used to form the output action. 
Then, it removes the first DOM from $\domtrace$ to obtain the resulting DOM trace $\domtrace'$. 
In other words, the program under evaluation and the DOM trace are always ``in sync'': the first action to be executed always corresponds to the first DOM. 
The \textsc{ForData} rules are perhaps the most interesting.
\textsc{ForData-1} first evaluates $\dataexpr$ to a list $\dataexprlist$, and then invokes \textsc{ForData-2} which is a helper rule that executes all iterations of the loop until termination. 
The key observation is: similar to the fold function from Example~\ref{ex:overview:example}, $\fordataloop$ also performs \emph{dependent} iterations; that is, an iteration has to be executed under a context computed by its previous iteration. In particular, the input DOM trace carries the dependency. 
This data-dependent feature actually is not specific to $\fordataloop$: all loops in our language are data-dependent, and in fact, $\seqprog$ is too. 
Prior work cannot reduce the space of our web automation programs. 
Our lifted interpretation idea is able to reduce this space, and we will present how it works next.

\subsection{Top-Level Synthesis Algorithm}\label{sec:alg:top-level}

Algorithm~\ref{alg:top-level} shows the top-level algorithm that synthesizes web automation programs from demonstrations. 
It shares the same interface as  $\webrobot$'s algorithm and thus can be directly integrated with $\webrobot$'s front-end UI. 
At a high-level, it takes as input an action trace $\actiontrace$, a DOM trace $\domtrace$, and input data $\inputdata$. 
It returns a program $\program$ that \emph{generalizes} $\actiontrace$ --- i.e., given $\domtrace$ and $\env = \{ \inputvar \mapsto \inputdata\}$, evaluating $\program$ using trace semantics produces $\actiontrace'$ such that $\actiontrace$ is a \emph{strict} prefix of $\actiontrace'$. 
This generalization is possible, because we require $\domtrace$ to have one more DOM than the number of actions in $\actiontrace$. 
To synthesize $\program$, we first find all programs that \emph{reproduce} $\actiontrace$ (i.e., $\actiontrace'$ has $\actiontrace$ as a prefix) --- notably, this process compresses a large number of programs in an FTA $\fta$. 
Then line~\ref{alg:top-level:rank} picks a smallest program $\program$, from $\fta$, that generalizes $\actiontrace$. 
If no such $\program$ exists in $\fta$, the algorithm returns \emph{null}.

\begin{figure}[!t]
\centering
\vspace{-10pt}
\begin{minipage}{.99\linewidth}
\small 
\centering
\begin{algorithm}[H]
\begin{algorithmic}[1]
\myprocedure{Synthesize}{$\actiontrace, \domtrace, \inputdata$}

\Statex\Input{Action trace $\actiontrace = [\action_1, \mydots, \action_m]$, DOM trace $\domtrace = [\dom_1, \mydots, \dom_m, \dom_{m+1}]$, and input data $\inputdata$.}

\Statex\Output{Program $\program$ that generalizes $\actiontrace$, or $\emph{null}$ if no such program can be found.}

\State 
$\fta \assign \textsc{InitFTA}([\action_1, \mydots, \action_m], [\dom_1, \mydots, \dom_m], \inputdata)$ where $\fta = (\ftastates, \ftaalphabet,  \ftastates_f, \ftatransitions)$;
\label{alg:top-level:init}

\While{$\fta$\emph{\_is\_not\_saturated} and \emph{not\_timeout}}
\label{alg:top-level:begin-loop}

\State
$\ftatransitions' \assign \ftatransitions$;

\ForAll{$\ftatransition_1, \mydots, \ftatransition_{2l} \in \ftatransitions'$ 
where $\ftatransition_i = \seqprog(\ftastate'_i, \ftastate_i) \ftatransitionarrow \ftastate_{i-1}, i \in [1, 2l]$}
\label{alg:top-level:matching-transitions}

\Statex\Comment{Find $2l$ ``consecutive'' transitions.}

\State
$\fta_s \assign \textsc{SpeculateFTA}( [ \ftatransition_1, \mydots, \ftatransition_{2l} ], \ftatransitions)$;
\label{alg:top-level:speculate}

\State
$\fta_e \assign \textsc{EvaluateFTA}(\fta_s, \inputcontexts(\ftastate_0))$;
\label{alg:top-level:validate}
\Comment{$\inputcontexts$ gives all contexts in $\ftastate_0$'s footprint.}

\State
$\fta \assign \textsc{MergeFTAs}(\fta, \ftastate_0, \fta_e)$;
\label{alg:top-level:merge}

\EndFor

\EndWhile
\label{alg:top-level:end-loop}

\State
\Return $\textsc{Rank}(\fta, \domtrace, \inputdata)$;
\label{alg:top-level:rank}
\vspace{-5pt}
\caption{Top-level synthesis algorithm.}
\label{alg:top-level}
\end{algorithmic}
\end{algorithm}
\end{minipage}
\end{figure}

Now let us dive into the algorithm a  bit more, though more details will be described in subsequent sections. 
Line~\ref{alg:top-level:init} initializes $\fta$ \emph{based on the input traces}, such that $\fta$ stores all loop-free programs that are guaranteed to reproduce $\actiontrace$. 
The reason that we base our synthesis on the input DOM trace $\domtrace$ is because selector expressions are not given a priori; they are known only when $\domtrace$ becomes available. 
The input action trace $\actiontrace$ is used to further guide synthesis. 
The following example briefly illustrates what an initial FTA looks like; we defer the more detailed explanation to Section~\ref{sec:alg:fta-init}. 

\begin{example}
Consider the following action trace $\actiontrace = [ \action_1, \mydots,  \action_6 ]$ for the task from Example~\ref{ex:intro:example}. 
\[
\arraycolsep=3pt
\def\arraystretch{1.2}
\small 
\begin{array}{lll}

{\color{gray}{\footnotesize\texttt{1}}} 
& \enterdatastatement( x[0], /html/.../div[1]/input ) 
& {\color{gray}{\hspace{0pt}\textsf{\# enter pwned@gmail.com}}}
\\ 

{\color{gray}{\footnotesize\texttt{2}}} 
& \clickstatement( /html/body/.../div[1]/span/button ) \ \ 
& {\color{gray}{\hspace{0pt}\textsf{\# click search button}}}
\\ 

{\color{gray}{\footnotesize\texttt{3}}} 
& \scrapetextstatement( /html/body/div[1]/.../div/h2 ) 
& {\color{gray}{\hspace{0pt}\textsf{\# scrape ``Oh no -- pwned!''}}}
\\ 

{\color{gray}{\footnotesize\texttt{4}}} 
& \enterdatastatement( x[1], /html/.../div[1]/input ) 
& {\color{gray}{\hspace{0pt}\textsf{\# enter not\_pwned@gmail.com}}}
\\ 

{\color{gray}{\footnotesize\texttt{5}}} 
& \clickstatement( /html/body/.../div[1]/span/button ) 
& {\color{gray}{\hspace{0pt}\textsf{\# click search button}}}
\\ 

{\color{gray}{\footnotesize\texttt{6}}} 
& \scrapetextstatement( /html/body/div[0]/.../div/h2 ) 
& {\color{gray}{\hspace{0pt}\textsf{\# scrape ``Good news - no pwnage found''}}}
\end{array}
\]
In addition, the demonstration also includes a DOM trace $\domtrace = [ \dom_1, \mydots, \dom_7]$, where $\action_i$ is performed on DOM $\dom_i$. 
Suppose $\inputdata = [ \text{``\texttt{pwned@gmail.com}''}, \text{``\texttt{not\_pwned@gmail.com}''} ]$ is the list of emails. 

Figure~\ref{fig:alg:action-trace} shows the abstract syntax tree for $\actiontrace$, and Figure~\ref{fig:alg:init-fta} gives the corresponding initial FTA $\fta$. 
Here, we use $\fullxpath_i$ as a shorthand to denote the selector in $\action_i$. 
Note that $\actiontrace$ and $\fta$ in  both figures are pretty much ``isomorphic'', except that each action in $\fta$ has multiple selectors as ``leaf transitions'' (see each dashed circle), whereas each action in $\actiontrace$ has only one selector. Section~\ref{sec:alg:fta-init} will present more details around how $\fta$ is constructed. 
\label{ex:alg:action-trace}
\end{example}

\begin{figure}
\centering
\begin{minipage}[b]{.43\linewidth}
\centering
\includegraphics[height=4.2cm]{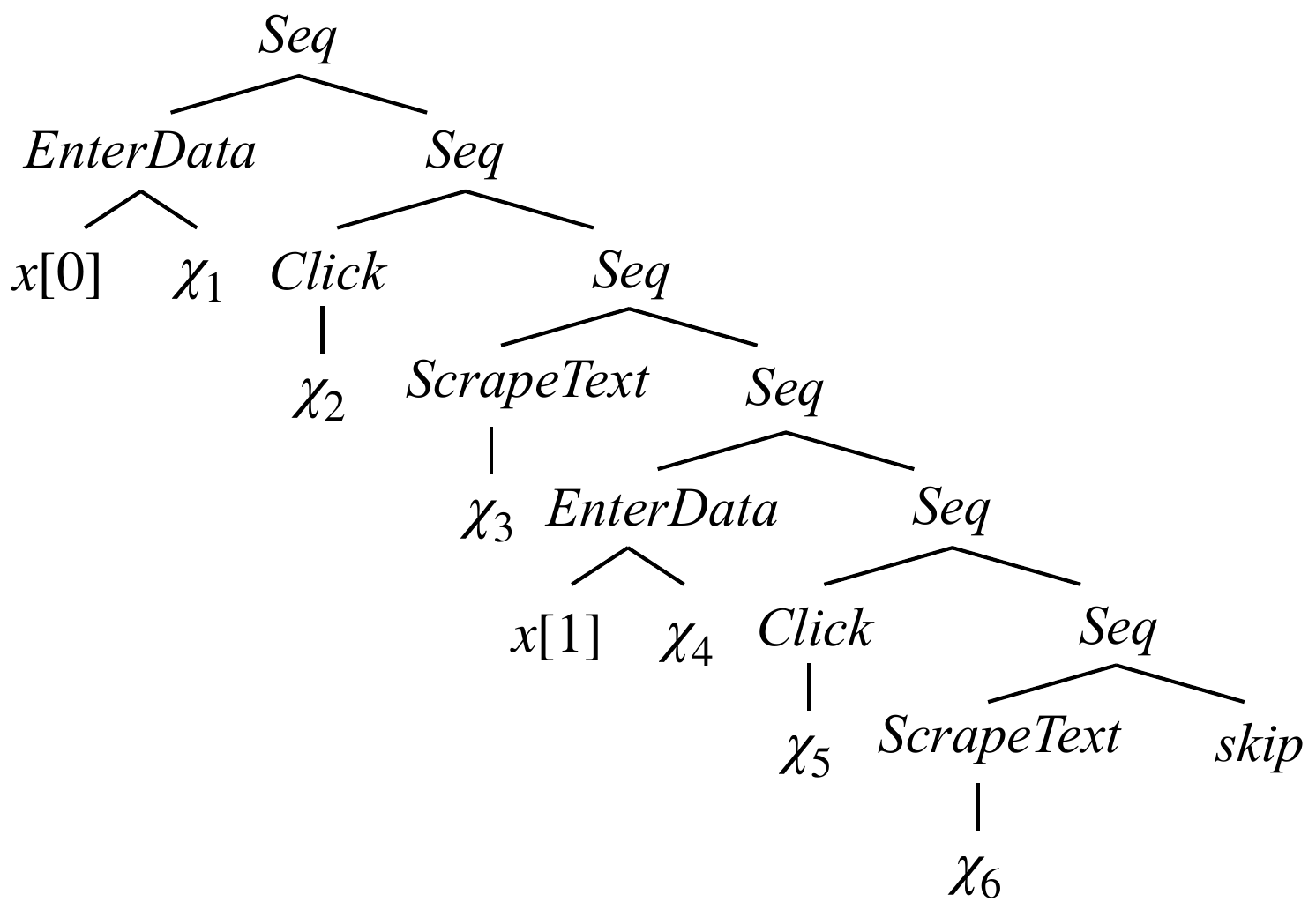}
\caption{AST for action trace from Example~\ref{ex:alg:action-trace}.}
\label{fig:alg:action-trace}
\end{minipage}
\begin{minipage}[b]{.55\linewidth}
\centering
\includegraphics[height=4.6cm]{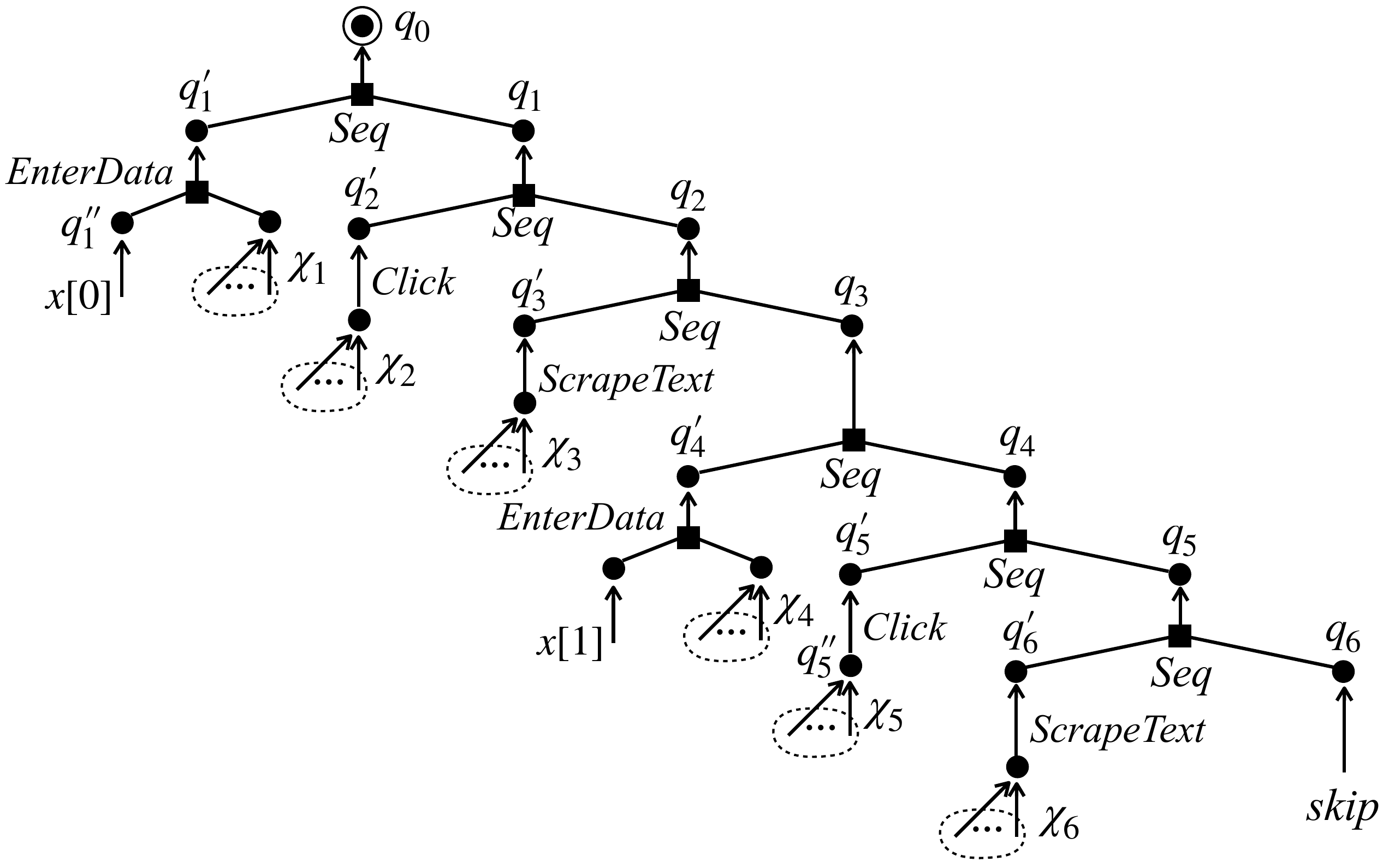}
\caption{Initial FTA for Figure~\ref{fig:alg:action-trace}.}
\label{fig:alg:init-fta}
\end{minipage}
\end{figure}

Given the initial $\fta$, we then enter a loop (lines~\ref{alg:top-level:begin-loop}-\ref{alg:top-level:end-loop}) which iteratively adds to $\fta$ \emph{loopy} programs, until no new programs can be found (i.e., $\fta$ saturates) or a predefined timeout is reached. 
In each iteration, we non-deterministically pick $2l$ \emph{consecutive} $\seqprog$ transitions $\ftatransition_1, \mydots, \ftatransition_{2l}$ (line~\ref{alg:top-level:matching-transitions}), and then perform three key steps: 
(1) \textsc{SpeculateFTA} guesses an FTA $\fta_s$ based on these $2l$ transitions, 
(2) \textsc{EvaluateFTA} performs lifted interpretation over $\fta_s$, yielding another FTA $\fta_e$, 
and (3) \textsc{MergeFTAs} merges $\fta_e$ into $\fta$. 
The resulting $\fta$ at line~\ref{alg:top-level:merge}, compared to $\fta$ before the merge, includes new loopy programs that are synthesized during this iteration, with the same guarantee that all programs in $\fta$ still reproduce $\actiontrace$. 
The generalization step takes place in (1), where we reroll a slice of statements in a program from $\fta$ to a loop that is then stored in $\fta_s$. 
This loop rerolling step, however, is speculative, meaning some rerolled loops may not be correct. 
Therefore, in step (2), we use \textsc{EvaluateFTA} to check all loops and retain only those that can indeed reproduce $\actiontrace$ --- this \textsc{EvaluateFTA} algorithm (i.e., lifted interpretation) is the key contribution of this paper. 

In what follows, we explain how each step works in more detail. 
In particular, we will begin with \textsc{EvaluateFTA} in Section~\ref{sec:alg:fta-lifted-interp}, given $\fta_s$ and its corresponding input contexts. 
Then in subsequent sections, we explain how FTA initialization, speculation, merging, and ranking work, respectively.

\subsection{Lifted Interpretation}\label{sec:alg:fta-lifted-interp}

\begin{figure}
\centering
\begin{minipage}{.9\linewidth}
\centering
\includegraphics[height=3cm]{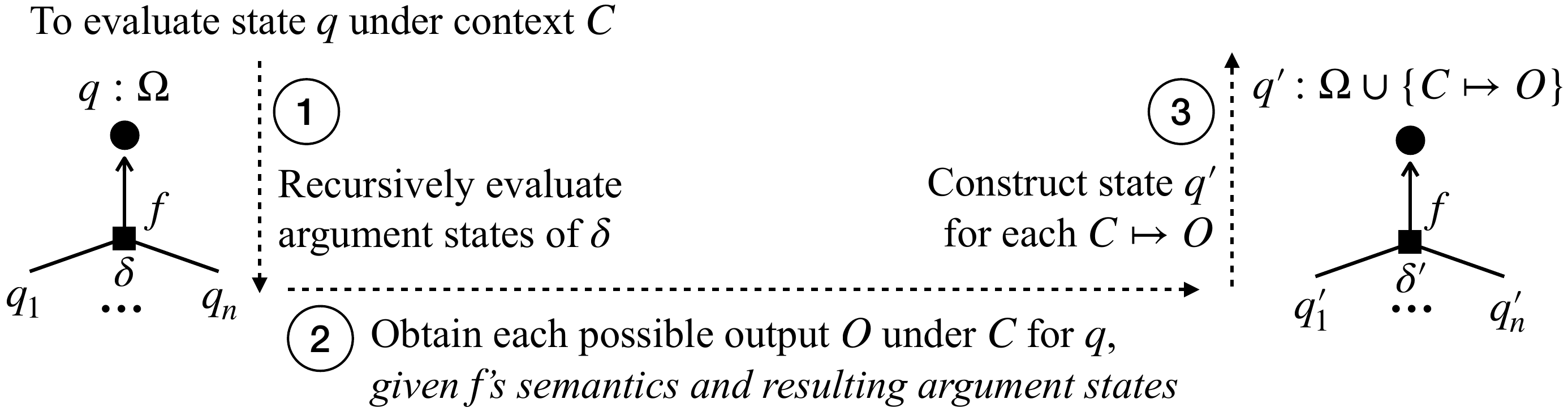}
\caption{Illustration of \textsc{EvaluateFTA} rules from Figure~\ref{fig:evaluation-rules}.}
\label{fig:illustration-evaluation-rules}
\end{minipage}
\end{figure}

As mentioned in Section~\ref{sec:overview}, our lifted interpretation uses the following key judgment. 
\[
\begin{array}{c}
\evalftastate{
\inputcontext
}{
\ftastate; \ftatransitions
}{
\ftastate', \ftatransitions'
}
\end{array}
\]

In our domain, a context $\inputcontext$ consists of a DOM trace $\domtrace$ and a binding environment $\env$. 
Figure~\ref{fig:evaluation-rules} shows the \textsc{EvaluateFTA} rules that implement lifted interpretation. 
We suggest readers looking at Figure~\ref{fig:illustration-evaluation-rules} at the same time. 
Rule (1) reduces multi-context evaluation to single-context evaluation. We note that the evaluation under $\inputcontext_i$ relies on the previous evaluation result for $\inputcontext_{i-1}$. 
To evaluate a state $\ftastate$ under a single context $\inputcontext$, Rule (2) further reduces it to evaluating each of $\ftastate$'s incoming transitions. 
The remaining rules evaluate various types of transitions; they share the same principle as illustrated in Figure~\ref{fig:illustration-evaluation-rules}. 
In particular, given transition $\ftatransition = \ftatransitionoperator(\ftastate_1, \mydots, \ftastate_n) \ftatransitionarrow \ftastate$ and context $\inputcontext$: 
\begin{enumerate}[leftmargin=*]
\item 
First, we recursively evaluate each argument state $\ftastate_i$. 
The input context under which to evaluate each $\ftastate_i$ and the order to evaluate them depends on the semantics of $\ftatransitionoperator$. 
\item 
Given each resulting state $\ftastate'_i$ for $\ftastate_i$, we obtain the output $\outputval$ for $\ftastate$ for $\inputcontext$. Note that $\outputval$ is computed \emph{compositionally}, from the corresponding outputs associated with $\ftastate'_i$ and per $\ftatransitionoperator$'s semantics. 
\item 
Finally, we construct state $\ftastate'$ that $\ftastate$ evaluates to. In particular, $\ftastate'$ includes $\inputcontext \mapsto \outputval$ as a behavior in its footprint $\footprint$, as well as all behaviors from $\ftastate$'s footprint $\footprint$. 
\end{enumerate}
Let us examine the rules in detail. 
Rule (3) is a base case for a nullary transition $\ftatransition$ with no argument states: it directly evaluates the selector expression $\selectorexpr$ and yields a state $\ftastate'$ with behavior $\domtrace, \env \inputcontexttooutput \fullxpath$. 
Rule (4) is more interesting. 
We first evaluate $\ftastate_1$ to $\ftastate'_1$ under context $\domtrace, \env$: note that here we use the context for $\ftastate$ to evaluate $\ftastate'_1$, due to $\clickstatement$'s semantics. 
Then, given $\ftastate'_1$, we obtain its output $\fullxpath$ which is later used to construct the output trace $\actiontrace'$ for $\ftastate$. 
Finally, we construct $\ftastate'$ with footprint $\annot'$, which includes the new behavior we just created based on $\ftastate'_1$ and $\clickstatement$'s semantics, as well as everything from $\annot$. 
Rule (5) considers a $\seqprog$ transition with two argument states. 
We first evaluate $\ftastate_1$ to $\ftastate'_1$. 
However, before evaluating $\ftastate_2$, we need to obtain $\domtrace'_1$ from $\ftastate'_1$ to form the context to evaluate $\ftastate_2$. 
The output action traces $\actiontrace'_1, \actiontrace'_2$ for $\ftastate'_1, \ftastate'_2$ form the output action trace for $\ftastate$. Finally, we construct $\ftastate'$ with $\annot'$, same as previous rules. 
Rule (6) is another base case for $\skipprog$. 
Rule (7) {also concerns} a base case: if {the} input DOM trace is empty, it yields the sub-FTA rooted at $\ftastate$.
\finalized{
Note that in general, $\ftastate$ would also include $\emptytrace, \env \inputcontexttooutput \emptytrace, \emptytrace$ in its footprint.
Rule (7) does not show this explicitly, because we assume all states implicitly have $\emptytrace, \_ \inputcontexttooutput \emptytrace, \emptytrace$. }

Now let us look at the most exciting rules for loops. 
Consider $\fordataloop$: its first argument is a data expression {that} yields a list $\dataexprlist$, and the second argument (i.e., loop body) is evaluated with the loop variable $\datavar$ being bound to each element from $\dataexprlist$. 
To evaluate $\ftastate$ with an incoming $\fordataloop$ transition, Rule (8) first evaluates $\ftastate_1$ to $\ftastate'_1$ and obtains $\dataexprlist$ from $\ftastate'_1$. Then, we evaluate ``loop body'' state $\ftastate_2$ under a context with $\dataexprlist$: Rule (9) presents how this evaluation works. Intuitively, Rule (9) has a series of $n$ evaluations: $\ftastate$ to $\ftastate'_1$, $\ftastate'_1$ to $\ftastate'_2$, $\mydots$, $\ftastate'_{n-1}$ to $\ftastate'_n$. Note that the output DOM trace for $\ftastate'_i$ is part of the context for evaluating $\ftastate'_{i+1}$, due to the data dependency across iterations. The output state is $\ftastate'_n$, which encapsulates information from all $n$ iterations. 
Popping up back to Rule (8): given $\ftastate'_2$, we obtain the output traces $\actiontrace'_i$ for all iterations, which are then concatenated to form the output trace in $\ftastate'$. 
We skip the discussion on other loop types as they are very similar to $\fordataloop$. 

\begin{figure}
\centering
\begin{minipage}{.99\linewidth}
\centering
\arraycolsep=2pt\def\arraystretch{1}
\begin{centermath}
\small 
\centering
\begin{array}{l}

(1) \ 
\irule{
\begin{array}{c}
\ftastate'_0 = \ftastate \ \ \ 
\ftatransitions'_0 = \ftatransitions 
\ \ \  \ 
\evalftastate{\inputcontext_i}
{\ftastate'_{i-1}; \ftatransitions'_{i-1}}
{\ftastate'_i, \ftatransitions'_i}  
\end{array}
}{
\evalftastate{\inputcontext_1, \mydots, \inputcontext_n}{\ftastate; \ftatransitions}{\ftastate'_n, \ftatransitions'_n}
}

\ \ 

(2) \ 
\irule{
\begin{array}{c}
\ftatransition = \ftatransitionoperator(\ftastate_1, \mydots, \ftastate_n) \ftatransitionarrow \ftastate  \in \ftatransitions \ \ \ 
\evalftatransition{\inputcontext}{\ftatransition; \ftatransitions}{\ftastate', \ftatransitions'}
\end{array}
}{
\evalftastate{\inputcontext}{\ftastate; \ftatransitions}{\ftastate', \ftatransitions'}
}

\\ \\ 

(3) \ 
\irule{
\begin{array}{c}
\domtrace = [\dom_1, \mydots, \dom_m] \ \ \ \ 
\eval{\env, \dom_1}{\selectorexpr}{\fullxpath} \ \ \ \ 
\annot = \getannot(\ftastate) \ \ \ \ 
\annot' = \annot \cup \{ \domtrace, \env \inputcontexttooutput \fullxpath \} \ \ \ \ 
\ftastate' = \newstate(\selectorexprsymbol, \annot')
\end{array}
}{
\evalftatransition{\domtrace, \env}{ \selectorexpr \ftatransitionarrow \ftastate; \ftatransitions}{\ftastate', \{ \selectorexpr \ftatransitionarrow \ftastate' \}}
}

\\ \\ 

(4) \  
\irule{
\begin{array}{c}
\evalftastate{\domtrace, \env}{\ftastate_1; \ftatransitions}{\ftastate'_1, \ftatransitions'_1} \ \ \ \ 
\domtrace, \env \inputcontexttooutput \fullxpath  \in \getannot(\ftastate'_1) \ \ \ \ 
\annot = \getannot(\ftastate) \ \ \ \ 
\domtrace = [\dom_1, \mydots, \dom_m] \\ 
\actiontrace' = [ \clickstatement(\fullxpath) ]
\ \ \ \ 
\annot' = \annot \cup \{ \domtrace, \env \inputcontexttooutput \actiontrace', [\dom_2, \mydots, \dom_m] \} \ \ \ \ 
\ftastate' = \newstate(\expressionsymbol, \annot')
\end{array}
}{
\evalftatransition{\domtrace, \env}{ \clickstatement(\ftastate_1) \ftatransitionarrow \ftastate; \ftatransitions}{\ftastate', \ftatransitions'_1 \cup \{ \clickstatement(\ftastate'_1) \ftatransitionarrow \ftastate' \} }
}

\\ \\ 

(5) \ 
\irule{
\begin{array}{c}
\evalftastate{\domtrace, \env}{\ftastate_1; \ftatransitions}{\ftastate'_1, \ftatransitions'_1} \ \ \ \ 
\domtrace, \env \inputcontexttooutput \actiontrace'_1, \domtrace'_1  \in \getannot(\ftastate'_1) \\ 
\evalftastate{\domtrace'_1, \env}{\ftastate_2; \ftatransitions}{\ftastate'_2, \ftatransitions'_2} \ \ \ \ 
\domtrace'_1, \env \inputcontexttooutput \actiontrace'_2, \domtrace'_2  \in \getannot(\ftastate'_2) \\ 
\annot = \getannot(\ftastate) \ \ \ 
\annot' = \annot \cup \{ \domtrace, \env \inputcontexttooutput \actiontrace'_1 \traceconcat \actiontrace'_2, \domtrace'_2 \} \ \ \ 
\ftastate' = \newstate(\programsymbol, \annot')
\end{array}
}{
\evalftatransition{\domtrace, \env}{\seqprog(\ftastate_1, \ftastate_2) \ftatransitionarrow \ftastate; \ftatransitions}{\ftastate', \ftatransitions'_1 \cup \ftatransitions'_2 \cup \{ \seqprog(\ftastate'_1, \ftastate'_2) \ftatransitionarrow \ftastate' \}}
}

\\ \\ 

(6) \ 
\irule{
\begin{array}{c}
\annot = \getannot(\ftastate) \ \ \ \ 
\annot' = \annot \cup \{ \domtrace, \env \inputcontexttooutput \emptytrace, \domtrace \} \ \ \ \ 
\ftastate' = \newstate(\programsymbol, \annot)
\end{array}
}{
\evalftatransition{\domtrace, \env}{\skipprog \ftatransitionarrow \ftastate; \ftatransitions}{\ftastate', \{ \skipprog \ftatransitionarrow \ftastate' \}}
}

\ \

(7) \ 
\irule{
\begin{array}{c}
\ \subfta(\ftastate, \ftatransitions) = 
 \big( \{ \ftastate \}, \ftatransitions' \big) \\
\end{array}
}{
\evalftatransition{\emptytrace, \env}{\ftastate; \ftatransitions}{\ftastate, \ftatransitions'}
}

\\ \\ 

(8) \ 
\irule{
\begin{array}{c}
\evalftastate{\domtrace, \env}{\ftastate_1; \ftatransitions}{ \ftastate'_1, \ftatransitions'_1 } \ \ \ \
\domtrace, \env \inputcontexttooutput \dataexprlist \in \getannot(\ftastate'_1)
\\ 
\evalftastate{\domtrace, \env, \dataexprlist}{\ftastate_2; \ftatransitions}{\ftastate'_2, \ftatransitions'_2} 
\ \ \ \ 
\domtrace, \env \big[ \datavar \bindsto L[i] \big] \inputcontexttooutput \actiontrace'_{i}, \domtrace'_{i} \in \getannot(\ftastate'_2)   
\ \ \ \ 
i \in [1, |\dataexprlist|]
\\ 
\annot = \getannot(\ftastate) 
\ \ \ \ 
\annot' = \annot \cup \{ \domtrace, \env \inputcontexttooutput \actiontrace'_1 \traceconcat \mydots \traceconcat \actiontrace'_{|\dataexprlist|},  \domtrace'_{|\dataexprlist|} \} 
\ \ \ \ 
\ftastate' = \newstate(\expressionsymbol, \annot')
\end{array}
}{
\evalftatransition{\domtrace, \env}{\fordataloop(\ftastate_1, \ftastate_2) \ftatransitionarrow \ftastate; \ftatransitions}{\ftastate', \ftatransitions'_1 \cup \ftatransitions'_2 \cup \{  \fordataloop(\ftastate'_1, \ftastate'_2) \ftatransitionarrow \ftastate'  \}}
}

\\ \\ 

(9) \ 
\irule{
\begin{array}{c}
\ftastate'_0 = \ftastate \ \ \ \ 
\ftatransitions'_0 = \ftatransitions \ \ \ \ 
\domtrace'_0 = \domtrace \\ 
\evalftastate{\domtrace'_{i-1}, \env[\datavar \bindsto \datavalue_i]}{\ftastate'_{i-1}; \ftatransitions'_{i-1}}{\ftastate'_i, \ftatransitions'_i} \ \ \ \ 
\domtrace'_{i-1}, \env[\datavar \bindsto \datavalue_i] \inputcontexttooutput \actiontrace'_i, \domtrace'_i \in \getannot(\ftastate'_i) 
\end{array}
}{
\evalftastate{\domtrace, \env, [\datavalue_1, \mydots, \datavalue_n]}{\ftastate; \ftatransitions}{\ftastate'_n, \ftatransitions'_n}
}

\end{array}
\end{centermath}
\caption{A subset of rules for $\textsc{EvaluateFTA}$; the complete set can be found in the appendix.}
\label{fig:evaluation-rules}
\end{minipage}
\end{figure}

\begin{example}
Consider the task from Example~\ref{ex:alg:action-trace}. 
Suppose $\fta_s = ( \{ \ftastatep \}, \ftatransitions)$ shown in Figure~\ref{fig:alg:speculate-fta} is the FTA speculated from the initial FTA in Figure~\ref{fig:alg:init-fta}. 
Section~\ref{sec:alg:fta-speculation} will later explain how $\textsc{SpeculateFTA}$ generates this $\fta_s$, but in brief, it contains $\fordataloop$ loops inferred from the initial FTA. 
Each state in $\fta_s$ is annotated with a grammar symbol and an \emph{empty} footprint (see a few examples in Figure~\ref{fig:alg:speculate-fta}). 

Given a context consisting of a DOM trace $[\dom_1, \mydots, \dom_6]$ and a binding environment $\env = \{ \inputvar \mapsto \inputdata \}$ (both of which are from Example~\ref{ex:alg:action-trace}), \textsc{EvaluateFTA} applies lifted interpretation to $\fta_s$: the process is very similar to that in Example~\ref{ex:overview:example}, except that now we use trace semantics for a different syntax. 
Figure~\ref{fig:alg:validate-fta} illustrates a part of the FTA $\fta_e = (\ftastates'_{f}, \ftatransitions')$ returned by \textsc{EvaluateFTA}. 
Here, we show two states $\ftastater_1 \in \ftastates'_f$ and $\ftastater_2 \in \ftastates'_f$, with different footprints, that $\ftastatep$ evaluates to: 
\[\small 
\begin{array}{c}
\evalftastate{
[ \dom_1, \mydots, \dom_6 ], 
\env
}{
\ftastatep; \ftatransitions
}{
\ftastater_1, \ftatransitions'_1
}
\ \ 
\emph{where} 
\ \ 
\ftastater_1 = 
\Big(
\programsymbol, 
\big\{  
[ \dom_1, \mydots, \dom_6 ], 
\env
\mapsto 
[ \action_1, \action_2, \action_3, \action_4, \action_5, \action_6  ]
\big\}
\Big)
\\[5pt]
\evalftastate{
[ \dom_1, \mydots, \dom_6 ], 
\env
}{
\ftastatep; \ftatransitions
}{
\ftastater_2, \ftatransitions'_2
}
\ \ 
\emph{where} 
\ \ 
\ftastater_2 = 
\Big(
\programsymbol, 
\big\{  
[ \dom_1, \mydots, \dom_6 ], 
\env
\mapsto 
[ \action_1, \action_2, \action'_3, \action_4, \action_5, \action_6  ]
\big\}
\Big)
\end{array}
\]
Here, $[\action_1, \mydots, \action_6]$ is the desired action trace (see Example~\ref{ex:alg:action-trace}) where $\action_3$ scrapes ``Oh no -- pwned'' and $\action_6$ scrapes ``Good news -- no pwnage found''. 
The action $\action'_3$, however, scrapes ``Good news -- no pwnage found'', which is undesired. 
The reason we can have $\action'_3$ is because programs in $\ftalang( \{ \ftastater_2 \}, \ftatransitions' )$ use an undesired selector expression (such as the full selector expression described in Example~\ref{ex:intro:example}) that does not generalize. 
More specifically, evaluating $\ftastatep'_3$ from $\fta_s$ (with a full selector expression in $\scrapetextstatement$) can yield state $\ftastater_6$ in $\fta_e$: 
\[\small 
\ftastater_6 : 
( \statementsymbol, 
\left\{
\begin{array}{lll}
[ \dom_3, \mydots, \dom_6 ], 
\{ \inputvar \mapsto \inputdata, \datavar \mapsto ``\texttt{pwned@gmail.com}" \} 
& \inputcontexttooutput 
& [\action'_3], [\dom_4, \dom_5, \dom_6] 
\\[3pt]
[ \dom_6 ], 
\{ \inputvar \mapsto \inputdata, \datavar \mapsto ``\texttt{not\_pwned@gmail.com}" \} 
& \inputcontexttooutput & [\action_6], \emptytrace
\end{array}
\right\}
)
\]
That is, for both emails, the full selector always scrapes ``Good news -- no pwnage found''. 
This in turn means that $\ftastatep_0$ from $\fta_s$ can evaluate to $\ftastater_4$ in $\fta_e$: 
\[\small 
\ftastater_4 : 
( \programsymbol, 
\left\{
\begin{array}{lll}
[ \dom_1, \mydots, \dom_6 ], 
\{ \inputvar \mapsto \inputdata, \datavar \mapsto ``\texttt{pwned@gmail.com}" \} 
& \inputcontexttooutput 
& [\action_1, \action_2, \action'_3], [\dom_4, \dom_5, \dom_6] 
\\[3pt]
[ \dom_4, \mydots, \dom_6 ], 
\{ \inputvar \mapsto \inputdata, \datavar \mapsto ``\texttt{not\_pwned@gmail.com}" \} 
& \inputcontexttooutput 
& [ \action_4, \action_5, \action_6 ], \emptytrace
\end{array}
\right\}
)
\]
Finally, this leads to the aforementioned state $\ftastater_2$. 
\label{ex:alg:evaluate-fta}
\end{example}

\begin{figure}[!t]
\centering
\begin{minipage}[b]{.47\linewidth}
\centering
\includegraphics[height=4.3cm]{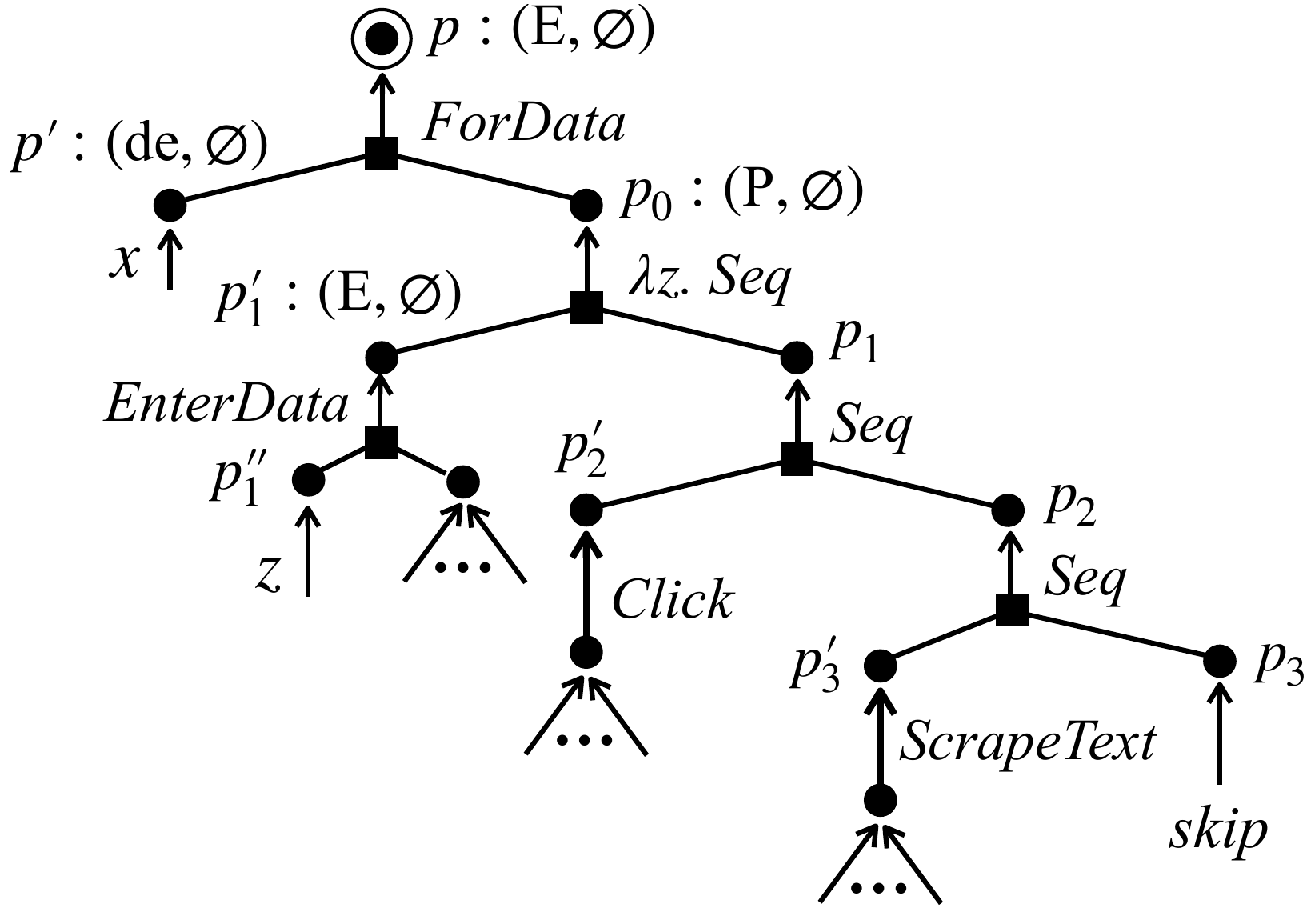}
\caption{Illustration of one potential $\fta_s$ given by \textsc{SpeculateFTA} for the initial FTA from Figure~\ref{fig:alg:init-fta}.}
\label{fig:alg:speculate-fta}
\end{minipage}
\ \ \ \ \ \ 
\begin{minipage}[b]{.47\linewidth}
\centering
\includegraphics[height=4.2cm]{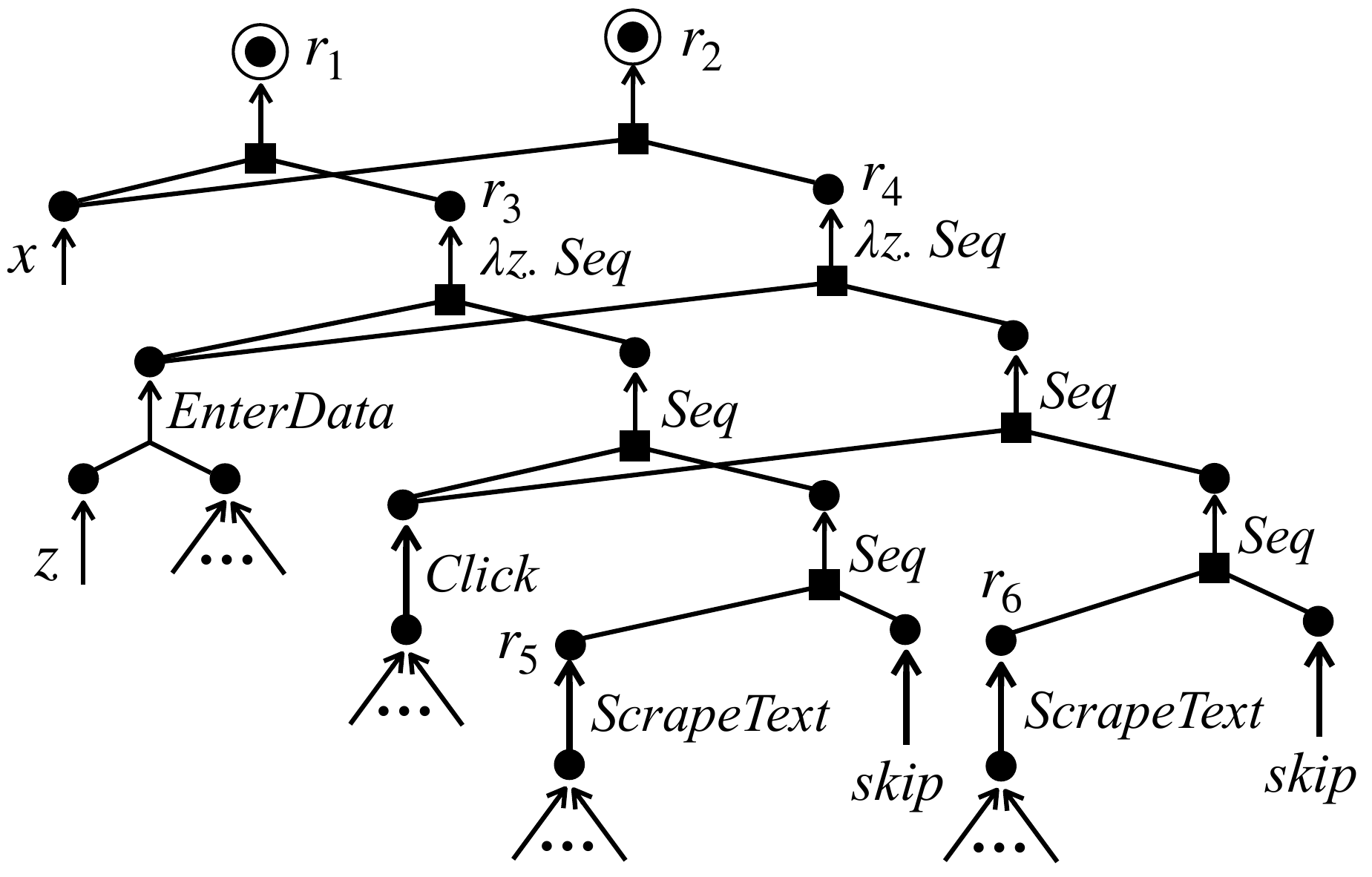}
\caption{Illustration of a part of $\fta_e$ returned by \textsc{EvaluateFTA} for $\fta_s$ from Figure~\ref{fig:alg:speculate-fta}.}
\label{fig:alg:validate-fta}
\end{minipage}
\hspace{8pt}
\end{figure}

\subsection{FTA Initialization}\label{sec:alg:fta-init}

Now let us circle back and describe the FTA initialization procedure, which we note is specific to PBD and web automation. 
Figure~\ref{fig:fta-init-rules} shows the key initialization rule that constructs, from action trace $[\action_1, \mydots, \action_m]$ and DOM trace $[\dom_1, \mydots, \dom_m]$, $m$ consecutive $\seqprog$ transitions in the initial FTA $\fta$. 
In particular, all $\ftastate_i$ ($i \in [0, m]$) states have grammar symbol $\programsymbol$, and all $\ftastate'_i$ ($i \in [1, m]$) states have $\expressionsymbol$. 
Each $\seqprog$ transition $\ftatransition_i$ ($i \in [1, m]$) connects $\ftastate'_i$ and $\ftastate_i$ to $\ftastate_{i-1}$, forming $m$ consecutive $\seqprog$ transitions. 
The footprint in $\ftastate_i$ means that all programs in $\subfta(\ftastate_i, \fta)$ yield $[\action_{i+1}, \mydots, \action_m], \emptytrace$ under context $[\dom_{i+1}, \mydots, \dom_m], \env$. 
Similarly, for each $\ftastate'_i$, it maps the context $[\dom_i, \mydots, \dom_{m}], \env$ to $[\action_i], [\dom_{i+1}, \mydots, \dom_m]$. 

In addition to $\seqprog$, we also have transitions for statements (like $\clickstatement$ and $\scrapetextstatement$) and selectors. The following rule shows how to construct transitions for $\clickstatement$ and its candidate selectors, from action trace $[\action_1, \mydots, \action_m]$, DOM trace $[\dom_1, \mydots, \dom_{m}]$, and input data $\inputdata$.  
\[
\small 
\hspace{-3pt}
\begin{array}{c}
\irule{
\begin{array}{c}
i \in [1, m] \ \ \ \ 
\action_i = \clickstatement(\fullxpath) \ \ \ \ 
\eval{\dom_i}{\concreteselector}{\fullxpath} \ \ \ \ 
\env = \{ \inputvar \mapsto \inputdata \} 
\\ 
\ftastate^{\prime}_i = \newstate
\Big( 
\expressionsymbol, 
\big\{ 
[ \dom_i, \mydots, \dom_m ], \env
\inputcontexttooutput 
[ \action_i ], [ \dom_{i+1}, \mydots, \dom_m ]
\big\} 
\Big) \ \ \ \ 
\ftastate^{\doubleprime}_i = \newstate 
\Big( 
\selectorexprsymbol, \big\{ \dom_i, \env \inputcontexttooutput \fullxpath \big\} 
\Big) \\ 
\ftatransition'_i = \clickstatement( \ftastate^{\doubleprime}_i ) \ftatransitionarrow \ftastate'_i \ \ \ \ 
\ftatransition^{\doubleprime}_i =  \concreteselector \ftatransitionarrow \ftastate^{\doubleprime}_i 
\end{array}
}{
\ftastate'_i, \ftastate^{\doubleprime}_i \in \ftastates \ \ \ \ 
\ftatransition'_i, \ftatransition^{\doubleprime}_i \in \ftatransitions
}
\end{array}
\]
Intuitively, this rule states: given a $\clickstatement$ action $\action_i$ (with a full selector expression $\fullxpath$) that is performed on DOM $\dom_i$,
for any \emph{candidate selector} $\concreteselector$ that refers to the same DOM element on $\dom_i$ as $\fullxpath$, we create a transition for $\concreteselector$. 
This is exactly why Figure~\ref{fig:alg:init-fta} has multiple selectors in the {dashed} circle around each $\fullxpath_i$. 
The construction rules for other actions are very similar; we skip the details here.

\begin{figure}
\centering
\arraycolsep=1pt\def\arraystretch{1}
\begin{centermath}
\footnotesize
\centering
\begin{array}{l}
\irule{
\begin{array}{ll}
\env = \{ \inputvar \mapsto \inputdata \} 
\\ 
\ftastate_0 = \newstate
\Big( 
\programsymbol, 
\big\{ 
[ \dom_1, \mydots, \dom_m ], \env \inputcontexttooutput [ \action_1, \mydots, \action_m ], \emptytrace 
\big\} 
\Big) 
& 
\ftastate'_1 = \newstate
\Big( 
\expressionsymbol, 
\big\{ 
[ \dom_1, \mydots, \dom_m ], \env
\inputcontexttooutput 
[ \action_1 ], [ \dom_2, \mydots, \dom_m ]
\big\} 
\Big) 
\\ 
\ftastate_1 = \newstate
\Big( 
\programsymbol, 
\big\{ 
[ \dom_2, \mydots, \dom_m ], \env \inputcontexttooutput [ \action_2, \mydots, \action_m ], \emptytrace 
\big\} 
\Big) 
& 
\ftastate'_2 = \newstate
\Big( 
\expressionsymbol, 
\big\{ 
[ \dom_2, \mydots, \dom_m ], \env
\inputcontexttooutput 
[ \action_2 ], [ \dom_3, \mydots, \dom_m ]
\big\} 
\Big) 
\\ 
\mydots
& 
\mydots 
\\ 
\ftastate_m = \newstate
\Big( 
\programsymbol, 
\big\{ 
\emptytrace, \env 
\inputcontexttooutput 
\emptytrace, \emptytrace 
\big\} 
\Big)
&  
\ftastate'_m = \newstate
\Big( 
\expressionsymbol, 
\big\{ 
[ \dom_m ], \env
\inputcontexttooutput 
[ \action_m ], \emptytrace
\big\} 
\Big)
\\ 
\ftatransition_1 = \seqprog( \ftastate'_1, \ftastate_1 ) \ftatransitionarrow \ftastate_{0} 
\ \mydots  \ 
\ftatransition_m = \seqprog( \ftastate'_m, \ftastate_m ) \ftatransitionarrow \ftastate_{m-1}
\end{array}
}{
\ftastate_0, \mydots, \ftastate_m, 
\ftastate'_1, \mydots, \ftastate'_m \in \ftastates \ \ \ \ 
\ftatransition_1, \mydots, \ftatransition_m \in \ftatransitions \ \ \ \ 
\ftastate_0 \in \ftafinalstates 
}
\end{array}
\end{centermath}
\vspace{-5pt}
\caption{Key FTA initialization rule, given action trace $[\action_1, \mydots, \action_m]$ and DOM trace $[\dom_1, \mydots, \dom_m]$.}
\label{fig:fta-init-rules}
\end{figure}

\begin{figure}[!t]
\centering
\vspace{-15pt}
\begin{minipage}{.99\linewidth}
\small 
\centering
\begin{algorithm}[H]
\begin{algorithmic}[1]
\myprocedure{SpeculateForData}{$[\ftatransition_1, \mydots, \ftatransition_{2l}], \ftatransitions$}

\Statex\Input{Each $\ftatransition_i$ is of the form $\ftatransition_i = \seqprog(\ftastate'_i, \ftastate_i) \ftatransitionarrow \ftastate_{i-1}$, $i \in [1, 2l]$. $\ftatransitions$ is a set of transitions.}

\Statex\Output{FTA with final state $\ftastatespeculated$ and transitions $\ftatransitions'$, which represents a set of speculated $\fordataloop$ loops.}

\vspace{2pt}
\State
$U_1 \assign \antiunifyfunc\textsc{ForData}(\ftastate'_{1}, \ftastate'_{l+1}, \ftatransitions)$; 
\ \ $\mycdots$; \ \ 
$U_l \assign \antiunifyfunc\textsc{ForData}(\ftastate'_{l}, \ftastate'_{2l}, \ftatransitions)$;
\label{alg:speculate-fta-fordata:anti-unify}

\vspace{3pt}

\State 
$U \assign U_1 \cup \mydots \cup \ U_l$;
\ \ 
$( \{ \ftastatespeculated'_{1} \}, \ftatransitions'_{1}) \assign \parametrizefunc(\ftastate'_1, \ftatransitions, U)$;
\ \ $\mydots$; \ \ 
$( \{ \ftastatespeculated'_{l} \}, \ftatransitions'_{l}) \assign \parametrizefunc(\ftastate'_l, \ftatransitions, U)$;
\label{alg:speculate-fta-fordata:parametrize}


\vspace{2pt}
\State 
create fresh states $\ftastatespeculated_{0}(\programsymbol, \emptyset), \mydots, \ftastatespeculated_{l}(\programsymbol, \emptyset)$, and $\ftastatespeculated'(\dataexprsymbol, \emptyset), \ftastatespeculated(\expressionsymbol, \emptyset)$;
\label{alg:speculate-fta-fordata:create-states}

\State
$\ftatransitions' \assign$ 
\hspace{7pt}
$\{ \fordataloop(\ftastatespeculated', \ftastatespeculated_0) \ftatransitionarrow \ftastatespeculated  \}$
$\cup \ \{  \seqprog(\ftastatespeculated'_{i}, \ftastatespeculated_{i}) \ftatransitionarrow \ftastatespeculated_{i-1}  \ | \  i \in [1, l] \}$ 
\label{alg:speculate-fta-fordata:create-transitions}

\Statex
\hspace{20pt}
$\cup \ \{ u \ftatransitionarrow \ftastatespeculated'  \ | \ u \in U \}$ 
\Comment{Transitions corresponding to anti-unifiers}

\Statex 
\hspace{20pt}
$\cup \ \ \ftatransitions'_{1} \cup \mydots \cup \ftatransitions'_{l}$; 
\Comment{Transitions constructed from parametrization}

\vspace{3pt}
\State
\Return $\big( \{ \ftastatespeculated \}, \ftatransitions' $ \big);
\vspace{-5pt}
\caption{Algorithm for \textsc{SpeculateForData} that speculates an FTA of $\fordataloop$ loops.}
\label{alg:speculate-fta-fordata}
\end{algorithmic}
\end{algorithm}
\end{minipage}
\end{figure}

\begin{figure}
\centering
\begin{minipage}[b]{.4\linewidth}
\centering
\includegraphics[height=4.4cm]{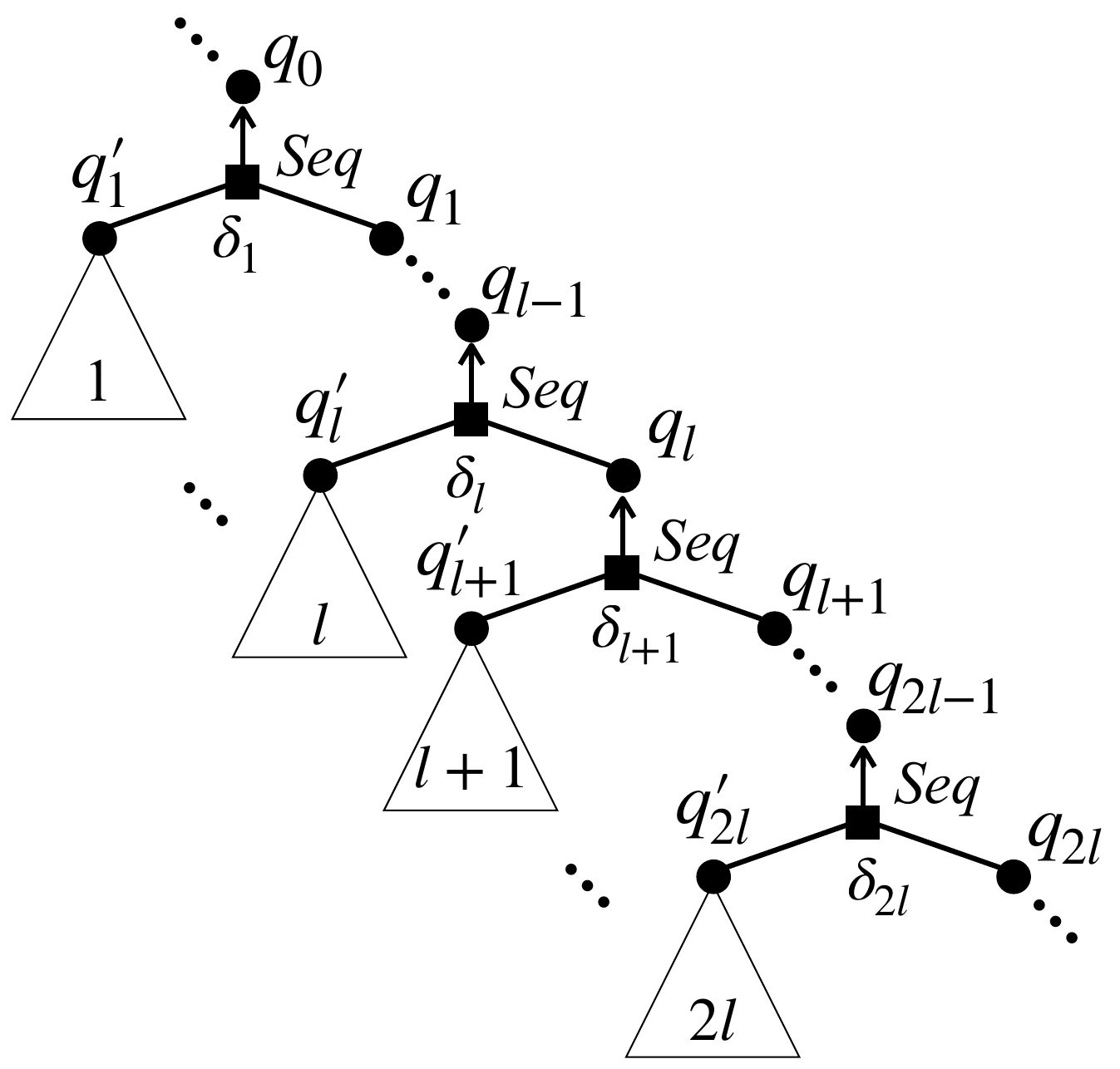}
\vspace{-5pt}
\caption{Illustration of the $2l$ consecutive transitions $\ftatransition_1, \mydots, \ftatransition_{2l}$ (line~\ref{alg:top-level:matching-transitions}, Algorithm~\ref{alg:top-level}).}
\label{fig:matched-transitions}
\end{minipage}
\hspace{5pt}
\begin{minipage}[b]{.57\linewidth}
\centering
\includegraphics[height=4.2cm]{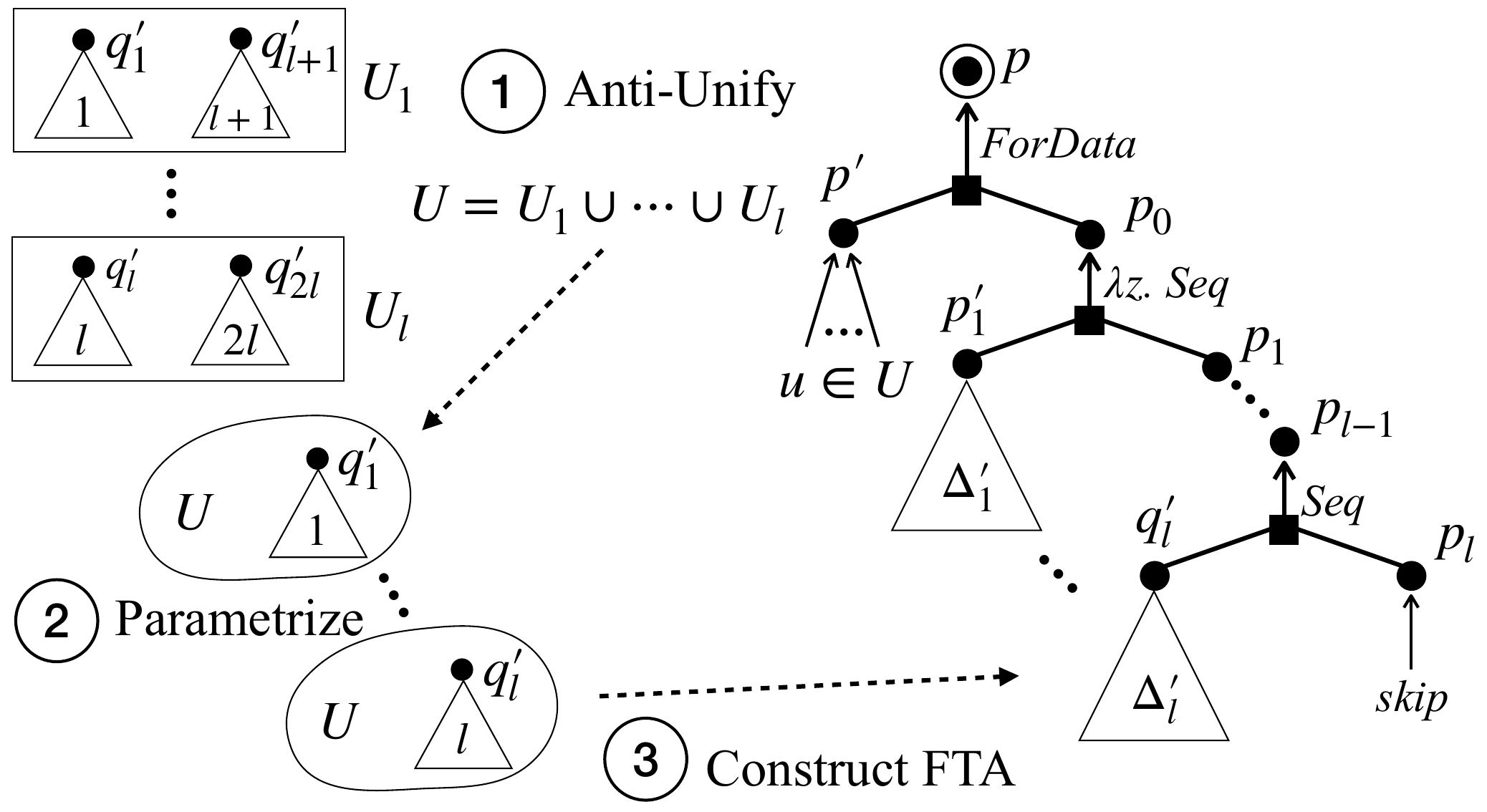}
\vspace{-5pt}
\caption{Illustration of Algorithm~\ref{alg:speculate-fta-fordata}, given the $2l$ transitions $\ftatransition_1, \mydots, \ftatransition_{2l}$ from Figure~\ref{fig:matched-transitions}.}
\label{fig:speculated-fta}
\end{minipage}
\end{figure}

\subsection{Speculating FTAs}\label{sec:alg:fta-speculation}

Section~\ref{sec:alg:fta-lifted-interp} assumed a given speculated FTA $\fta_s$. 
In this section, let us unpack the $\textsc{SpeculateFTA}$ algorithm that infers $\fta_s$. Here, $\fta_s$ is an FTA that contains candidate loops (potentially nested).  

Figure~\ref{fig:speculated-fta} illustrates how to speculate candidate $\fordataloop$ loops from the $2l$ consecutive transitions $\ftatransition_1, \mydots, \ftatransition_{2l}$ (illustrated in Figure~\ref{fig:matched-transitions}). 
Algorithm~\ref{alg:speculate-fta-fordata} describes the algorithm more formally. 
We suggest readers simultaneously looking at  Figures~\ref{fig:matched-transitions}, Figure~\ref{fig:speculated-fta} and Algorithm~\ref{alg:speculate-fta-fordata}. 

The first step is anti-unification (line~\ref{alg:speculate-fta-fordata:anti-unify}): for all $i \in [1, l]$, we synchronously traverse $\subfta(\ftastate'_i, \fta)$ and $\subfta(\ftastate'_{l+i}, \fta)$, and compute a set $U_i$ of anti-unifiers. 
For $\fordataloop$, an anti-unifier $u$ is simply a data expression that $\fordataloop$ may iterate over. 
The second step is parametrization (lines~\ref{alg:speculate-fta-fordata:parametrize}): for all anti-unifiers from $U$, we traverse $\subfta(\ftastate'_i, \fta)$ ($i \in [1,l]$) and build a new FTA with final state $\ftastatespeculated'_i$ and transitions $\ftatransitions'_i$. 
The third and last step is to construct the final speculated FTA $\fta_s$ (line~\ref{alg:speculate-fta-fordata:create-states}-\ref{alg:speculate-fta-fordata:create-transitions}). 
Each state in $\fta_s$ is annotated with an \emph{empty} footprint, because we are yet to evaluate any programs. 
In other words, $\fta_s$ is purely a syntactic compression without using any OE at all. 

We refer interested readers to the appendix for a more complete description of the speculation algorithm that handles other types of loops. 
In what follows, we explain our anti-unification and parametrization algorithms.

\begin{figure}[!t]
\small 
\centering
\begin{minipage}{.99\linewidth}
\centering
\arraycolsep=1pt\def\arraystretch{1}
\begin{centermath}
\centering
\hspace{-5pt}
\begin{array}{l}

(1) \ 
\irule{
\begin{array}{c}
\ftatransitionoperator \in \{ \clickstatement, \scrapetextstatement, \scrapelinkstatement, \downloadstatement, \enterdatastatement \}
\\ 
\ftatransitionoperator(\mydots, \ftastate'_i, \mydots) \ftatransitionarrow \ftastate_i \in \ftatransitions \ \ \ \ 
\antiunify{\ftatransitions}{\ftastate'_1, \ftastate'_2}{u}
\end{array}
}{
\antiunify{\ftatransitions}{\ftastate_1, \ftastate_2}{u}
}

\ \ \ \ \ 

(2) \ 
\irule{
\begin{array}{c}
\ftatransitionoperator \in \{ \fordataloop, \forselectorsloop \}
\\ 
\ftatransitionoperator(\ftastate'_i, \mydots) \ftatransitionarrow \ftastate_i \in \ftatransitions \ \ \ \ 
\antiunify{\ftatransitions}{\ftastate'_1, \ftastate'_2}{u}
\end{array}
}{
\antiunify{\ftatransitions}{\ftastate_1, \ftastate_2}{u}
}

\\[10pt]

(3) \ 
\irule{
\begin{array}{c}
\dataexpr_i \ftatransitionarrow \ftastate_i \in \ftatransitions 
\\ 
\antiunify{}{\dataexpr_1, \dataexpr_2}{u}
\end{array}
}{
\antiunify{\ftatransitions}{\ftastate_1, \ftastate_2}{u}
}

\ \ 

(4) \ 
\irule{
\begin{array}{c}
\\ 
\dataexpr_i = \dataexpr' \big[ \datavar \mapsto \dataexpr[i] \big] 
\end{array}
}{
\antiunify{}{\dataexpr_1, \dataexpr_2}{\dataexpr}
}

\ \ 

(5) \ 
\irule{
\begin{array}{c}
\concreteselector_i \ftatransitionarrow \ftastate_i \in \ftatransitions 
\\ 
\antiunify{}{\concreteselector_1, \concreteselector_2}{u}
\end{array}
}{
\antiunify{\ftatransitions}{\ftastate_1, \ftastate_2}{u}
}

\ \ 

(6) \ 
\irule{
\begin{array}{c}
\oplus \in \{ \slash, \sslash \} 
\\ 
\concreteselector_i = \selectorexpr' \big[\selectorvar \mapsto \concreteselector \oplus \predicate[k + i - 1]\big] 
\end{array}
}{
\antiunify{}{\concreteselector_1, \concreteselector_2}{\concreteselector \oplus \predicate [k] }
}

\end{array}
\end{centermath}
\vspace{-5pt}
\caption{Anti-unification rules.}
\label{fig:anti-unify-rules}
\end{minipage}
\end{figure}

\begin{figure}[!t]
\small 
\centering
\begin{minipage}{.99\linewidth}
\centering
\arraycolsep=1pt\def\arraystretch{1}
\begin{centermath}
\centering
\hspace{-5pt}
\begin{array}{l}

(1) \ 
\irule{
\begin{array}{c}
\begin{array}{c}
\ftatransitionoperator(\mydots, \ftastate', \mydots) \ftatransitionarrow \ftastate \in \ftatransitions 
\ \ \ \ 
\oplus \in \{ \slash, \sslash \} 
\ \ \ \ 
u \in U
\\ 
u \oplus \selectorexpr \ftatransitionarrow \ftastate' \in \ftatransitions 
\end{array}
\end{array}
}{
\selectorvar \oplus \selectorexpr \ftatransitionarrow M(\ftastate') \in \ftatransitions' 
}

\hspace{30pt}

(2) \ 
\irule{
\begin{array}{c}
\ftatransitionoperator(\mydots, \ftastate', \mydots) \ftatransitionarrow \ftastate \in \ftatransitions 
\ \ \ \ 
u \in U 
\\ 
u[1] \ftatransitionarrow \ftastate' \in \ftatransitions 
\end{array}
}{
\datavar \ftatransitionarrow M(\ftastate') \in \ftatransitions'
}

\end{array}
\end{centermath}
\vspace{-5pt}
\caption{Parametrization rules.}
\label{fig:parametrize-rules}
\end{minipage}
\end{figure}

\newpara{Anti-unification.}
Given two states $\ftastate_1, \ftastate_2$ from FTA $\fta$ with transitions $\ftatransitions$, anti-unification traverses expressions in $\subfta(\ftastate_1, \fta)$ and $\subfta(\ftastate_2, \fta)$ in a synchronized fashion, and returns a set $U$ of anti-unifiers that may be iterated over by the speculated loops.
In particular: 
\[
\begin{array}{c}
\textsc{AntiUnify}(\ftastate_1, \ftastate_2, \ftatransitions) 
= U
= 
\{ \ u \ | \ \antiunify{\ftatransitions}{\ftastate_1, \ftastate_2}{u} \ \}
\end{array}
\]
That is, $u \in U$ if there is an anti-unifier $u$ for $\ftastate_1, \ftastate_2$ given $\ftatransitions$. 
Figure~\ref{fig:anti-unify-rules} presents the detailed rules. 
Rule (1) says that, to anti-unify $\ftastate_1, \ftastate_2$ with incoming transition $\ftatransitionoperator$, we anti-unify the argument states $\ftastate'_1, \ftastate'_2$. 
Rule (2) does the same but for loops: we anti-unify expressions  (i.e., the first argument) being looped over. 
Rules (3) and (4) concern the anti-unification of data expressions --- the idea is to look for an increment pattern beginning with 1. 
Rules (5) and (6) anti-unify selector expressions, where it uses a more flexible pattern that allows starting from $k$ (not necessarily $k=1$).

\begin{example}
Consider states $\ftastate'_1$ and $\ftastate'_4$ in $\fta$ from Figure~\ref{fig:alg:init-fta}. The anti-unifier for $\ftastate'_1, \ftastate'_4$ is $\inputvar$. 
\label{ex:anti-unify}
\end{example}

\newpara{Parametrization.}
Given all anti-unifiers $U$ and a state $\ftastate$ from $\fta$ with transitions $\ftatransitions$, $\textsc{Parametrize}$  constructs from $\subfta(\ftastate, \fta)$ a fresh new $\fta'$ (with transitions $\ftatransitions'$ and final state $\ftastate'$) in two steps. 

First, we make a \emph{fresh new} copy of each state from $\subfta(\ftastate, \fta)$, keeping the same grammar symbol and but resetting the footprint to empty. 
This results in a mapping $M$ that maps every state in $\subfta(\ftastate, \fta)$ to a state in $\fta'$. 
Therefore, for each transition $\ftatransitionoperator( \ftastate_1, \mydots, \ftastate_n) \ftatransitionarrow \ftastate_0 \in \ftatransitions$, we have $\ftatransitionoperator \big( M(\ftastate_1), \mydots, M(\ftastate_n) \big) \ftatransitionarrow M(\ftastate_0) \in \ftatransitions'$.  

Then, we add new transitions labeled with \emph{parametrized} selector/data expressions to $\fta'$, given each anti-unifier $u \in U$, mapping $M$, and the state $\ftastate$ from $\fta$ with transitions $\ftatransitions$. In other words, the {actual} parametrization takes place in this step. 
Figure~\ref{fig:parametrize-rules} presents our parametrization rules. 
Rule (1) parametrizes any transition in $\ftatransitions$ that uses a selector expression of the form $u \oplus \selectorexpr$; that is, this selector uses the anti-unifier $u$ as a prefix. This rule adds a new transition with $u$ being replaced by variable $\selectorvar$ which points to state $M(\ftastate')$. 
Rule (2) parametrizes data expressions in a similar manner.

\begin{example}
Consider the initial FTA $\fta$ in Figure~\ref{fig:alg:init-fta}. 
Consider the anti-unifier $\inputvar$ from Example~\ref{ex:anti-unify}. 
Let us explain how to parametrize state $\ftastate'_1$ from $\fta$, given anti-unifier $\inputvar$, to generate state $\ftastatep'_1$ in $\fta_s$ from Figure~\ref{fig:alg:speculate-fta}. 
We first make a copy of $\subfta(\ftastate'_1, \fta)$: for example, the copy of $\ftastate'_1$ is $\ftastatep'_1$; that is, $M(\ftastate'_1) = \ftastatep'_1$. 
Then, we invoke Rule (2) from Figure~\ref{fig:parametrize-rules} with $\ftatransitionoperator$ being $\enterdatastatement$. This yields a new transition $\datavar \ftatransitionarrow M(\ftastate^{\doubleprime}_1)$ where $\ftastatep^{\doubleprime}_1 = M(\ftastate^{\doubleprime}_1)$, which indeed appears in $\fta_s$ in Figure~\ref{fig:alg:speculate-fta}. 
\label{ex:parametrize}
\end{example}

\newpara{Constructing $\fta_s$.}
Finally, lines~\ref{alg:speculate-fta-fordata:create-states}-\ref{alg:speculate-fta-fordata:create-transitions} connect the created states and transitions, as shown in Figure~\ref{fig:speculated-fta}.

\subsection{Merging FTAs}\label{sec:alg:merging-and-ranking}

Let us explain the last two procedures in Algorithm~\ref{alg:top-level}.

\begin{wrapfigure}{r}{0.25\textwidth}
\vspace{-15pt}
\begin{minipage}[t]{\linewidth}
\centering
\includegraphics[scale=0.21]{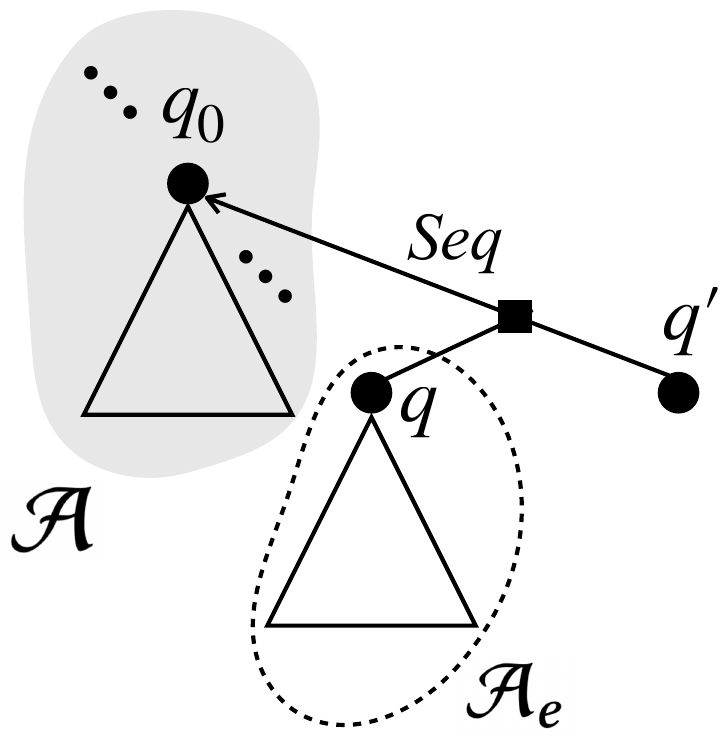}
\vspace{-5pt}
\caption{$\textsc{MergeFTAs}$.}
\label{fig:merge}
\end{minipage}
\vspace{-10pt}
\end{wrapfigure}
\vspace{3pt}
\noindent\textbf{\emph{Merging $\fta_e$ into $\fta$.}}
Intuitively, $\textsc{MergeFTAs}$ (line~\ref{alg:top-level:merge}) incorporates loops from $\fta_e$ into $\subfta(\ftastate_0, \fta)$.
Figure~\ref{fig:merge} illustrates how this works. 
We first construct a state $\ftastate'$ and a $\seqprog$ transition $\ftatransition$ that connects a final state $\ftastate$ from $\fta_e$ and $\ftastate'$ to $\ftastate_0$. 
In particular, $\ftastate'$ has grammar symbol $\programsymbol$ and is associated with the following footprint: 
\[\small 
\centering
\hspace{-105pt}
\left\{ 
\begin{array}{c}

\domtrace'_1, \env \inputcontexttooutput \actiontrace'_2, \domtrace' 

\ \Big| \ 

\begin{array}{l}
\domtrace, \env \inputcontexttooutput \actiontrace', \domtrace'  \ \in \getannot(\ftastate_0) \\ 
\domtrace, \env \inputcontexttooutput \actiontrace'_1, \domtrace'_1 \in \getannot(\ftastate) \\ 
\actiontrace'_1 \traceconcat \actiontrace'_2 = \actiontrace'
\end{array}
\end{array}
\right\}
\vspace{5pt}
\]
In other words, $\ftastate'$ is constructed based on footprints from $\ftastate_0$ and $\ftastate$, according to $\seqprog$'s semantics. 
Note the constraint $\actiontrace'_1 \traceconcat \actiontrace'_2 = \actiontrace'$: we only keep loops in $\subfta(\ftastate, \fta_e)$ that yield a \emph{prefix} of $\actiontrace'$, because otherwise $\actiontrace'_1$ is not a reachable context for $\ftastate$, at least not so in this case. 
For such final states $\ftastate$ of $\fta_e$, we create the aforementioned $\ftastate'$ and $\ftatransition$, and add $\ftatransition$ together with all transitions in $\fta_e$ to the set of transitions of $\fta$. 
Observe that, if $\ftastate'$ already exists in $\fta$, then $\textsc{MergeFTAs}$ effectively connects $\ftastate$ and an existing state $\ftastate'$ in $\fta$ to $\ftastate_0$ which is also in $\fta$.

\newpara{Ranking.}
The \textsc{Ranking} procedure (line~\ref{alg:top-level:rank}) is fairly straightforward. 
It first runs $\textsc{EvaluateFTA}$ on $\fta$ using $\domtrace, \{ \inputvar \mapsto \inputdata \}$ as the context. Note that the last DOM $\dom_{m+1}$ in $\domtrace$ does not have a demonstrated action in the input action trace $\actiontrace$, because our goal is to use the synthesized program $\program$ to automate the unseen actions.  
$\textsc{EvaluateFTA}$ gives $\fta'$ containing programs with different predicted actions $\action_{m+1}$. 
We pick a program $\program$ with the smallest size and return $\program$ as the final synthesized program.

\subsection{Soundness and Completeness}\label{sec:alg:theorems}

\begin{theorem}
Given action trace $\actiontrace$, DOM trace $\domtrace$ and input data $\inputdata$, 
our synthesis algorithm always terminates. Moreover, if there exists a program in our grammar (shown in Figure~\ref{fig:alg:dsl}) that generalizes $\actiontrace$ (given $\domtrace$ and $\inputdata$) and satisfies the condition that (1) every loop has at least two iterations exhibited in $\actiontrace$ and (2) its final expression is a loop, then our synthesis algorithm (shown in Algorithm~\ref{alg:top-level}) would return a program that generalizes $\actiontrace$ (given $\domtrace$ and $\inputdata$) upon FTA saturation. 
\label{thm:alg-soundness-completeness}
\end{theorem}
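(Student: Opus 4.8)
The plan is to split the theorem into its two claims --- termination, and conditional completeness --- and prove each separately, since they rely on quite different machinery.

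The plan is to split the theorem into its two claims --- termination, and conditional completeness --- and prove each separately, since they rely on quite different ingredients.

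For termination, the key is the standing assumption from Section~\ref{sec:overview:fta} that there are only finitely many reachable contexts once the grammar is finitized by a size bound. I would first argue that the space of possible FTA states is finite: a state is a pair $(\grammarsymbol, \footprint)$ where $\grammarsymbol$ ranges over finitely many grammar symbols and $\footprint$ is a set of behaviors $\inputcontext \inputcontexttooutput \outputval$; since trace semantics is deterministic, each reachable $\inputcontext$ determines its $\outputval$, so a footprint is essentially a subset of the finite set of reachable contexts, yielding finitely many footprints and hence finitely many states and transitions. I would then observe that every step inside the main loop --- \textsc{SpeculateFTA}, \textsc{EvaluateFTA}, and \textsc{MergeFTAs} --- terminates (the first two traverse finite FTAs, and the iterative rules such as Rules~(1) and~(9) of Figure~\ref{fig:evaluation-rules} recurse only over finite lists because all programs terminate), and that \textsc{MergeFTAs} only ever adds states and transitions. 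The main loop therefore grows $\fta$ monotonically inside a finite universe, so it must saturate after finitely many iterations; the explicit timeout guard is a redundant safety net.

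For completeness I would set up a coverage invariant and induct on the loop-nesting structure of the witness program $P^{*}$. The base case is the initial FTA: I would appeal to the initialization rules (Figure~\ref{fig:fta-init-rules} and the statement rule) to show $\ftalang(\fta)$ initially contains exactly the loop-free programs reproducing $[\action_1,\dots,\action_m]$ on $[\dom_1,\dots,\dom_m]$, together with all generalizable candidate selectors. For the inductive step I would rely on a soundness-and-completeness lemma for lifted interpretation (the guarantee stated with the judgment $\evalftastate{\inputcontext}{\ftastate;\ftatransitions}{\ftastate',\ftatransitions'}$): programs sharing an output under a reachable context are merged into one state whose footprint records exactly that behavior. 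The crucial generalization step is \textsc{SpeculateFTA}, and here condition~(1) --- every loop runs at least twice in $\actiontrace$ --- is exactly what makes the innermost loop body appear as two consecutive length-$l$ blocks among the $2l$ chosen $\seqprog$ transitions, so that the anti-unification rules (Figure~\ref{fig:anti-unify-rules}) detect the repeated data/selector pattern and parametrization (Figure~\ref{fig:parametrize-rules}) reintroduces the loop variable. I would show that, because the \textsc{ForAll} at line~\ref{alg:top-level:matching-transitions} eventually ranges over every $2l$-tuple of consecutive transitions and the loop runs to saturation, each loop of $P^{*}$ is speculated and then \emph{validated} by \textsc{EvaluateFTA} (which discards speculative loops that do not reproduce the trace); so after $d$ iterations every loop of nesting depth up to $d$ is represented, and the reproducing fragment of $P^{*}$, or an observationally-equivalent program, lies in the saturated $\fta$. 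It then remains to convert ``reproduces'' into ``generalizes,'' which is where condition~(2) enters: since $P^{*}$'s final expression is a loop, running \textsc{EvaluateFTA} inside \textsc{Rank} on the full $\domtrace = [\dom_1,\dots,\dom_{m+1}]$ lets that outermost loop take one more data-dependent iteration and emit the unseen action $\action_{m+1}$, so the footprint of the corresponding final state witnesses a strict extension of $\actiontrace$, and \textsc{Rank} returns a smallest such program (were the final expression not a loop, the program would halt exactly at the trace boundary and could not generalize, which is why the hypothesis is needed).

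The main obstacle I anticipate is the inductive step for \emph{nested} loops: I must argue that the bottom-up growth of $\fta$ exposes an inner loop as a reusable sub-FTA \emph{before} the enclosing loop is speculated, and that anti-unifying over FTA \emph{states} (rather than single programs) together with the OE-based merging never drops $P^{*}$'s equivalence class. Pinning down this invariant --- that no generalizing program is lost to merging and that speculation is complete for the given $2l$ window --- is the technically delicate part; the termination argument and the base case are comparatively routine.
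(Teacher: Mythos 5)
Your proposal is correct and follows essentially the same route as the paper's proof: termination from finiteness plus monotone growth of $\fta$, soundness via the invariant that footprints faithfully record trace semantics, and completeness by induction on loop-nesting depth, where condition~(1) guarantees the two unrolled iterations of each loop body already sit in $\fta$ as $2l$ consecutive $\seqprog$ transitions so that anti-unification and parametrization recover the loop, and condition~(2) ensures the final expression emits at least one action and lets \textsc{Rank} extend past the demonstrated trace. The only substantive thing the paper adds is a formal ``rerollability'' judgment (with a reproduce/generalize mode distinction) to pin down exactly which programs the induction covers --- the invariant you correctly identify as the delicate part --- but this is a formalization of your plan rather than a different argument.
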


\subsection{Interactive Synthesis with Incremental FTA Construction}\label{sec:alg:incremental}

\newpara{Interactive programming-by-demonstration.}
Same as $\webrobot$~\cite{dong2022webrobot}, our synthesis technique can also be applied in an interactive setting: given an action trace $\actiontrace_m=[\action_1, \mydots, \action_m]$ and a DOM trace $\domtrace_{m} = [\dom_1, \mydots, \dom_{m+1}]$, we synthesize a program $\program_m$ that produces $\action'_{m+1}$ given $\domtrace_{m}$. 
If this prediction $\action'_{m+1}$ is not intended, the user  would manually demonstrate a correct $\action_{m+1}$. This leads to new traces $\actiontrace_{m+1}=[\action_1, \mydots, \action_{m+1}]$ and $\domtrace_{m+1} = [\dom_1, \mydots, \dom_{m+2}]$, both of which are fed to our algorithm again. 
This process repeats until the user obtains an intended program.

\newpara{Incremental FTA construction.}
We incrementalize our Algorithm~\ref{alg:top-level} based on the interactive setup. 
Given $\actiontrace_{m+1}$ and $\domtrace_{m+1}$, and given FTA $\fta_{m}$ constructed from $\actiontrace_m$ and $\domtrace_m$, we still build $\fta_{m+1}$ with the same guarantee as in Algorithm~\ref{alg:top-level} but not from scratch. 
Our key insight is that we can re-use the validated loops from $\fta_m$, but we need to evaluate them against $\domtrace_{m+1}$ as some of them may not reproduce $\actiontrace_{m+1}$. 
Essentially, our incremental algorithm is still based on guess-and-check, but we have a second type of speculation that directly takes sub-FTAs from $\fta_m$ as speculated FTAs --- this generates high-quality speculated FTAs more efficiently. 
$\fta_{m+1}$ is still initialized in the same way but this time using $\actiontrace_{m+1}$ and $\domtrace_{m+1}$. 
We still perform $\textsc{SpeculateFTA}$ and $\textsc{EvaluateFTA}$ on $\fta_{m+1}$ but in a much smaller scope this time. That is, 
$\ftatransition_{2l}$ must involve $\action_{m+1}$, because otherwise  $\fta_m$ had already considered it. 
The $\textsc{MergeFTAs}$ and $\textsc{Ranking}$ procedures remain the same.

}

\section{Evaluation}
\label{sec:eval}

This section describes a series of experiments
designed to answer the following questions: 
\begin{itemize}[leftmargin=*]
\setlength\itemsep{3pt}
\item 
\textbf{RQ1}: 
Can $\tool$ \emph{efficiently} synthesize programs for \emph{challenging} web automation tasks? How does it compare against state-of-the-art techniques? 
\item 
\textbf{RQ2}: 
How necessary is it to use an \emph{expressive} language in order to have a generalizable program? 
In particular, is it important to consider a large space of candidate selectors? 
\item 
\textbf{RQ3}: 
How does $\tool$ scale with respect to the number of candidate selectors considered? 
\item 
\textbf{RQ4}: 
How useful are various ideas proposed in this work? 
\end{itemize}

\newpara{Web automation tasks.}
To answer these questions, we construct a collection of web automation tasks from two different sources. 
First, we include \emph{all} \finalized{76 tasks} that were used to evaluate $\webrobot$: 
details about these tasks can be found in~\cite{dong2022webrobot}. 
Second, we curate \finalized{55} \emph{new} tasks, each of which has an English description of the task logic over one or more websites. 
All of our new tasks are curated based on real-life problems (e.g., those from the iMacros forum) and involve modern, popular websites (such as Amazon, UPS, Craigslist, IMDb) with complex webpages, \revision{whereas many of $\webrobot$'s benchmarks involve legacy websites. 
These new websites have deeply nested DOM structures which would require searching with a significantly larger space of candidate selectors.}
In particular, all \finalized{131} tasks involve data extraction, \finalized{45} of them involve data entry, \finalized{80} require navigation across webpages, and \finalized{48} involve pagination. 
Some of these tasks involve multiple types: for instance, \finalized{31} of them involve data entry, data extraction, and webpage navigation.

\newpara{Ground-truth Selenium programs.}
For each task, we obtain a ground-truth automation program $\program_{\emph{gt}}$ (using the Selenium WebDriver framework). 
In particular, we reuse the \finalized{76} ground-truth Selenium programs from $\webrobot$, and manually write \finalized{55} programs for our new tasks. 
On average, these Selenium programs consist of \finalized{43} lines of code, with a max of \finalized{147} lines. In general, it takes about half an hour to a few hours for us to write up an automation program, depending on the complexity of webpages and the task logic.

\newpara{Benchmarks.}
In our evaluation, a benchmark is defined as a tuple $(\actiontrace, \domtrace, \inputdata)$, where $\actiontrace$ is an action trace, $\domtrace$ is a DOM trace, and $\inputdata$ is input data. 
We obtain one benchmark for each task by running the corresponding $\program_{\emph{gt}}$ in the browser (given input data $\inputdata$ if $\program_{\emph{gt}}$ involves data entry) --- during execution, we record the trace $\actiontrace=[\action_1, \mydots, \action_m]$ of actions that $\program_{\emph{gt}}$ executes as well as $\actiontrace$'s corresponding DOM trace $\domtrace=[\dom_1, \mydots, \dom_m]$.\footnote{We terminate $\program_{\emph{gt}}$ after every loop from $\program_{\emph{gt}}$ has been executed for three full iterations or when $|\actiontrace|$ reaches 500, whichever yields a longer action trace. This essentially serves as a ``timeout'' to avoid running $\program_{\emph{gt}}$ for an unnecessarily long time.} 
Here, $\action_i$ is an action performed on $\dom_i$. 
Note that selectors in actions are recorded as full XPath expressions.

\newpara{Candidate selectors.}
For any full XPath selector $\fullxpath$ in an action $\action_i$, we also record a set $S$ of \emph{candidate selectors} for $\fullxpath$. 
In particular, a candidate selector $\selectorexpr$ is a \emph{concrete} selector expression (i.e., without variable $\selectorvar$) from our grammar (see Figure~\ref{fig:alg:dsl}) that locates the same DOM element on $\dom_i$ as $\fullxpath$. 
In other words, $\selectorexpr$ evaluates to $\fullxpath$ given $\dom_i$, or more formally, $\eval{\dom_i}{\selectorexpr}{\fullxpath}$. 
As mentioned in Example~\ref{ex:intro:example} and Section~\ref{sec:alg:top-level}, it is important to consider candidate selectors, since full XPath expressions typically do not generalize. 
Candidate selectors also affect the overall search space, as explained below.

\newpara{Program space.}
In this work, the search space of programs for a benchmark with action trace $\actiontrace$ is defined jointly by the grammar from Figure~\ref{fig:alg:dsl} and the candidate selectors for all actions in $\actiontrace$. 
This is because Figure~\ref{fig:alg:dsl} does not specify \emph{a priori} the space of selectors $\selectorexpr$ and predicates $\predicate$. 
Instead, they are defined once the traces $\actiontrace$ and $\domtrace$ are given: e.g., the space for $\selectorexpr$ includes all candidate selectors. It is very hard (if not impossible) to define this space a priori without the DOMs. 
Obviously, the overall search space of programs grows when we consider more candidate selectors.

\subsection{RQ1: Can \textsc{Arborist} Automate Challenging Web Automation Tasks?}
\label{sec:eval:main}

In this section, we evaluate $\tool$ against three metrics:
(i) how many tasks it can successfully synthesize intended programs for, 
(ii) how much synthesis time it takes, and
(iii) how many user-demonstrated actions it requires in order to synthesize intended programs. 
In other words, we evaluate $\tool$'s effectiveness, efficiency, and generalization power. 
We also compare $\tool$ against the state-of-the-art, especially on our new tasks that are more challenging.

\newpara{Setup.}
We use the same setup from the $\webrobot$ paper~\cite{dong2022webrobot}. In particular, given a benchmark with $\actiontrace$ and $\domtrace$ that have $m$ actions and DOMs respectively, we create $m-1$ \emph{tests}. 
The $k$th test consists of an action trace $\actiontrace_{k} = [ \action_1, \mydots, \action_k]$ and a DOM trace $\domtrace_{k} = [\dom_1, \mydots, \dom_k, \dom_{k+1}]$. 
For each action, we consider all candidate selectors in our grammar with {at most 3 predicates.} 
For example, {\small \texttt{a[1]/b[2]}} uses 2 predicates and therefore is included, but {{\small \texttt{c/d/a[1]/b[2]}}} is not considered even if it can also locate the same DOM element. 
We run $\tool$ in a way to simulate an interactive PBD process, same as how $\webrobot$ was evaluated. 
That is, we feed all tests to $\tool$ \emph{in sequence}: we run $\tool$ on the $k$-th test (with $\actiontrace_k$ and $\domtrace_k$), obtain a synthesized program $\program_k$, and check if $\program_k$ predicts $\action_{k+1}$ (i.e., $\program_k$ yields $\actiontrace_{k+1}$ given $\domtrace_k$). 
If not, $\action_{k+1}$ is counted as a \emph{user-demonstrated} action; otherwise, $\action_{k+1}$ can be correctly predicted and thus is not counted. 
We always count the first action $\action_1$ as user-demonstrated. 
Furthermore, for each test we use a 1-second timeout and record the time it takes to return $\program_k$. 
We run $\tool$ \emph{incrementally} (as described in Section~\ref{sec:alg:key-optimizations}) in this experiment: for each test, it resumes synthesis based on FTAs from previous tests/iterations. 
Finally, we inspect if the synthesized program $\program_{m-1}$ (in the last iteration) is an intended program: if so, the corresponding benchmark is counted as solved; otherwise, unsolved.

\newpara{Baselines.}
Among three baselines, we focus on the following two. 
\begin{itemize}[leftmargin=*]
\item 
$\webrobot$, which is the original tool from the $\webrobot$ work~\cite{dong2022webrobot}. 
We note that, while its underlying algorithm is complete in theory, the implementation is not. 
For example, it restricts the number of parametrized selectors (to five) when parametrizing actions during speculation. 
These heuristics \emph{seemed} to help avoid excessively slowing down the search, without severely hindering the completeness for those tasks considered in the $\webrobot$ work. 
\item 
$\webrobot$-extended, which  is an adapted version of $\webrobot$ that uses the extended language from Figure~\ref{fig:alg:dsl} (which allows $\forselectorsloop$ with $i \geq 1$).
In this experiment, we range $i$ from $1$ to $3$.  
This baseline still keeps all heuristics from $\webrobot$ (i.e., its search is incomplete).
\end{itemize}
We use the same space of candidate selectors as $\tool$ for these two baselines. 
\revision{The third baseline is $\helena$~\cite{chasins2019thesis}, which is also a PBD-based web automation tool.}

\evalmybox{\textbf{RQ1 take-away}:
\begin{itemize}[leftmargin=*]
\item 
$\tool$ can synthesize intended programs for \finalized{93.9\%} of our benchmarks. 
\item 
It typically takes $\tool$ subseconds to synthesize programs from demonstration. 
\item 
$\tool$ uses a median of \finalized{12} user-demonstrated actions to generalize.
\item 
$\tool$ can solve more benchmarks using less time than state-of-the-art techniques.
\end{itemize}
}

\begin{figure}
\centering
\begin{minipage}[b]{.49\linewidth}
\centering
\includegraphics[height=4.5cm]{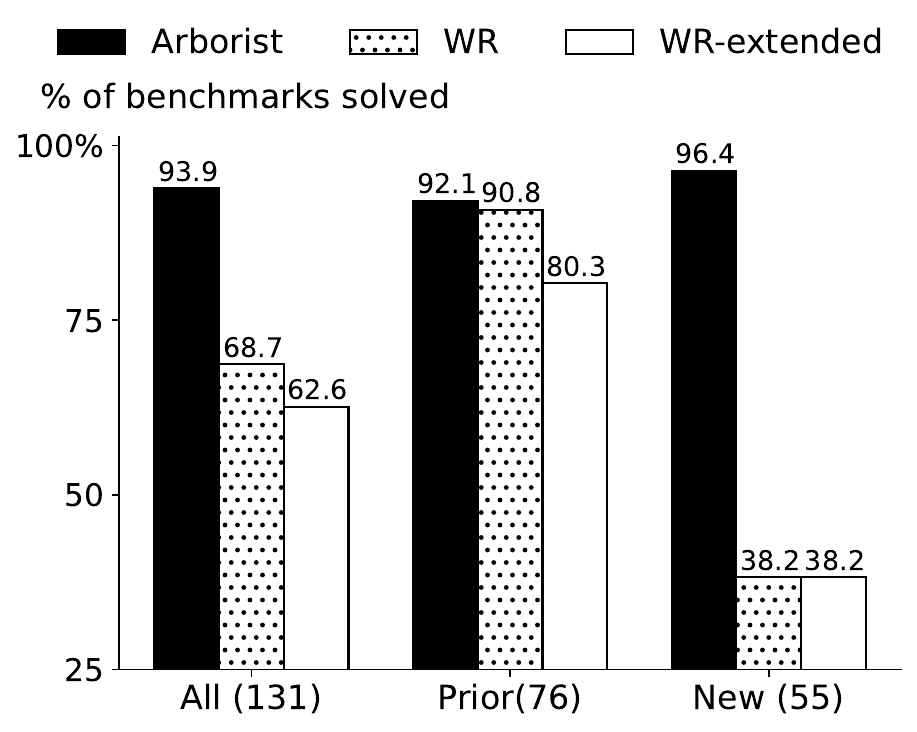}
\vspace{-5pt}
\hspace{2pt}
\caption*{(a) Percentage of benchmarks solved.}
\end{minipage}
\hspace{1pt}
\begin{minipage}[b]{.49\linewidth}
\centering
\includegraphics[height=4.45cm]{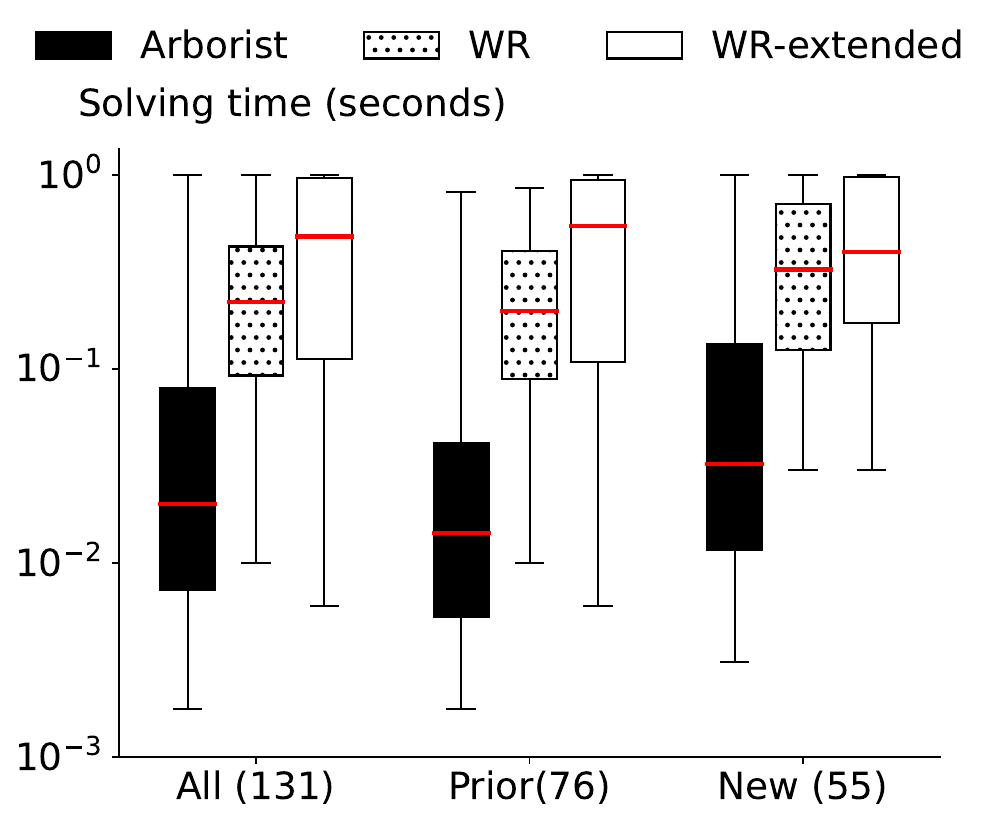}
\vspace{-5pt}
\caption*{(b) Synthesis times for solved benchmarks.}
\end{minipage}
\vspace{-5pt}
\caption{RQ1 main results. WR stands for $\webrobot$, and WR-extended means $\webrobot$-extended.}
\label{fig:eval:RQ1-main-results}
\end{figure}

\newpara{Main results.}
Figure~\ref{fig:eval:RQ1-main-results} summarizes the main results for both $\tool$ and baselines: Figure~\ref{fig:eval:RQ1-main-results}(a) shows the \emph{percentage} of benchmarks where intended programs can be synthesized, and Figure~\ref{fig:eval:RQ1-main-results}(b) reports the distribution of their corresponding synthesis times. 
Note that for both figures, we show both the \emph{aggregated} data {across} all \finalized{131} benchmarks, and \emph{separately} for prior and new tasks. 

Let us inspect Figure~\ref{fig:eval:RQ1-main-results}(a) first. 
Across all \finalized{131} benchmarks, $\tool$ can synthesize an intended program for \finalized{93.9\% (i.e., 123)} of them, whereas baselines solve at most \finalized{68.7\% (i.e., 90)}. 
This is a large gap, because baselines solve significantly fewer \emph{new tasks} (which are very challenging): in particular, $\tool$ can solve \finalized{53 out of 55 (i.e., 96.4\%), which is 2.5x more than that for baselines (i.e., 21/38.2\%).} 
These new tasks involve complex webpages and task logics, which require using nested loops and searching for selectors with more predicates in a larger space. 
This makes them significantly more challenging than prior tasks; as a result, $\webrobot$'s underlying enumeration-based algorithm fundamentally cannot scale to this level of complexity.
For prior tasks \finalized{(total 76)} in the $\webrobot$ paper (which baselines were developed and engineered on), $\tool$ still outperforms baselines by one more benchmark. 
It turns out this benchmark involves two loops that require the start index to be 2 and 3, which cannot be expressed in $\webrobot$'s original language. $\webrobot$-extended, however, solves this benchmark, since it uses a richer language. $\tool$ uses the same (extended) language, and therefore is able to synthesize an intended program as well.

In addition to solving a \emph{strict superset} of benchmarks than baselines, $\tool$ is also significantly faster. 
Figure~\ref{fig:eval:RQ1-main-results}(b) reports statistics of the synthesis times. 
In particular, for each solved benchmark, we record the maximum synthesis time across all tests, and report the distribution of these times. 
For each tool, Figure~\ref{fig:eval:RQ1-main-results}(b) \finalized{presents the quartile statistics of synthesis times.}
Across all \finalized{123 benchmarks} solved by $\tool$, \finalized{120} of them do not even use up the 1-second timeout. 
In contrast, baselines solve fewer benchmarks and time out on \finalized{15} benchmarks. Recall that both baselines have internal heuristics which unsoundly prune search space for faster search. 
We tested variants of them with these heuristics removed: the best one can solve \finalized{55} benchmarks, with \finalized{46} timeouts. 
In other words, for many benchmarks, baselines cannot exhaust the  entire search space, although they stumped upon a correct program before timeout. 
With a longer timeout of 10 seconds, baselines can only solve \finalized{9} more benchmarks. 
On the other hand, {with 1-second timeout,} $\tool$ times out on \finalized{3 benchmarks} --- they all require doubly nested loops and are among the most challenging ones. 
$\tool$ solves them all (albeit reaching timeout), while baselines solve two.

$\tool$ uses a median of \finalized{12} user-demonstrated actions, which is in line with that for $\webrobot$. 
This is reasonable, as we use the same search space for all tools in this experiment. 
While $\tool$ searches more programs than baselines (which oftentimes time out and hence search a small subset), our simple ranking heuristics seem to be quite effective at selecting generalizable programs.

\newpara{Discussion.}
$\tool$ failed to solve \finalized{8} benchmarks, including \finalized{6} from prior work (due to limitations of the web automation language, as also explained in the $\webrobot$ paper) and \finalized{2} new ones (which can be solved using a 10-second timeout).
Careful readers may wonder why $\webrobot$-extended solves {fewer} benchmarks than $\webrobot$, albeit using a richer language. This is due to the poor performance of its underlying enumeration-based algorithm: using a 1-second timeout, $\webrobot$-extend{ed} cannot even find intended programs for some benchmarks that $\webrobot$ solves. 

\newpara{Detailed results.}
Recall that $\tool$ internally has two key modules --- namely, $\textsc{SpeculateFTA}$ and $\textsc{EvaluateFTA}$ --- which typically use most of the running time. 
Among them, on average, the former takes \finalized{20\%} of the time, and the latter uses \finalized{80\%}, across all benchmarks. 
The final FTA has an average of \finalized{1714} states. 
The final synthesized programs on average have \finalized{6} expressions, and the largest one has \finalized{20}. 
Among these programs, \finalized{76} use at least one doubly-nested loop, and \finalized{12} involve at least a three-level loop.

\revision{
\newpara{$\tool$ vs. $\helena$.}
Among the 76 $\webrobot$ benchmarks, $\helena$'s PBD technique was able to synthesize intended programs from demonstrations (provided by us manually) for 13 benchmarks. 
For the 55 new tasks, $\helena$ solved 11. 
By contrast, $\tool$ solved 70 and 53 respectively. 
}

\subsection{RQ2: How Important Is It To Consider Many Candidate Selectors?}
\label{sec:eval:sparsity}

The search space considered in RQ1 uses the grammar from Figure~\ref{fig:alg:dsl} with candidate selectors of size up to {three} (measured by the number of predicates). 
This is a \emph{quite expressive} language containing at least one generalizable program for over \finalized{93\%} of our tasks (as $\tool$ was able to solve them). 
However, one might ask: is this high expressiveness really necessary? This is an important question, because if not, advanced synthesis techniques like $\tool$ may not be necessary in practice. 

\newpara{Setup.}
In this section, we investigate the impact of candidate selectors on the \emph{expressiveness} of the resulting search space: that is, given a set $S$ of candidate selectors, whether or not the corresponding program space has at least one \emph{generalizable} program.  
We choose to focus on candidate selectors in this experiment, because prior work~\cite{dong2022webrobot} has already shown the necessity of operators from the grammar in Figure~\ref{fig:alg:dsl}. 

More specifically, given a benchmark with action trace $\actiontrace$, for each $\action_i$ with full XPath selector $\fullxpath$, we use the following ways to construct the set $S$ of candidate selectors for $\action_i$. 
\begin{itemize}[leftmargin=*]
\item 
$S$ includes \emph{all} candidate selectors for $\fullxpath$ \emph{up to a certain size}. 
This is perhaps the simplest heuristic that one can design; RQ1 uses 3 as the max size which was shown to be sufficient for most of our benchmarks. 
{Therefore, in this experiment, we vary the max size from 1 to 3, and investigate how it impacts the expressiveness.}  
For each size, we run $\tool$ using the corresponding $S$, and record the number of benchmarks \emph{solved}.
Here, ``solved'' means an intended program can be found by $\tool$, which is a witness that corresponding search space is expressive enough.\footnote{Thanks to having access to a highly efficient synthesizer like $\tool$, we are able to conduct this experiment \emph{using a realistic time limit} to check if a given search space really contains a target program.}
\item 
$S$ contains candidate selectors \emph{sampled} (uniformly at random) from  all candidate selectors of size up to 3. 
\finalized{We vary $|S|$ from 1 to 1500.} 
For each $|S|$, we run $\tool$ with the corresponding $S$ and record the number of benchmarks solved; \finalized{we repeat this 8 times.}
This gives us a finer-grained, more continuous view of the impact of candidate selectors. 
\end{itemize}
While one could certainly craft other heuristics for constructing $S$, we do not consider them, because (i) it is impossible to enumerate all heuristics in the first place, and (ii) human-crafted heuristics from prior work~\cite{chasins2018rousillon,dong2022webrobot} have shown to be not as effective especially for challenging tasks. 
Another note is that, since we use $\tool$ merely as a means to check if the search space has a generalizable program, the synthesis time is not relevant in this experiment. 

For each benchmark with a given $S$, we run $\tool$ \emph{incrementally} until it reaches the end of the action trace, same as in RQ1. 
However, in RQ2, we use {10 seconds} as the timeout per iteration, rather than 1 second from RQ1 --- this allows $\tool$ to  \emph{exhaustively} search the program space such that we can more confidently conclude on its expressiveness. 
If the final synthesized program is intended, we count it as solved (meaning the corresponding search space is expressive enough).

\evalmybox{\textbf{RQ2 take-away}:

In general, an expressive search space should consider a large number of selectors (at least at the level of hundreds) for each webpage. 
}

\begin{figure}[!t]
\centering
\begin{minipage}[b]{.9\linewidth}
\centering
\includegraphics[width=\linewidth]{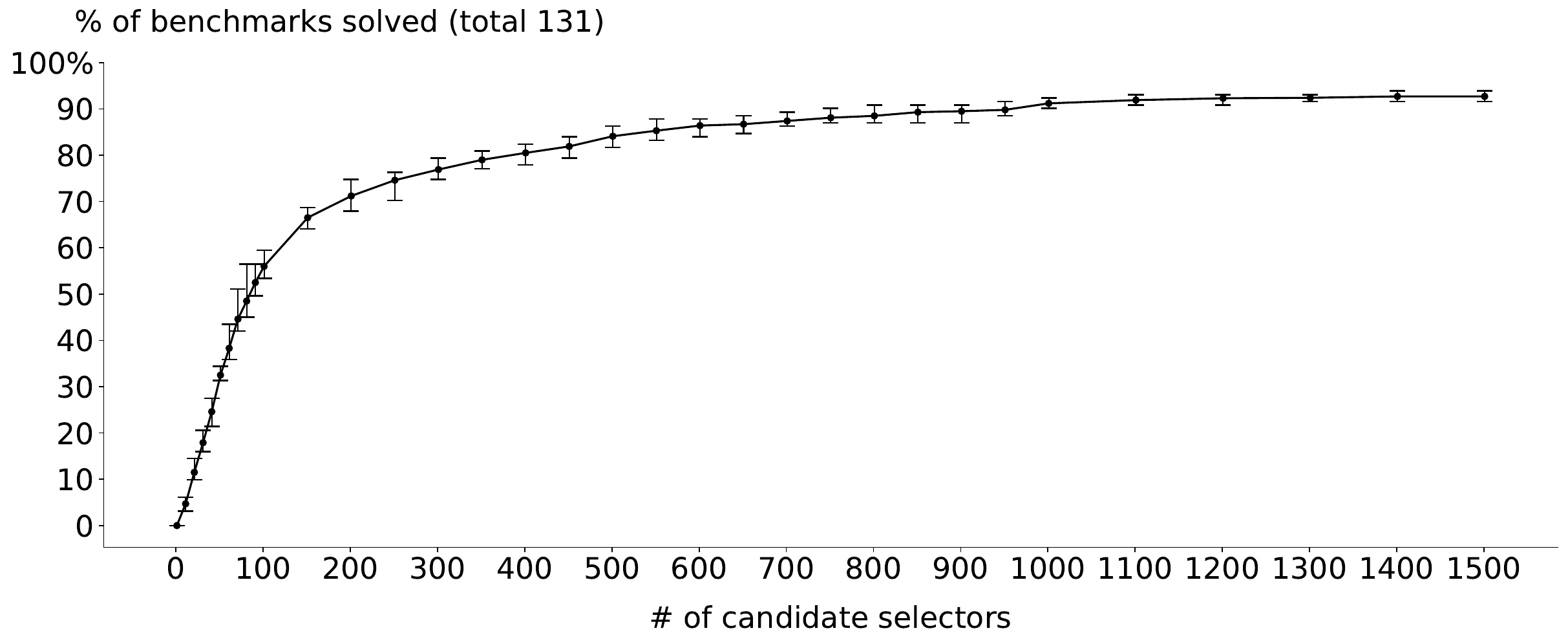}
\end{minipage}
\caption{RQ2 results. Given X candidate selectors (sampled uniformly at random from all candidate selectors of size up to 3), we have Y benchmarks whose search space is confirmed to contain an intended program.} 
\label{fig:eval:RQ2-random-results}
\end{figure}

\newpara{Results.}
If using only the full XPath expressions from the input action trace, $\tool$ manages to solve \finalized{46} benchmarks (out of \finalized{131} total), while terminating on the remaining \finalized{85} without synthesizing an intended program. 
That is, the search spaces of the \finalized{85} benchmarks are exhausted before timeout. 
This confirms that full XPath expressions typically do not generalize, and we need to consider more candidate selectors in order to solve more tasks.
If we \emph{additionally} include all candidate selectors of size 1, the number of solved benchmarks bumps up to \finalized{87}, which is \finalized{66\%} of all benchmarks. $\tool$ does not reach timeout for any of the rest, indicating the need to consider more selectors. 
Further including all candidate selectors of size 2 allows $\tool$ to solve \finalized{122} benchmarks, which is quite close to our results in RQ1. 
Again, no timeout is observed on the remaining benchmarks.
We note that the selector space at this point is already quite large: on average, we have \finalized{305} selectors of size up to two (in our grammar) per DOM across our benchmarks; the median and max are \finalized{321} and \finalized{385} respectively. 
In other words, using a simple size-based heuristic, we necessarily need to consider multiple hundreds of candidate selectors per DOM, in order to automate a decent number of tasks. 
Finally, when considering all candidate selectors of size up to 3 (same as in RQ1), \finalized{125} benchmarks are confirmed to admit intended programs; no timeouts observed. 
The remaining \finalized{6} benchmarks, upon manual inspection, cannot be solved by $\tool$'s language (to our best knowledge). 
While encouraging, this high expressiveness comes at the cost of searching among an average of \finalized{7627} selectors per DOM, with the median and max being \finalized{8066} and \finalized{9649}, across all our benchmarks.


Figure~\ref{fig:eval:RQ2-random-results} presents a finer-grained view of how the expressiveness increases when we range the number of candidate selectors from 0 to 1500 using a small increment. 
The x-axis is the number of selectors randomly sampled from the universe of all candidate selectors of size up to three. 
The $y$-axis is the percentage of benchmarks (out of all \finalized{131} benchmarks) solved by $\tool$; that is, their corresponding search spaces are confirmed to contain at least one desired program. 
For each $x$, the figure gives the \finalized{max, min, and mean} of the percentage of solved benchmarks across \finalized{8} runs. 
In total, there are only \finalized{6} benchmarks for which $\tool$ times out without returning an intended program. As also mentioned earlier, this is due to the limitation of the web automation language, rather than $\tool$ not being able to exhaust the search space. 
Therefore, we believe Figure~\ref{fig:eval:RQ2-random-results} precisely describes \emph{all} benchmarks whose program spaces contain a desired program. 


Let us inspect Figure~\ref{fig:eval:RQ2-random-results} more closely. 
First, $y$ grows pretty quickly when $x$ goes up from 0 to 100. 
At $x=100$, across \finalized{8} runs, an average of \finalized{73} benchmarks are solved, while the max and min are \finalized{78} and \finalized{70} respectively. 
There are \finalized{41} benchmarks not solved in any of the runs: notably, for all these \finalized{41} benchmarks, $\tool$ terminates before timeout \emph{without} returning a generalizable program. 
This confirms that with (only) 100 (randomly sampled) selectors, the corresponding search spaces of these \finalized{41} benchmarks do not contain any generalizable programs. 
Furthermore, \finalized{62} of the \finalized{90} solved benchmarks (in at least one run, using 100 selectors) are from prior work. 
Our observation is that these \finalized{62} benchmarks are relatively ``easier'' compared to our newly curated tasks: their task logics are relatively simpler and their solutions use relatively smaller selectors. 


On the other hand, the growth significantly slows down after $x=100$. 
We observe a pretty long tail of benchmarks that require multiple hundreds, or even more than a thousand, of selectors to be solved. 
For instance, increasing $x$ from \finalized{100 to 200} only grows $y$ from \finalized{56\% to 71\%} (in terms of average). 
In order to have another \finalized{15\%} bump, we need at least \finalized{400} selectors. 
Notably, benchmarks solved in the $[100, 1000]$ range mostly come from our new tasks: these problems have to be solved with complex selectors chosen from a larger space which are used in complex nested loops.


Finally, we seem to reach a plateau after $x=1400$, after which point increasing the number of selectors does not seem to help grow the number of solved benchmarks anymore. The maximum number of solved benchmarks we observed in this experiment (based on random sampling selectors) is \finalized{123 (i.e., 93.9\% out of 131 total).}
(In the previous experiment that includes all candidate selectors based on their size, we observed \finalized{125} benchmarks solved when using all selectors up to size 3.)

\subsection{RQ3: How Does \textsc{Arborist} Scale against Number of Candidate Selectors?}
\label{sec:eval:scalability}

Following up RQ2, one may wonder how $\tool$ would scale with more candidate selectors than those considered in RQ1 and RQ2. 
In this experiment, we \emph{stress test} $\tool$'s search algorithm given a very large number of candidate selectors.

\newpara{Setup.}
We define \emph{search efficiency} as the amount of time for a search algorithm to \emph{exhaust} a given search space. 
For $\tool$, this means: given a benchmark with action trace $\actiontrace$ and given candidate selectors for each $\action_i \in \actiontrace$, how long it takes for the FTA $\fta$ to saturate (i.e., no more new programs can be found) \emph{before timeout}. 
We choose to use the exhaustion time, rather than the time to discover a generalizable program (i.e., synthesis time), because the synthesis time oftentimes depends on the particular search order used while the exhaustion time is more stable. 
Also note that the exhaustion time does not reflect the synthesis time; the latter tends to be much shorter. 
This experiment does not evaluate $\tool$'s synthesis time; see RQ1 for its synthesis times. 

Specifically, we sample a set $S$ of candidate selectors from a universe of all candidate selectors of size up to four. 
This universe is extremely large, with an average of \finalized{139,886} selectors per DOM, with the median and max being \finalized{135,856} and \finalized{291,385} respectively. 
For each benchmark, we vary $|S|$ from \finalized{1000} to \finalized{10,000}. 
Given each $|S|$, we run $\tool$ incrementally on each benchmark, until reaching the end of its action trace. We use the same \finalized{10-second} time per iteration as in RQ2. 
However, in this experiment, we count the number of benchmarks that are \emph{exhausted}. That is, $\tool$ terminates before reaching the timeout for every iteration. 
Given $|S|$, we run $\tool$ on each benchmark \finalized{5} times, and record the \finalized{max, min, and mean} number of exhausted benchmarks across all \finalized{5} runs. 
In addition, for each exhausted benchmark, we record its max exhaustion time among all iterations. 
We also report the distribution of these times across all exhausted benchmarks and across all runs, as a way to quantify $\tool$'s search efficiency.

\evalmybox{\textbf{RQ3 take-away}:

$\tool$ can search very efficiently: in particular, it can exhaust program spaces that consider multiple thousands of selectors within at most a few seconds. 

}

\begin{figure}
\centering
\begin{minipage}[b]{.46\linewidth}
\hspace{-10pt}
\centering
\includegraphics[height=4.05cm]{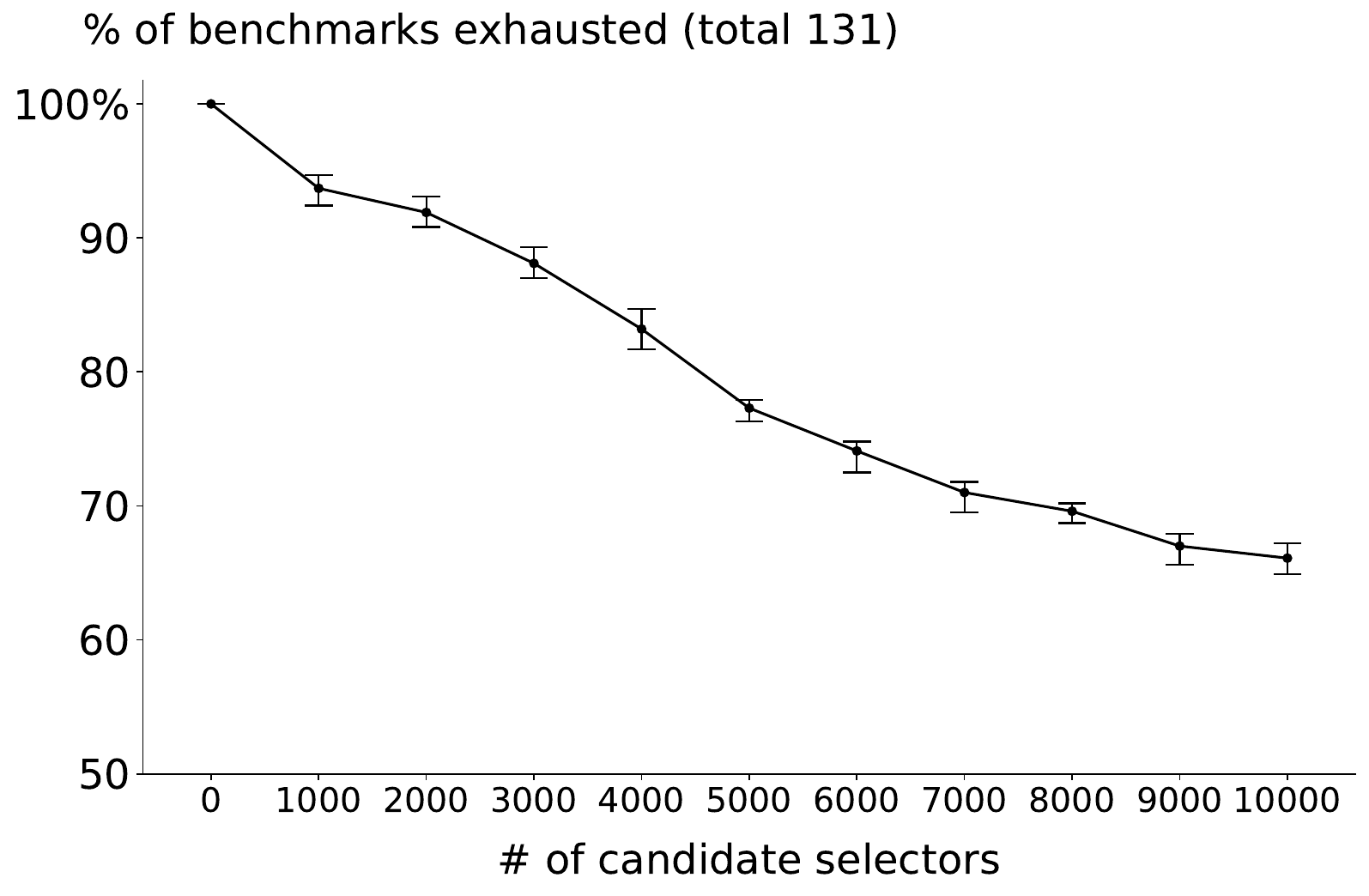}
\vspace{-5pt}
\caption*{(a) Percentage of exhausted benchmarks vs. number of sampled candidate selectors.}
\end{minipage}
\hspace{5pt}
\begin{minipage}[b]{.5\linewidth}
\centering
\hspace{-10pt}
\includegraphics[height=4cm]{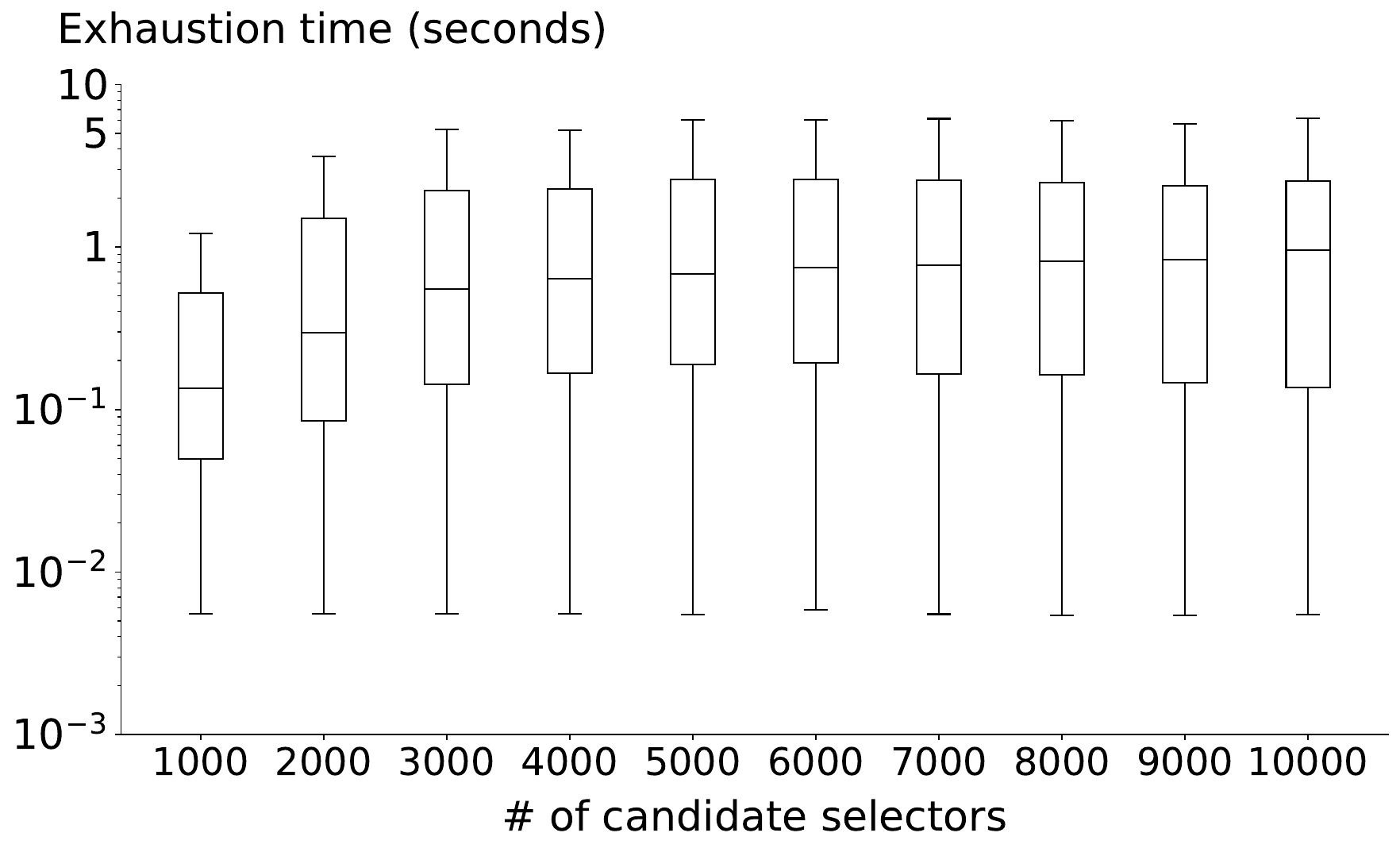}
\vspace{-5pt}
\caption*{(b) Distribution of exhaustion times against each number of sampled candidate selectors.} 
\end{minipage}
\vspace{-5pt}
\caption{RQ3 results. Selectors are sampled uniformly at random from all candidate selectors of size up to four. $\tool$ exhausts a benchmark's program space if it terminates before the timeout (i.e., 10 seconds).}
\label{fig:eval:RQ3-results}
\end{figure}

\newpara{Results.}
Figure~\ref{fig:eval:RQ3-results}(a) shows for each $|S|$, how many benchmarks $\tool$ can successfully exhaust in 10 seconds: since we have multiple runs for each $|S|$, we report \finalized{the max, median, and min} across all runs. 
Figure~\ref{fig:eval:RQ3-results}(b) presents the distribution of exhaustion times --- 
\finalized{same as in RQ1, we also report quartile statistics here in RQ3} 
--- across all exhausted benchmarks for each $|S|$. 
The key take-away message is clear: $\tool$ scales quite well as the number of selectors is increased.
For example, with 1,000 candidate selectors, $\tool$ is able to exhaust the program space for \finalized{95\%} of all \finalized{131} benchmarks, 
with a median exhaustion time of about \finalized{0.1} seconds. 
If we further increase $|S|$ from 1,000 to 5,000, we observe a small drop from \finalized{95\%} to \finalized{79\%} in terms of the percentage of benchmarks that can be exhausted, while the median exhaustion time is under 1 second. 
Finally, looking at the extreme of 10,000 selectors: \finalized{68\%} benchmarks exhausted with a median of \finalized{1}-second exhaustion time. 

While exhaustion is in general fast, it takes even less time to discover a generalizable program, as also mentioned earlier. 
\finalized{For example, among those 68\% (i.e., 89) exhausted benchmarks, $\tool$ can discover an intended program for 61 within 1 second (and 43 under 0.5 seconds).}
In contrast, $\webrobot$'s enumeration-based algorithm cannot exhaust more than \finalized{10} benchmarks (using the same 10-second timeout), even if fed with only \finalized{100} selectors. 
Furthermore, using 10,000 selectors, $\webrobot$'s median solving time (i.e., returning an intended program) is \finalized{10} seconds. 
These data points again highlight that $\tool$'s underlying search algorithm is highly efficient. 


\subsection{RQ4: Ablation Studies}
\label{sec:eval:ablation}


\newpara{Impact of observational equivalence.}
We consider a variant of $\tool$ with the OE capability disabled.
In other words, this ablation has to enumerate loop bodies (which use loop variables), and does not allow sharing across FTA states that correspond to loop bodies. 
We evaluate this variant under the RQ1 setup: it solves (i.e., generates an intended program for) \finalized{56} benchmarks, among which the median synthesis time is \finalized{1} second.
In contrast, $\tool$ solves \finalized{123} benchmarks with a median running time of \finalized{0.02 seconds} across those solved.
This again highlights the importance of OE for speeding up the search.

\newpara{Impact of incremental FTA construction.}
This ablation builds the FTA $\fta$ \emph{from scratch} given new input traces, without reusing previous FTAs. That is, the incremental FTA construction optimization in Section~\ref{sec:alg:incremental} is disabled. 
We run this ablation using the RQ1 setup. 
It solves (i.e., returns an intended program for) \finalized{105} benchmarks --- out of these solved benchmarks, \finalized{40} reach the 1-second timeout and the median running time is \finalized{0.3} seconds. 
$\tool$ solves \finalized{18} more benchmarks with only \finalized{3} timeouts in total and using a significantly less median time of \finalized{0.02} seconds.

\subsection{Case Study: Large Language Models}
\label{sec:llm}

\revision{Given the recent advances in large language models (LLMs) and exploding interests in applying them for program synthesis, we conduct a case study where we use LLMs to generate web automation programs from demonstrations. This is mainly a sanity check, and we refer interested readers to the appendix for more details. In summary, our key take-away is that LLMs (in particular, GPT-3.5~\cite{chatgpt-link}) fail to generate semantically correct programs, even for some of the simplest benchmarks. 
The model can produce unstable results, claiming a benchmark is unsolvable in one run while outputting programs in another trial. 
While these results are poor, it is well-known that LLMs are sensitive to the prompting strategy~\cite{si2022prompting}, and there might be a better method to prompt the model which we have not tried. Nevertheless, we believe these results indicate that our benchmarks are quite hard for state-of-the-art LLMs and require further research in relevant areas in order to better solve these problems. }

\section{Related Work}
\label{sec:related}

In this section, we briefly discuss some closely related work.

\newpara{Observational equivalence (OE).}
OE is a very general concept, which states the \emph{{indistinguishability}} between multiple entities based on their \emph{observed implications}. 
\citet{hennessy1980observing} proposed OE to define the semantics of concurrent programs, where two terms are  observationally equivalent whenever they are interchangeable in \emph{all observable contexts.} 
The idea of OE has also been adopted by programming-by-example (PBE)~\cite{albarghouthi2013recursive,udupa2013transit,peleg2020oopsla} to reduce a large search space of programs thereby boosting the synthesis efficiency. 
\revision{Building upon the concept of OE, our work extends OE-based reduction to also programs with local variables.} 

\newpara{Synthesis of programs with local variables.}
\revision{Programs with local variables are evaluated under \emph{non-static} contexts.} 
To our best knowledge, there are no principled approaches to effectively reduce the space of such programs. 
Prior work~\cite{wang2017synthesis,chen2021web,feser2015synthesizing,peleg2020oopsla,chen2020multi,smith2016mapreduce} typically falls back to some form of brute-force enumeration or utilizes domain-specific reasoning to prune the search space of programs with local variables (such as lambda bodies). 
We propose a principled approach --- i.e., lifted interpretation --- to reduce the space of such programs, thereby speeding up the search of them. 

\revision{RESL~\cite{peleg2020oopsla} is especially related to our work: it uses an \emph{extended context} (same as ours) when searching lambdas; however, it does not present a general approach that can reduce such programs. 
It clearly articulated the key problem of applying OE in general: computing reachable contexts and evaluating programs depend on each other. 
Our work presents a new algorithm that computes contexts and evaluates programs \emph{simultaneously}, by constructing the equivalence relation of programs \emph{while} evaluating programs which facilitates the computation of reachable contexts. 
In contrast, RESL utilizes rules (manually provided) to infer reachable contexts for lambda bodies, for a given higher-order sketch. 
For data-dependent functions (like reduce and fold), RESL falls back to enumeration. 
Our paper addresses the ``infeasible hypothesis'' from RESL (see D.3 in its appendix). 
We believe our work also opens up new ways to further study such program synthesis problems.} 

\revision{
\newpara{Lifted Interpretation.}
Our lifted interpretation idea can be viewed as a bidirectional approach: it traverses the grammar \emph{top-down} to generate reachable contexts, during which it builds up programs \emph{bottom-up} given contexts. 
Different from prior work~\cite{phothilimthana2016scaling,gulwani2011synthesizing,lee2021combining} that enumerates programs bidirectionally, we intertwines the enumeration of contexts and programs. 
Rosette~\cite{torlak2014lightweight} is related, in that they also lift the interpretation from concrete programs to symbolic programs (e.g., defined by a program sketch). 
A key distinction is that our work directly performs program synthesis and uses finite tree automata to succinctly encode the program space, instead of reducing the search problem to SMT solving. 
}

\newpara{Finite tree automata (FTAs) for program synthesis.}
During lifted interpretation, programs are  clustered into equivalence classes, succinctly compressed in an FTA. 
Compared to prior work~\cite{wang2017synthesis,wang2017program,wang2018relational,yaghmazadeh2018automated,miltner2022bottom,wang2018learning,handa2020inductive,koppel2022searching}, states in our FTAs encode \emph{context}-output behaviors, rather than \emph{input}-output behaviors of programs. 
The lifted interpretation idea is not tied to FTAs: our algorithm is developed using FTAs, but we believe other data structures (such as VSAs~\cite{gulwani2011automating} or e-graphs~\cite{willsey2021egg}), or an enumeration-based approach~\cite{peleg2020oopsla} can also be leveraged.

\newpara{Program synthesis for web automation.}
Our instantiation presents a new program synthesis algorithm for web automation. This is an important domain with a long line of work~\cite{chasins2018rousillon,dong2022webrobot,chen2022semanticon,miwa,dilogics,barman2016ringer,chasins2015browser,leshed2008coscripter,lin2009end,fischer2021diy,little2007koala} in both human-computer interaction and programming languages. 
The most related work is $\webrobot$~\cite{dong2022webrobot}: building upon its trace semantics, we further develop a novel synthesis algorithm that can automate a significantly broader range of more challenging tasks much more efficiently.

\newpara{Programming-by-demonstration (PBD).}
In particular, our algorithm is a form of programming-by-demonstration that synthesizes programs from a user-demonstrated trace of actions. 
Different from prior PBD work~\cite{chasins2018rousillon,dong2022webrobot,mo1990learning,lieberman1993tinker,lau2003programming} that is based on either brute-force enumeration or heuristic search of programs, $\tool$ uses observational equivalence to reduce the search space and leverages finite tree automata to succinctly represent all equivalence classes of programs.



\revision{\section{Conclusion}
\label{sec:conc}

We proposed \emph{lifted interpretation}, which is a general approach to reduce the space of programs with local variables, thereby accelerating program synthesis. 
We illustrated how lifted interpretation works on a simple functional language, and presented a full-fledged instantiation of it to perform programming-by-demonstration for web automation. 
Evaluation results in the web automation domain show that lifted interpretation allows us to build a synthesizer that significantly outperforms state-of-the-art techniques. 

}

\revision{\begin{acks}                            
We thank our shepherds, Hila Peleg and Nadia Polikarpova, for their extremely valuable feedback. 
We thank the POPL anonymous reviewers for their constructive comments. 
We also thank Anders Miltner, Chenglong Wang, Kasra Ferdowsi, Ningning Xie, Shankara Pailoor, Yuepeng Wang for their feedback on earlier drafts of this work. 
This work was supported by the National Science Foundation under Grant Numbers CCF-2123654 and CCF-2236233. 
\end{acks}

}

\bibliography{main}

\newpage
\appendix
\section{Lifted Interpretation Rules for Example in Section~\ref{sec:overview}}

\begin{figure}[!h]
\small
\centering
\begin{minipage}{.99\linewidth}
\centering
\arraycolsep=3pt\def\arraystretch{1.2}
\begin{centermath}
\centering
\begin{array}{cc}

(1) &  
\irule{
\begin{array}{c}
\ftatransition = \ftatransitionoperator(\ftastate_1, \mydots, \ftastate_n) \ftatransitionarrow \ftastate  \in \ftatransitions \ \ \ 
\evalftatransition{\Gamma}{\ftatransition; \ftatransitions}{\ftastate', \ftatransitions'}
\end{array}
}{
\evalftastate{\Gamma}{\ftastate; \ftatransitions}{\ftastate', \ftatransitions'}
}

\\ \\

(2) & 
\irule{
\begin{array}{c}
\evalftatransition{\Gamma}{\ftastate_1; \ftatransitions}{\ftastate'_1, \ftatransitions'_1} \ \ \ \ 
\Gamma \inputtooutput lst \in \getannot(\ftastate'_1) \ \ \ \ 
\evalftatransition{\Gamma, lst}{\ftastate_2; \ftatransitions}{\ftastate'_2, \ftatransitions'_2} \\ 
v_0 = 0 \ \ \ \ 
\Gamma \big[ acc \mapsto v_{i-1}, elem \mapsto lst[i] \big] \inputtooutput v_i \in \getannot(\ftastate'_2) \ \ \ \ 
i \in [ 1, |lst| ] \\ 
\annot = \getannot(\ftastate) \ \ \ \ 
\annot' = \annot \cup \{ \Gamma \inputtooutput v_{|lst|} \} \ \ \ \ 
\ftastate' = \newstate(\exprogsymbol, \annot')
\end{array}
}{
\evalftatransition{\Gamma}{\foldop(\ftastate_1, \ftastate_2) \ftatransitionarrow \ftastate; \ftatransitions}{\ftastate', \ftatransitions'_1 \cup \ftatransitions'_2 \cup \{ \foldop(\ftastate'_1, \ftastate'_2) \ftatransitionarrow \ftastate' \} }
}

\\  \\ 

(3) & 
\irule{
\begin{array}{c}
v_0 = 0 \ \ \ \ 
\ftastate'_0 = \ftastate  \ \ \ \ 
\ftatransitions'_0 = \ftatransitions \ \ \ \ 
i \in [1, |lst| ] \\ 
\Gamma_{i-1} = \Gamma \big[ acc \mapsto v_{i-1}, elem \mapsto lst[i] \big] \ \ \ \ 
\evalftastate{\Gamma_{i-1}}{\ftastate'_{i-1}; \ftatransitions'_{i-1}}{\ftastate'_i, \ftatransitions'_i} \ \ \ \ 
\Gamma_{i-1} \inputtooutput v_i \in \getannot(\ftastate'_i)
\end{array}
}{
\evalftatransition{\Gamma, lst}{\ftastate; \ftatransitions}{ 
\ftastate'_{|lst|}, \ftatransitions'_{|lst|} }
}

\\ \\ 

(4) & 
\irule{
\Gamma[x] = lst \ \ \ \ 
\annot = \getannot(\ftastate) \ \ \ \ 
\annot' = \annot \cup \{ \Gamma \inputtooutput lst \} \ \ \ \ 
\ftastate' = \newstate( \exlistsymbol, \annot' )
}{
\evalftatransition{\Gamma}{x \ftatransitionarrow \ftastate; \ftatransitions}{ \ftastate', \{ x \ftatransitionarrow \ftastate' \} }
}

\\ \\ 

(5) & 
\irule{
\begin{array}{c}
\evalftastate{\Gamma}{\ftastate_1; \ftatransitions}{\ftastate'_1, \ftatransitions'_1} \ \ \ \ 
\Gamma \inputtooutput v_1 \in \getannot(\ftastate'_1) \ \ \ \ 
\evalftastate{\Gamma}{\ftastate_2; \ftatransitions}{\ftastate'_2, \ftatransitions'_2} \ \ \ \ 
\Gamma \inputtooutput v_2 \in \getannot(\ftastate'_2) \\ 
\annot = \getannot(\ftastate) \ \ \ \ 
\annot' = \annot \cup \{ \Gamma \mapsto v1 + v2 \}  \ \ \ \ 
\ftastate' = \newstate( \exEsymbol, \annot') \\ 
\end{array}
}{
\evalftatransition{\Gamma}{\addop(\ftastate_1, \ftastate_2) \ftatransitionarrow \ftastate; \ftatransitions}{\ftastate', \ftatransitions'_1 \cup \ftatransitions'_2 \cup \{ \addop(\ftastate'_1, \ftastate'_2) \ftatransitionarrow \ftastate' \} }
}

\\ \\ 

(6) & 
\irule{
\begin{array}{c}
\evalftastate{\Gamma}{\ftastate_1; \ftatransitions}{\ftastate'_1, \ftatransitions'_1} \ \ \ \ 
\Gamma \inputtooutput v_1 \in \getannot(\ftastate'_1) \ \ \ \ 
\evalftastate{\Gamma}{\ftastate_2; \ftatransitions}{\ftastate'_2, \ftatransitions'_2} \ \ \ \ 
\Gamma \inputtooutput v_2 \in \getannot(\ftastate'_2) \\ 
\annot = \getannot(\ftastate) \ \ \ \ 
\annot' = \annot \cup \{ \Gamma \mapsto v1 \times v2 \}  \ \ \ \ 
\ftastate' = \newstate( \exEsymbol, \annot') \\ 
\end{array}
}{
\evalftatransition{\Gamma}{\mulop(\ftastate_1, \ftastate_2) \ftatransitionarrow \ftastate; \ftatransitions}{\ftastate', \ftatransitions'_1 \cup \ftatransitions'_2 \cup \{ \mulop(\ftastate'_1, \ftastate'_2) \ftatransitionarrow \ftastate' \} }
}

\\ \\ 

(7) & 
\irule{
\Gamma[acc] = v \ \ \ \ 
\annot = \getannot(\ftastate) \ \ \ \ 
\annot' = \annot \cup \{ \Gamma \inputtooutput v \} \ \ \ \ 
\ftastate' = \newstate( \exEsymbol, \annot' )
}{
\evalftatransition{\Gamma}{acc \ftatransitionarrow \ftastate}{ \ftastate', \{ acc \ftatransitionarrow \ftastate' \} }
}

\\ \\ 

(8) & 
\irule{
\Gamma[elem] = v \ \ \ \ 
\annot = \getannot(\ftastate) \ \ \ \ 
\annot' = \annot \cup \{ \Gamma \inputtooutput v \} \ \ \ \ 
\ftastate' = \newstate( \exEsymbol, \annot' )
}{
\evalftatransition{\Gamma}{elem \ftatransitionarrow \ftastate}{ \ftastate', \{ elem \ftatransitionarrow \ftastate' \} }
}

\\ \\ 

(9) & 
\irule{
\annot = \getannot(\ftastate) \ \ \ \ 
\annot' = \annot \cup \{ \Gamma \inputtooutput 1 \} \ \ \ \ 
\ftastate' = \newstate( \exEsymbol, \annot' )
}{
\evalftatransition{\Gamma}{1 \ftatransitionarrow \ftastate}{ \ftastate', \{ 1 \ftatransitionarrow \ftastate' \} }
}

\\ \\ 

(10) & 
\irule{
\annot = \getannot(\ftastate) \ \ \ \ 
\annot' = \annot \cup \{ \Gamma \inputtooutput 2 \} \ \ \ \ 
\ftastate' = \newstate( \exEsymbol, \annot' )
}{
\evalftatransition{\Gamma}{2 \ftatransitionarrow \ftastate}{ \ftastate', \{ 2 \ftatransitionarrow \ftastate' \} }
}

\end{array}
\end{centermath}
\caption{Lifted interpretation rules for the example from Section~\ref{sec:overview}.}
\label{fig:rebuttal:evaluate-fta-rules}
\end{minipage}
\end{figure}

\newpage
\section{Rules for FTA Evaluation in Section~\ref{sec:alg}}
\label{sec:complete-rules}

\begin{figure}[!h]
\centering
\begin{centermath}
\small 
\centering
\hspace{-5pt}
\begin{array}{cc}

(S1) &  
\irule{
\begin{array}{c}
\ftastate'_0 = \ftastate \ \ \ 
\ftatransitions'_0 = \ftatransitions \ \ \ 
\evalftastate{\inputcontext_i}
{\ftastate'_{i-1}; \ftatransitions'_{i-1}}
{\ftastate'_i, \ftatransitions'_i}  \ \ \ 
i \in [1, n]
\end{array}
}{
\evalftastate{\inputcontext_1, \mydots, \inputcontext_n}{\ftastate; \ftatransitions}{\ftastate'_n, \ftatransitions'_n}
}

\\ \\

(S2) &  
\irule{
\begin{array}{c}
\ftatransition = \ftatransitionoperator(\ftastate_1, \mydots, \ftastate_n) \ftatransitionarrow \ftastate  \in \ftatransitions \ \ \ 
\evalftatransition{\inputcontext}{\ftatransition; \ftatransitions}{\ftastate', \ftatransitions'}
\end{array}
}{
\evalftastate{\inputcontext}{\ftastate; \ftatransitions}{\ftastate', \ftatransitions'}
}

\\ \\ 

(S3) &  
\irule{
\begin{array}{c}
\domtrace = [\dom_1, \mydots, \dom_m] \ \ \ \ 
\eval{\env, \dom_1}{\selectorexpr}{\fullxpath} \\ 
\annot = \getannot(\ftastate) \\  
\annot' = \annot \cup \{ \domtrace, \env \inputcontexttooutput \fullxpath \} \ \ \ \ 
\ftastate' = \newstate(\selectorexprsymbol, \annot')
\end{array}
}{
\evalftatransition{\domtrace, \env}{ \selectorexpr \ftatransitionarrow \ftastate; \ftatransitions}{\ftastate', \{ \selectorexpr \ftatransitionarrow \ftastate' \}}
}

\\  \\ 

(S4) &  
\irule{
\begin{array}{c}
\eval{\env} {\dataexpr}{L} \ \ \ \
\annot = \getannot(\ftastate) \\ 
\annot' = \annot \cup \{ \domtrace, \env \inputcontexttooutput L \} \ \ \ \ 
\ftastate' = \newstate(\dataexprsymbol, \annot')
\end{array}
}{
\evalftatransition{\domtrace, \env}{ \dataexpr \ftatransitionarrow \ftastate; \ftatransitions}{\ftastate', \{ \dataexpr \ftatransitionarrow \ftastate' \}}
}

\\ \\

(S5) &  
\irule{
\begin{array}{c}
\evalftastate{\domtrace, \env}{\ftastate_1; \ftatransitions}{\ftastate'_1, \ftatransitions'_1} \ \ \ \ 
\domtrace, \env \inputcontexttooutput \fullxpath  \in \getannot(\ftastate'_1) \\ 
\annot = \getannot(\ftastate) \ \ \ \ 
\domtrace = [\dom_1, \mydots, \dom_m] \ \ \ \ 
\actiontrace' = [ \clickstatement(\fullxpath) ]
\\ 
\annot' = \annot \cup \{ \domtrace, \env \inputcontexttooutput \actiontrace', [\dom_2, \mydots, \dom_m] \} \ \ \ \ 
\ftastate' = \newstate(\expressionsymbol, \annot')
\end{array}
}{
\evalftatransition{\domtrace, \env}{ \clickstatement(\ftastate_1) \ftatransitionarrow \ftastate; \ftatransitions}{\ftastate', \ftatransitions'_1 \cup \{ \clickstatement(\ftastate'_1) \ftatransitionarrow \ftastate' \} }
}

\\ \\ 

(S6) &  
\irule{
\begin{array}{c}
\evalftastate{\domtrace, \env}{\ftastate_1; \ftatransitions}{\ftastate'_1, \ftatransitions'_1} \ \ \ \ 
\domtrace, \env \inputcontexttooutput \actiontrace'_1, \domtrace'_1  \in \getannot(\ftastate'_1) \\ 
\evalftastate{\domtrace'_1, \env}{\ftastate_2; \ftatransitions}{\ftastate'_2, \ftatransitions'_2} \ \ \ \ 
\domtrace'_1, \env \inputcontexttooutput \actiontrace'_2, \domtrace'_2  \in \getannot(\ftastate'_2) \\ 
\annot = \getannot(\ftastate) \ \ \ 
\annot' = \annot \cup \{ \domtrace, \env \inputcontexttooutput \actiontrace'_1 \traceconcat \actiontrace'_2, \domtrace'_2 \} \ \ \ 
\ftastate' = \newstate(\programsymbol, \annot')
\end{array}
}{
\evalftatransition{\domtrace, \env}{\seqprog(\ftastate_1, \ftastate_2) \ftatransitionarrow \ftastate; \ftatransitions}{\ftastate', \ftatransitions'_1 \cup \ftatransitions'_2 \cup \{ \seqprog(\ftastate'_1, \ftastate'_2) \ftatransitionarrow \ftastate' \}}
}

\\ \\

(S7) &  
\irule{
\begin{array}{c}
\annot = \getannot(\ftastate) \\ 
\annot' = \annot \cup \{ \domtrace, \env \inputcontexttooutput \emptytrace, \domtrace \} \\  
\ftastate' = \newstate(\programsymbol, \annot)
\end{array}
}{
\evalftatransition{\domtrace, \env}{\skipprog \ftatransitionarrow \ftastate; \ftatransitions}{\ftastate', \{ \skipprog \ftatransitionarrow \ftastate' \}}
}

\\ \\


(S8) &  
\irule{
\begin{array}{c}
\ \subfta(\ftastate, \ftatransitions) = 
 \big( \{ \ftastate \}, \ftatransitions' \big) \\
\end{array}
}{
\evalftatransition{\emptytrace, \env}{\ftastate; \ftatransitions}{\ftastate, \ftatransitions'}
}

\end{array}
\end{centermath}
\caption{$\textsc{EvaluateFTA}$ rules for loop-free statements in Section~\ref{sec:alg}.}
\end{figure}

\begin{figure}[!h]
\centering
\begin{centermath}
\small 
\centering
\begin{array}{cc}

(S9) & 
\irule{
\begin{array}{c}
\evalftastate{\domtrace, \env}{\ftastate_1; \ftatransitions}{ \ftastate'_1, \ftatransitions'_1 } \ \ \ \
\domtrace, \env \inputcontexttooutput L \in \getannot(\ftastate'_1)
\\ 
\evalftastate{\domtrace, \env, L}{\ftastate_2; \ftatransitions}{\ftastate'_2, \ftatransitions'_2} 
\ \ \ \ 
\domtrace, \env \big[ \datavar \bindsto L[i] \big] \inputcontexttooutput \actiontrace'_{i}, \domtrace'_{i} \in \getannot(\ftastate'_2)   
\ \ \ \ 
i \in [1, |L|]
\\ 
\annot = \getannot(\ftastate) 
\ \ \ \ 
\annot' = \annot \cup \{ \domtrace, \env \inputcontexttooutput \actiontrace'_1 \traceconcat \mydots \traceconcat \actiontrace'_{|L|},  \domtrace'_{|L|} \} 
\ \ \ \ 
\ftastate' = \newstate(\programsymbol, \annot')
\end{array}
}{
\evalftatransition{\domtrace, \env}{\fordataloop(\ftastate_1, \ftastate_2) \ftatransitionarrow \ftastate; \ftatransitions}{\ftastate', \ftatransitions'_1 \cup \ftatransitions'_2 \cup \{  \fordataloop(\ftastate'_1, \ftastate'_2) \ftatransitionarrow \ftastate'  \}}
}

\\ \\ 

(S10) & 
\irule{
\begin{array}{c}
\ftastate'_0 = \ftastate \ \ \ \ 
\ftatransitions'_0 = \ftatransitions \ \ \ \ 
\domtrace'_0 = \domtrace 
\\ 
\evalftastate{\domtrace'_{i-1}, \env[\datavar \bindsto \datavalue_i]}{\ftastate'_{i-1}; \ftatransitions'_{i-1}}{\ftastate'_i, \ftatransitions'_i} \ \ \ \ 
\domtrace'_{i-1}, \env[\datavar \bindsto \datavalue_i] \inputcontexttooutput \actiontrace'_i, \domtrace'_i \in \getannot(\ftastate'_i) 
\end{array}
}{
\evalftastate{\domtrace, \env, [\datavalue_1, \mydots, \datavalue_n]}{\ftastate; \ftatransitions}{\ftastate'_n, \ftatransitions'_n}
}

\\ \\

(S11) & 
\irule{
\begin{array}{c}
\evalftastate{\domtrace, \env}{\ftastate_1; \ftatransitions}{ \ftastate'_1, \ftatransitions'_1 } \ \ \ \
\domtrace, \env \inputcontexttooutput \selectorexpr/\predicate[i] \in \getannot(\ftastate'_1)
\\ 
\evalftastate{\domtrace, \env, \selectorexpr/\predicate[i]}{\ftastate_2; \ftatransitions}{\ftastate'_2, \ftatransitions'_2}  \ \ \ \ 
\domtrace'_i = \domtrace
\\ 
\domtrace'_{i+k}, \env \big[ \selectorvar \bindsto \selectorexpr/\predicate[i + k] \big] \inputcontexttooutput \actiontrace'_{i + k}, \domtrace'_{i + k} \in \getannot(\ftastate'_2)  
\ \ \ \ 
k \in [0, l-1]
\\ 
\domtrace'_{i+l}, \env \big[ \selectorvar \bindsto \selectorexpr/\predicate[i + l] \big] \inputcontexttooutput \actiontrace'_{i + l}, \domtrace'_{i + l} \not\in \getannot(\ftastate'_2)
\\ 
\annot = \getannot(\ftastate) 
\ \ \ \ 
\annot' = \annot \cup \{ \domtrace, \env \inputcontexttooutput \actiontrace'_{i} \traceconcat \mydots \traceconcat \actiontrace'_{i + l-1},  \domtrace'_{i+l-1} \} 
\ \ \ \ 
\ftastate' = \newstate(\programsymbol, \annot')
\end{array}
}{
\evalftatransition{\domtrace, \env}{\forselectorsloop(\ftastate_1, \ftastate_2) \ftatransitionarrow \ftastate; \ftatransitions}{\ftastate', \ftatransitions'_1 \cup \ftatransitions'_2 \cup \{  \forselectorsloop(\ftastate'_1, \ftastate'_2) \ftatransitionarrow \ftastate'  \}}
}

\\ \\

(S12) & 
\irule{
\begin{array}{c}
\evalftastate{\domtrace, \env\big[ \selectorvar \bindsto \selectorexpr[i] \big]}{\ftastate; \ftatransitions}{\ftastate', \ftatransitions'} \ \ \ \ 
\domtrace, \env \big[ \selectorvar \bindsto \selectorexpr[i] \big] \inputcontexttooutput \actiontrace', \domtrace' \in \getannot(\ftastate')
\\
\evalftastate{\domtrace', \env, \fullxpath/\predicate[i+1]}{\ftastate'; \ftatransitions'}{\ftastate', \ftatransitions'}

\end{array}
}{
\evalftastate{\domtrace, \env, \selectorexpr/\predicate[i]}{\ftastate; \ftatransitions}{\ftastate', \ftatransitions'}
}

\\ \\

(S13) & 
\irule{
\subfta(\ftastate, \ftatransitions) = ( \{ \ftastate \}, \ftatransitions'  )
}{
\evalftastate{\emptytrace, \env, \selectorexpr/\predicate[i]}{\ftastate; \ftatransitions}{\ftastate, \ftatransitions'}

}

\\ \\

(S14) & 
\irule{
\domtrace = [\dom_1, \mydots, \dom_m] \ \ \ \
\selectorexpr/\predicate[i] \textnormal{ is not a valid selector on } \dom_1
\ \ \ \ 
\subfta(\ftastate, \ftatransitions) = ( \{ \ftastate \}, \ftatransitions') 
}{
\evalftastate{\domtrace, \env, \selectorexpr/\predicate[i]}{\ftastate; \ftatransitions}{\ftastate, \ftatransitions'}
}

\\ \\

(S15) & 
\irule{
\begin{array}{c}
\evalftastate {\emptytrace, \domtrace, \env} {(\ftastate_1, \ftastate_2)} {\ftastate'_1, \ftastate'_2, \ftatransitions', \actiontrace', \domtrace'}
\\
\annot = \getannot(\ftastate) \ \ \ 
\annot' = \annot \cup \{ \domtrace, \env \inputcontexttooutput \actiontrace', \domtrace' \} \ \ \ 
\ftastate' = \newstate(\programsymbol, \annot')
\end{array}
}{
\evalftatransition{\domtrace, \env}{\whileloop(true, \ftastate_1, \ftastate_2) \ftatransitionarrow \ftastate; \ftatransitions}{\ftastate', \ftatransitions' \cup \{  \whileloop(true, \ftastate'_1, \ftastate'_2) \ftatransitionarrow \ftastate'  \}}
}

\\ \\

(S16) & 
\irule{
\begin{array}{c}
\evalftastate{\domtrace, \env}{\ftastate_1; \ftatransitions}{ \ftastate'_1, \ftatransitions'_1 } \ \ \ \
\domtrace, \env \inputcontexttooutput \actiontrace'_{1},  \domtrace'_{1} \in \getannot(\ftastate'_1)
\\
\evalftastate{\domtrace'_1, \env}{\ftastate_2; \ftatransitions}{ \ftastate'_2, \ftatransitions'_2 } \ \ \ \
\domtrace'_1, \env \inputcontexttooutput \fullxpath \in \getannot(\ftastate'_2)\ \ \ \
\actiontrace'_2 = [ \clickstatement(\fullxpath) ]
\\
\domtrace'_1 = [\dom_1, \dom_2, \mydots, \dom_m] \ \ \ \
\domtrace' = [\dom_2, \mydots, \dom_m] \ \ \ \
\actiontrace' = \actiontrace \traceconcat \actiontrace'_1 \traceconcat \actiontrace'_2
\\
\ftatransitions' = \ftatransitions'_1 \cup \ftatransitions'_2\ \ \ \
\evalftastate{\actiontrace', \domtrace', \env}{ (\ftastate'_1, \ftastate'_2); \ftatransitions' } { \ftastate''_1, \ftastate''_2, \ftatransitions'', \actiontrace'', \domtrace'' }
\end{array}
}{
\evalftatransition{\actiontrace, \domtrace, \env}{(\ftastate_1, \ftastate_2); \ftatransitions}{\ftastate''_1, \ftastate''_2, \ftatransitions'', \actiontrace'', \domtrace''}
}

\\ \\

(S17) & 
\irule{}{
\evalftatransition{\actiontrace, \emptytrace, \env}{(\ftastate_1, \ftastate_2); \ftatransitions}{\ftastate_1, \ftastate_2, \ftatransitions, \actiontrace, \emptytrace}
}

\\ \\

(S18) & 
\irule{
\begin{array}{c}
\evalftastate{\domtrace, \env}{\ftastate_1; \ftatransitions}{ \ftastate'_1, \ftatransitions'_1 }\ \ \ \
\domtrace, \env \inputcontexttooutput \actiontrace'_{1},  \domtrace'_{1} \in \getannot(\ftastate'_1)
\\
\domtrace'_1 = \emptytrace \ \ \ \
\subfta(\ftastate_2, \ftatransitions) =  \big( \{ \ftastate_2 \}, \ftatransitions_2' \big)
\end{array}
}{
\evalftatransition{\actiontrace, \domtrace, \env}{(\ftastate_1, \ftastate_2); \ftatransitions}{\ftastate_1, \ftastate_2, \ftatransitions'_1 \cup \ftatransitions'_2, \actiontrace \traceconcat \actiontrace'_1, \emptytrace}
}

\\ \\

(S19) & 
\irule{
\begin{array}{c}
\evalftastate{\domtrace, \env}{\ftastate_1; \ftatransitions}{ \ftastate'_1, \ftatransitions'_1 }\ \ \ \
\domtrace, \env \inputcontexttooutput \actiontrace'_{1},  \domtrace'_{1} \in \getannot(\ftastate'_1)
\\
\evalftastate{\domtrace'_1, \env}{\ftastate_2; \ftatransitions}{ \ftastate'_2, \ftatransitions'_2 } \ \ \ \
\domtrace'_1, \env \inputcontexttooutput \fullxpath \in \getannot(\ftastate'_2)
\\
\domtrace'_1 = [\dom_1, \mydots, \dom_m] \ \ \ \
\fullxpath \textnormal{ is not a valid selector on } \dom_1 \ \ \ \
\subfta(\ftastate_2, \ftatransitions) =  \big( \{ \ftastate_2 \}, \ftatransitions_2' \big)
\end{array}
}{
\evalftatransition{\actiontrace, \domtrace, \env}{(\ftastate_1, \ftastate_2); \ftatransitions}{\ftastate_1, \ftastate_2, \ftatransitions'_1 \cup \ftatransitions'_2, \actiontrace \traceconcat \actiontrace'_1, \domtrace'_1}
}

\end{array}
\end{centermath}
\caption{$\textsc{EvaluateFTA}$ rules for loopy statements in Section~\ref{sec:alg}.}
\end{figure}

\newpage
\section{Theorems and Proofs}
In this section, we prove the soundness and completeness theorems of our algorithm given a fixed DOM trace $\domtrace$ and an action trace $\actiontrace$. 

For simplicity, we use $\seqprog(S_1, \mydots, S_k)$ to denote $\seqprog(S_1, \mydots, \seqprog(S_k, \emph{skip}))$ and use $\domtrace_{[i,j]}$ to denote a slice of the DOM trace $[\dom_i,\mydots, \dom_j]$ (and same for action traces $\actiontrace$).
We also define $P_1\listconcat P_2$, the concatenation of two programs $P_1=\seqprog(S_1, \mydots, S_{k_1})$ and $P_2=\seqprog(S'_1, \mydots, S'_{k_2})$,
 to be $\seqprog(S_1, \mydots, S_{k_1},S'_1,\mydots, S_{k_2})$.

\subsection{Soundness and completeness of components}

In this section, we prove several lemmas about \textsc{EvaluateFTA} and \textsc{MeregFTAs}, which are used in both the soundness theorem and the completeness theorem.

First, we show that the footprints correctly capture the semantics of programs represented by $\fta$. To do this, we define a concept called ``well-annotatedness''.

\begin{definition}
    We call an FTA $\fta$ \textit{well-annotated} if $\fta$ satisfies the condition that, for any state $\ftastate = (s, \{C_1\bindsto O_1, \mydots, C_n\bindsto O_n\})$ in $\fta$ that represents a program $\program$ (that is, $\program \in \ftalang(\subfta(\ftastate, \ftatransitions))$ where $\ftatransitions$ is $\fta$'s transitions), we have $\eval{C_i}{\program }{O_i}$.
\end{definition}

Both \textsc{EvaluateFTA} and \textsc{MergeFTAs} are sound with respect to well-annotatedness; we prove their soundness as lemmas below. These lemmas are later used to establish the soundness of our main algorithm, since they imply $\fta$, which is constructed using \textsc{EvaluateFTA} and \textsc{MergeFTAs}, is well-annotated.

\begin{lemma}[Soundness of \textsc{EvaluateFTA}]
    \label{proof:eval}
    Given an unannotated FTA $\fta_s$ and a list of contexts $[C_1, \mydots, C_n]$, $\textup{\textsc{EvaluateFTA}}(\fta_s, [C_1, \mydots, C_n])$ returns a well-annotated FTA.
\end{lemma}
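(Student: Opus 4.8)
The plan is to prove Lemma~\ref{proof:eval} by structural induction on the derivation of the lifted-interpretation judgment $\evalftastate{\inputcontext}{\ftastate; \ftatransitions}{\ftastate', \ftatransitions'}$, establishing as the induction hypothesis exactly the well-annotatedness condition: whenever $\evalftastate{\inputcontext}{\ftastate; \ftatransitions}{\ftastate', \ftatransitions'}$ holds and $\program \in \ftalang(\subfta(\ftastate', \ftatransitions'))$, we have $\eval{\inputcontext}{\program}{\outputval}$ where $\inputcontext \inputcontexttooutput \outputval$ is the behavior added to $\ftastate'$'s footprint. Since Rule~(S1) reduces multi-context evaluation to a sequence of single-context evaluations that only grow the footprint, it suffices to prove the single-context case carefully; the multi-context result then follows because each $\inputcontext_i \inputcontexttooutput \outputval_i$ behavior is established independently by the corresponding sub-derivation, and earlier behaviors are preserved (each rule copies $\annot = \getannot(\ftastate)$ into $\annot'$).

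First I would set up the induction over the \textsc{EvaluateFTA} rules (S2)--(S19). The key observation is that every rule is structured so that its conclusion behavior $\inputcontext \inputcontexttooutput \outputval$ for $\ftastate'$ is computed \emph{compositionally} from the behaviors of the recursively-evaluated argument states $\ftastate'_1, \mydots, \ftastate'_n$, following precisely the trace-semantics rule for the corresponding operator in Figure~\ref{fig:alg:dsl}. For the base cases (selector Rule~(S3), data-expression Rule~(S4), $\skipprog$ Rule~(S7)), the added behavior is read off directly from the operator semantics (e.g.\ $\eval{\env, \dom_1}{\selectorexpr}{\fullxpath}$), so well-annotatedness is immediate. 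For a composite rule such as $\seqprog$ Rule~(S6), I would apply the induction hypothesis to the premises $\evalftastate{\domtrace, \env}{\ftastate_1}{\ftastate'_1, \ftatransitions'_1}$ and $\evalftastate{\domtrace'_1, \env}{\ftastate_2}{\ftastate'_2, \ftatransitions'_2}$ to conclude that any program accepted at $\ftastate'_1$ evaluates to $\actiontrace'_1, \domtrace'_1$ under $\domtrace, \env$ and any program at $\ftastate'_2$ evaluates to $\actiontrace'_2, \domtrace'_2$ under $\domtrace'_1, \env$; then a program $\seqprog(P_1, P_2)$ accepted at $\ftastate'$ satisfies $\eval{\domtrace, \env}{\seqprog(P_1, P_2)}{\actiontrace'_1 \traceconcat \actiontrace'_2, \domtrace'_2}$ by the \textsc{Seq} semantics rule, matching the behavior placed in $\annot'$.

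The main obstacle will be the loop rules, especially the $\fordataloop$ pair (S9)/(S10) and the helper judgment they use, because there the output state $\ftastate'_n$ carries behaviors accumulated across \emph{all} iterations, and the context for iteration $i$ depends on the output DOM trace produced by iteration $i-1$. I would prove an auxiliary lemma by induction on the number of iterations $n$: the iterated judgment in Rule~(S10) guarantees that for each $i$, the state $\ftastate'_i$ is well-annotated with respect to the per-iteration context $\domtrace'_{i-1}, \env[\datavar \bindsto \datavalue_i]$, exactly as the \textsc{ForData-2} semantics rule threads the DOM trace through iterations. The subtlety is that the \emph{same} body state is re-evaluated against progressively refined states $\ftastate'_{i-1}$, so I must check that a single program $P$ accepted at the final state $\ftastate'_n$ is accepted at every intermediate $\ftastate'_i$ and that its per-iteration outputs are precisely the ones recorded; this follows because parametrization produces one body sub-FTA whose language is iteration-independent, and the footprints only accumulate. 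Concatenating the per-iteration action traces and taking the final DOM trace then matches the behavior written into $\ftastate'$ by Rule~(S9), completing the case. The remaining loop rules ($\forselectorsloop$ (S11)--(S14), $\whileloop$ (S15)--(S19)) follow the same pattern with analogous termination conditions, so I would treat them uniformly and only highlight the differing base cases where the DOM trace is exhausted or the selector becomes invalid.
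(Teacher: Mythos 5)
Your proposal is correct and follows essentially the same route as the paper's proof: an outer induction on the context list $[C_1,\mydots,C_n]$ (justified by Rule~(S1) folding single-context evaluation) combined with an inner induction on the single-context evaluation rules, checking that each rule only adds behaviors consistent with the trace semantics. You work out considerably more detail than the paper's two-sentence sketch, especially for the loop rules, but the decomposition and key invariant are the same.
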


\begin{proof}
    The judgement $\evalftastate{C_1,\mydots,C_n}{q;\transitions}{q',\transitions'}$
    ``folds'' over the judgement over a single context,
    so we can inductively prove the lemma by inducting on the list $C_1,\mydots,C_n$.
    The base case holds since any unannotated FTA is a well-annotated FTA.
    Moreover, it can be proved by induction on the rules
    that the judgement over a single context, $\evalftastate{C}{q;\transitions}{q',\transitions'}$, only adds footprints consistent with the trace semantics. Therefore,
    if $( \{ q \},\transitions)$ is well-annotated,
    $( \{ q' \},\transitions')$ is 
    well-annotated as well, which justifies the inductive case. 
\end{proof}

\begin{lemma}[Soundness and monotonicity of \textsc{MergeFTAs}]
    \label{proof:merge}
    Given two well-annotated FTAs $\fta$ and $\fta_e$ and a state $q$ in $\fta$, $\fta_{\textit{new}}=\textup{\textsc{MergeFTAs}}(\fta, q, \fta_e)$ is a well-annotated FTA satisfying $\ftalang(\fta)\subseteq \ftalang(\fta_{\textit{new}})$.
\end{lemma}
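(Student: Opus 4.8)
The plan is to decompose the statement into its two independent halves and prove them separately, since they lean on different features of $\textsc{MergeFTAs}$. The first half, $\ftalang(\fta) \subseteq \ftalang(\fta_{\textit{new}})$, follows from the fact that the construction in Section~\ref{sec:alg:merging-and-ranking} is \emph{purely additive}: it introduces the remainder states $\ftastate'$, the new transitions $\seqprog(\ftastate, \ftastate') \ftatransitionarrow q$ (one per final state $\ftastate$ of $\fta_e$), and the imported transitions of $\fta_e$, but it never deletes a state or transition of $\fta$ and never unmarks a final state. Hence the states, transitions, and final states of $\fta$ are all contained in those of $\fta_{\textit{new}}$, and a one-line induction on accepting runs shows that any run witnessing $t \in \ftalang(\fta)$ survives verbatim into $\fta_{\textit{new}}$, giving the inclusion.

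For well-annotatedness I would fix an arbitrary state $\ftastate^\star = (s, \{C_1 \bindsto O_1, \mydots, C_k \bindsto O_k\})$ of $\fta_{\textit{new}}$ and a program $P \in \ftalang(\subfta(\ftastate^\star, \ftatransitions_{\textit{new}}))$, and prove $\eval{C_j}{P}{O_j}$ for all $j$ by structural induction on $P$, case-splitting on the top-level transition $\ftatransition = \ftatransitionoperator(\ftastate_1,\mydots,\ftastate_n)\ftatransitionarrow\ftastate^\star$ used to accept $P$. The engine of the inductive step is the transition-inclusion condition of Section~\ref{sec:overview:fta}: any transition of an FTA semantically composes the footprints of its source states into the footprint of its target, in accordance with the operator's trace semantics. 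This invariant holds for every transition inherited from $\fta$ and from $\fta_e$ by their construction (and is unaffected by the merge, since FTA states are canonicalized by their $(\text{symbol}, \text{footprint})$ pair, so any state shared by $\fta$ and $\fta_e$ carries one and the same footprint). Combining this with the induction hypothesis applied to each child $P_i$ at $\ftastate_i$, the relevant semantics rule for $\ftatransitionoperator$ lifts the children's realized behaviors to the required behavior of $P$ at $\ftastate^\star$.

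The only genuinely new case is $\ftatransition = \seqprog(\ftastate, \ftastate') \ftatransitionarrow q$, where $\ftastate^\star = q$ and $P = \seqprog(P_1, P_2)$ with $P_1 \in \ftalang(\subfta(\ftastate, \ftatransitions_{\textit{new}}))$ and $P_2 \in \ftalang(\subfta(\ftastate', \ftatransitions_{\textit{new}}))$. To close it I would unfold the defining footprint of $\ftastate'$: for each behavior $(\domtrace, \env) \bindsto (\actiontrace', \domtrace')$ of $q$, the construction supplies a split $\actiontrace' = \actiontrace'_1 \traceconcat \actiontrace'_2$ together with $(\domtrace, \env) \bindsto (\actiontrace'_1, \domtrace'_1) \in \getannot(\ftastate)$ and $(\domtrace'_1, \env) \bindsto (\actiontrace'_2, \domtrace') \in \getannot(\ftastate')$. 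By the induction hypothesis on the strict subterms $P_1$ and $P_2$, I get $\eval{\domtrace, \env}{P_1}{\actiontrace'_1, \domtrace'_1}$ and $\eval{\domtrace'_1, \env}{P_2}{\actiontrace'_2, \domtrace'}$, and the $\textsc{Seq}$ rule of Figure~\ref{fig:alg:dsl} then derives exactly $\eval{\domtrace, \env}{\seqprog(P_1, P_2)}{\actiontrace', \domtrace'}$; since $q$'s footprint is not modified by the merge, this establishes all of $q$'s behaviors for $P$ and completes the induction.

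The step I expect to be the main obstacle is the bookkeeping around state identity and the freshly created remainder state $\ftastate'$. Because states are identified by their $(\text{symbol}, \text{footprint})$ pair, $\ftastate'$ either coincides with a pre-existing state of $\fta$ — in which case I must confirm that unioning $\fta_e$'s transitions into $\fta$ cannot let any program violate that shared state's footprint — or it is genuinely new, in which case $\subfta(\ftastate', \ftatransitions_{\textit{new}})$ has empty language, the quantified behaviors hold vacuously, and the added $\seqprog$ transition simply contributes no programs. Phrasing the induction so that both sub-cases are handled uniformly, and checking that merging two transition sets preserves the footprint-composition invariant precisely at the states common to $\fta$ and $\fta_e$, is where the care is needed; the $\textsc{Seq}$-semantics computation itself is routine.
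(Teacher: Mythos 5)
Your proposal is correct and follows essentially the same route as the paper, whose own proof is only a two-sentence sketch observing that the merge is purely additive (giving the language inclusion) and that the new $\seqprog$ transition and remainder state are built to respect the footprints of $\ftastate_0$ and $\ftastate$ per $\seqprog$'s semantics (giving well-annotatedness). Your version simply fills in the structural induction and the state-identity bookkeeping that the paper leaves implicit.
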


\begin{proof}
    Well-annotatedness of $\fta_\textit{new}$ can be shown by noticing the semantics of \textsc{MergeFTAs} respects the footprints of existing states and the semantics of $\seqprog$. We also have $\ftalang(\fta)\subseteq \ftalang(\fta_\textit{new})$ since \textsc{MergeFTAs} only adds new transitions to $\fta$.
\end{proof}

However, soundness itself does not fully characterize the behavior of \textsc{EvaluateFTA}. For example, both an unannotated FTA and an empty FTA are always well-annotated, but they are not the intended output of \textsc{EvaluateFTA}. Therefore, we additionally prove a completeness theorem of \textsc{EvaluateFTA}, which precisely describes what programs the evaluated FTA should represent and what annotations it should have:

\begin{lemma}[Completeness of \textsc{EvaluateFTA}]
    \label{proof:eval}
    Given an unannotated FTA $\fta_s$ and a list of contexts $[C_1, \mydots, C_n]$, let $\fta_e=\textup{\textsc{EvaluateFTA}}(\fta_s, [C_1, \mydots, C_n])$. We have $\ftalang(\fta_s)=\ftalang(\fta_e)$.
    Moreover, each final state $q\in Q_f$ in $\fta_e$ has contexts $[C_1,\mydots,C_n]$
\end{lemma}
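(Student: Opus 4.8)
The plan is to establish the two claims separately --- the language equality $\ftalang(\fta_s) = \ftalang(\fta_e)$, and the footprint-coverage claim that every final state of $\fta_e$ carries all of $\inputcontext_1, \ldots, \inputcontext_n$ in the domain of its footprint --- but both will flow from a single strengthened invariant about the evaluation judgment, proved by induction. So I would first set up that invariant for the single-context judgment $\evalftastate{\inputcontext}{\ftastate; \ftatransitions}{\ftastate', \ftatransitions'}$ and then lift it to the multi-context judgment through rule (S1), which (as noted in the soundness lemma) merely ``folds'' the single-context judgment over the list $\inputcontext_1, \ldots, \inputcontext_n$.

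For the language equality I would prove the two inclusions by different arguments. The inclusion $\ftalang(\fta_e) \subseteq \ftalang(\fta_s)$ is the routine direction: every rule that emits a transition into $\ftatransitions'$ emits one of the form $\transitionoperator(\ftastate'_1, \ldots, \ftastate'_n) \transitionarrow \ftastate'$ whose operator and arity match a transition $\transitionoperator(\ftastate_1, \ldots, \ftastate_n) \transitionarrow \ftastate$ of $\ftatransitions$ being evaluated. Hence there is a label-preserving map from states of $\fta_e$ to states of $\fta_s$ under which any accepting run in $\fta_e$ projects to an accepting run in $\fta_s$, so each term accepted by $\fta_e$ is accepted by $\fta_s$. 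The converse inclusion $\ftalang(\fta_s) \subseteq \ftalang(\fta_e)$ carries the real content: I would strengthen it to the claim that for every state $\ftastate$ of $\fta_s$, every context $\inputcontext$, and every program $\program \in \ftalang(\subfta(\ftastate, \ftatransitions))$, the single-context judgment admits a derivation evaluating $\ftastate$ to some $\ftastate'$ with $\program \in \ftalang(\subfta(\ftastate', \ftatransitions'))$ --- i.e. evaluation is total and retains every program. This is proved by structural induction on $\program$, with one case per transition rule (S2)--(S19); in each case the induction hypothesis supplies evaluated states for the argument subprograms, and the compositional output computation (looking up the arguments' footprints) furnishes the behavior attached to $\ftastate'$. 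Since $\newstate$ groups exactly the programs sharing a footprint, $\program$ lands in the unique $\ftastate'$ whose footprint matches its outputs. Chaining this over $\inputcontext_1, \ldots, \inputcontext_n$ via (S1), together with the inclusion already shown, yields equality.

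For the footprint-coverage claim I would argue by induction on $n$. Rule (S1) forces any multi-context derivation to be a chain $\ftastate = \ftastate'_0 \rightsquigarrow \ftastate'_1 \rightsquigarrow \cdots \rightsquigarrow \ftastate'_n = \ftastate'$, whose $i$-th link is a single-context evaluation under $\inputcontext_i$. Every transition rule sets the new footprint as $\annot' = \annot \cup \{ \inputcontext_i \inputcontexttooutput \outputval \}$: it is monotone and adds a behavior for precisely the context $\inputcontext_i$ under which it fires. Thus after the $i$-th link the footprint of $\ftastate'_i$ contains a behavior for each of $\inputcontext_1, \ldots, \inputcontext_i$; taking $i = n$ shows the terminal state's footprint has all of $\inputcontext_1, \ldots, \inputcontext_n$ in its domain, which is exactly the claim since $\ftastates'_f$ consists of precisely these terminal states.

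The main obstacle is the totality-and-retention step of the backward language inclusion, and in particular the loop rules (S9)/(S10) for $\fordataloop$ together with their $\forselectorsloop$ and $\whileloop$ counterparts. There the evaluation is itself an iteration whose $i$-th step runs under a context built from the output DOM trace of step $i-1$, mirroring the data-dependent trace semantics of the loop body; I must show this inner iteration terminates and, for each program in the loop-body sub-FTA, produces a state retaining it. I would discharge this using the paper's standing assumptions that all programs terminate and that the set of reachable contexts is finite, making the inner induction well-founded on the length of the finite list being folded; the footprints computed at each step agree with the trace semantics by the soundness lemma for \textsc{EvaluateFTA}. The genuinely delicate bookkeeping is the non-determinism of the judgment --- a single source state evaluates to several target states, one per observed footprint --- so carefully tracking which target state a fixed program falls into is the crux of the completeness argument.
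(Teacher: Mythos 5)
Your proposal is correct and follows essentially the same route as the paper's proof: establish that a single-context evaluation step preserves the language of the sub-FTA and extends the footprint domain by exactly the context under which it fires, then lift both invariants over the list $\inputcontext_1, \mydots, \inputcontext_n$ via the fold structure of rule (S1). The paper states these two invariants without unpacking them, whereas you usefully separate the language equality into its easy (projection) and hard (totality-and-retention) directions and correctly identify the loop rules and the non-determinism of the judgment as the delicate points.
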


\begin{proof}
We know that 
    $\fta_e$ is the union FTA of each $\fta'_i=({q'_i}, \transitions'_i)$ where $\evalftastate{C}{q_i;\transitions}{q_i',\transitions'}$ and $q_i$ is a final state of $\fta_s$.
    It can be proved that rules over the judgement $\evalftastate{C}{q_i;\transitions}{q_i',\transitions'}$ for a single context $C$ preserves the invariant that $\ftalang(\fta_i)=\ftalang(\fta'_i)$ where $\fta_i=( \{ q_i \},\transitions)$.
    By induction the invariant is also preserved for the judgement over lists of contexts $C_1,\mydots,C_n$,
    so $\ftalang(\fta) = \ftalang(\fta_e)$.

    Moreover, it can be shown that 
    $\evalftastate{C}{q_i;\transitions}{q_i',\transitions'}$ maintains the invariant that 
    \[
    \inputcontexts(q'_i)=\inputcontexts(q_i)\cup\{C\}
    \]
    
    and therefore by induction $\evalftastate{C_1,\mydots,C_n}{q_i;\transitions}{q_i',\transitions'}$
    satisfy 
    \[\inputcontexts(q'_i)=\inputcontexts(q_i)\cup\{C_1,\mydots, C_n\}\] 
    
    which proves the claim as $\fta_s$ is unannotated.
\end{proof}

\subsection{Soundness of the synthesis algorithm}

Now, we are ready prove the soundness of our main algorithm.

\begin{theorem}[Soundness]\label{proof:main:soundness}
    Given input data $I$, input action trace $\actiontrace$, and input DOM trace $\domtrace$,
    $\fta$ contains only programs that reproduce $A$; that is,
    for all $\program\in \ftalang(\fta)$, we have  
    $\eval{\domtrace_{[1,m]}, \env}{ \program }{\actiontrace, \emptytrace}$, where $\env$ is a binding context with only $\inputvar \mapsto \inputdata$, 
\end{theorem}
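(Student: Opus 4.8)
The plan is to reduce the soundness claim to a single structural invariant on the FTA $\fta$ maintained by Algorithm~\ref{alg:top-level}, and then discharge that invariant by induction over the while loop. Concretely, I would show that at every point in the execution of the algorithm, (i) $\fta$ is \emph{well-annotated}, and (ii) $\fta$ has a unique final state $\ftastate_0$ whose footprint contains the behavior $\domtrace_{[1,m]}, \env \inputcontexttooutput \actiontrace, \emptytrace$, where $\env = \{ \inputvar \mapsto \inputdata \}$. Granting this invariant, the theorem is immediate: since $\ftastate_0$ is the unique final state, $\ftalang(\fta) = \ftalang(\subfta(\ftastate_0, \ftatransitions))$, so every $\program \in \ftalang(\fta)$ is represented by $\ftastate_0$; applying the definition of well-annotatedness of $\fta$ to the behavior $\domtrace_{[1,m]}, \env \inputcontexttooutput \actiontrace, \emptytrace$ in $\ftastate_0$'s footprint then yields exactly $\eval{\domtrace_{[1,m]}, \env}{\program}{\actiontrace, \emptytrace}$.

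The invariant is proved by induction on the number of iterations of the loop (lines~\ref{alg:top-level:begin-loop}--\ref{alg:top-level:end-loop}). For the base case, I would inspect the initialization rule of Figure~\ref{fig:fta-init-rules}: it creates the final state $\ftastate_0 = \newstate(\programsymbol, \{ \domtrace_{[1,m]}, \env \inputcontexttooutput \actiontrace, \emptytrace \})$, establishing (ii) directly. Well-annotatedness (i) follows by a routine check that each footprint created during initialization is consistent with the trace semantics of Figure~\ref{fig:alg:dsl}: the $\seqprog$ chain decomposes $[\action_1, \mydots, \action_m]$ and $[\dom_1, \mydots, \dom_m]$ in lockstep, and each statement/selector leaf transition is assigned precisely the behavior dictated by evaluating that statement on the corresponding DOM.

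For the inductive step, I would track the three operations performed in each iteration. \textsc{SpeculateFTA} produces $\fta_s$ with \emph{empty} footprints, i.e.\ an unannotated FTA, so the \textbf{Soundness of EvaluateFTA} lemma applies and guarantees that $\fta_e = \textsc{EvaluateFTA}(\fta_s, \inputcontexts(\ftastate_0))$ is well-annotated. By the inductive hypothesis $\fta$ is well-annotated, so the \textbf{Soundness and monotonicity of MergeFTAs} lemma gives that $\textsc{MergeFTAs}(\fta, \ftastate_0, \fta_e)$ is again well-annotated, re-establishing (i). For (ii), I would observe that \textsc{MergeFTAs} only \emph{adds} transitions, connecting final states of $\fta_e$ and a fresh state to the existing $\ftastate_0$, and never introduces a new final state nor alters $\ftastate_0$'s footprint; hence $\ftastate_0$ remains the unique final state and still carries $\domtrace_{[1,m]}, \env \inputcontexttooutput \actiontrace, \emptytrace$.

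The main obstacle I anticipate is the base-case bookkeeping rather than the inductive step: carefully verifying that \emph{every} auxiliary state produced by FTA initialization (statement states, selector states, and the intermediate program states $\ftastate_i$) is assigned a footprint faithful to the trace semantics, since well-annotatedness is a universal statement quantifying over all states and all programs they represent. The candidate-selector leaf transitions are the delicate point, because a single statement node admits many selector children, and I must confirm each is annotated with exactly the full XPath it resolves to on the relevant DOM, so that the compositional $\seqprog$ and statement rules propagate consistent behaviors upward. Once the base case is nailed down, the inductive step is essentially a direct appeal to the two soundness lemmas, together with the observation that merging never disturbs the final state or its footprint.
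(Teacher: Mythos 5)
Your proposal is correct and follows essentially the same route as the paper: reduce soundness to well-annotatedness of $\fta$, prove it by induction on the while loop with the initialization rule as the base case and the soundness lemmas for \textsc{EvaluateFTA} and \textsc{MergeFTAs} discharging the inductive step, then read the conclusion off the final state's footprint. Your explicit tracking of invariant (ii) --- that the unique final state retains the behavior $\domtrace_{[1,m]}, \env \inputcontexttooutput \actiontrace, \emptytrace$ across merges --- is a slightly more careful bookkeeping of a fact the paper's proof uses implicitly, but it is not a different argument.
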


\begin{proof}
    We show this by showing that $\fta$ is well-annotated, 
    which directly implies the theorem, 
    since the only final state of $\fta$ is associated with the footprint $\eval{\domtrace_{[1,m]}, \env}{ \program }{\actiontrace, \emptytrace}$. 
    
    We next prove $\fta$ is well-annotated by inducting on the while loop.
    \begin{itemize}
        \item Base case: It is straightforward to see that the initial FTA is well-annotated.
        For example, consider the set of programs represented by the final state: $\seqprog(\action_1[\sigma_1], \ldots, \action_m[\sigma_m])$ where $\sigma_i$ is any substitution that maps full XPaths $\absolutexpath$ in $\action_i$ to selectors $\selectorexpr$ satisfying $\eval{\dom_i, \env}{\selectorexpr}{\fullxpath}$. Any such program $\program$ reproduces the trace (i.e., $\eval{\domtrace_{[1,m]}, \env}{ \program }{\actiontrace, []}$).
        \item Inductive steps: given that $\fta$ currently satisfies the invariant, we show that after one iteration of the main loop body (line~\ref{alg:top-level:speculate}-\ref{alg:top-level:merge} of Algorithm~\ref{alg:top-level}), the new $\fta$ (denoted as $\fta_{\textit{new}}$) still satisfies the invariant. To show this, we notice that by the soundness of \textsc{EvaluateFTA} (Lemma~\ref{proof:eval}), $\fta_e$ is well-annotated.
        Moreover, by the inductive hypothesis, $\fta$ is well-annotated as well.
        According to the soundness of \textsc{MergeFTAs} (Lemma~\ref{proof:merge}), 
        $\fta_{\textit{new}}$ is well-annotated.
    \end{itemize}
\end{proof}

\subsection{Completeness of the synthesis algorithm}

The completeness theorem of our synthesis algorithm is slightly trickier to state
Note that our synthesis algorithm does not consider \textit{every} program that generalizes the trace. For example, our algorithm only synthesizes loops that have two full iterations exhibited in trace. 
Moreover, our algorithm only consider programs whose generalizability comes from synthesized loops. 
For example, program $\seqprog(\action_1,\seqprog(\action_1, \emph{skip}))$ generalizes the action trace $[\action_1]$, but our algorithm does not consider it.

To formally define the program space our synthesis algorithm covers, we introduce a concept called {\it rerollability} of an given evaluation.
Specifically, we use two judgements to define the set of rerollable programs, $\eval{\domtrace,\env}{^R \program}{\actiontrace,\domtrace'\ \textsc{Reroll}}$ and 
$\eval{\domtrace,\env}{^G \program}{\actiontrace,\domtrace'\ \textsc{Reroll}}$. The only difference between these two judgements is that the first one requires programs to strictly reproduce the trace, and the second one allows the program to also possibly generalize the trace . 
Specifically, the generalizability judgement allows ``skipping evaluations'' (similar to the standard semantics of our DSL)
\[(\textsc{Emp-Reroll})\ \ \ \ \ \ 
\irule{
}{
\eval{\emptytrace, \env}{^G \program }{\emptytrace, \emptytrace\ \textsc{Reroll}}
}
\]
that the reproducibility judgement does not allow.

Inference rules for rerollability generally follow the semantics of our DSL.
However, there are two key differences:
First, we require any evaluation of loops must 
fully reproduce itself in the first two iterations.
Second, we track whether a subprogram should strictly reproduce or can potentially generalize the given trace.
We show a few representative rules here ($X$ ranges over $\{R,G\}$ in the following rules):

\[
\small 
\begin{array}{cc}
\\
\\
\textsc{(ForData-Reroll)} &
\irule{
\begin{array}{c}
\eval{\Sigma}{\dataexpr}{L} \ \ \ \
|L| > 1 \\
\eval{\domtrace, \env[z\bindsto L[1]]}{^R \program}{\actiontrace_1, \domtrace_1\ \textsc{Reroll}} \\
\eval{\domtrace_1, \env[z\bindsto L[2]]}{^R \program}{\actiontrace_2, \domtrace_2\ \textsc{Reroll}} \\
\eval{\domtrace_{i-1}, \env[z\bindsto L[i]]}{^X\program}{\actiontrace_i, \domtrace_i\  \textsc{Reroll}}\ \ \ \ \forall\; i\in [3,|L|]

\end{array}
}{
\eval{\domtrace, \env}{^X \fordataloop( \dataexpr, \lambda \datavar. \program )}{\bigoplus_{i= 1}^{ |L|}\actiontrace_i, \domtrace_{|L|}\ \textsc{Reroll}}
}

\\
\\
\textsc{(Click-Reroll)}&
\irule{
\domtrace = [ \dom_1, \mydots, \dom_m ] \ \ \ \ 
\eval{\dom_1, \env}{\selectorexpr}{\fullxpath} \ \ \ \ 
\domtrace' = [ \dom_2, \mydots, \dom_m ] 
}{
\eval{\domtrace, \env}{^X\clickexpr(\selectorexpr)}[ {\clickexpr(\fullxpath) ], \domtrace' \ \textsc{Reroll}}
}
\\
\\

(\textsc{Seq-Non-Final-Reroll}) &
\irule{
\begin{array}{c}
P \neq \emph{skip}\\
\eval{\domtrace, \env}{^R \expression}{\actiontrace', \domtrace'\ \textsc{Reroll}} \ \ \ \ 
\eval{\domtrace', \env}{^X\program}{\actiontrace^{\doubleprime}, \domtrace^{\doubleprime}\ \textsc{Reroll}}
\end{array}
}{
\eval{\domtrace, \env}{^X \seqprog( \expression, \program )}{\actiontrace' \listconcat \actiontrace^{\doubleprime}, \domtrace^{\doubleprime}\ \textsc{Reroll}}
}
\\
\\

(\textsc{Seq-Final-Reroll}) &
\irule{
\begin{array}{c}
\eval{\domtrace, \env}{^X \expression}{\actiontrace', \domtrace'\ \textsc{Reroll}}
\end{array}
}{
\eval{\domtrace, \env}{^X \seqprog( \expression, \emph{skip} )}{\actiontrace', \domtrace'\ \textsc{Reroll}}
}

\end{array}
\]

The following lemmas give a sense of the relative restrictiveness of these judgements.

\begin{lemma}
\label{theorem:repro-gen}
    $\eval{\domtrace,\env}{^R}{\actiontrace',\domtrace'\ \textsc{Reroll}}$
    implies
    $\eval{\domtrace,\env}{^G P}
    {\actiontrace',\domtrace'\ \textsc{Reroll}}$.
\end{lemma}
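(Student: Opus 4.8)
The plan is to prove the lemma by structural induction on the derivation of $\eval{\domtrace,\env}{^R \program}{\actiontrace',\domtrace'\ \textsc{Reroll}}$. The key structural observation is that the two judgments are defined by the \emph{same} family of inference rules, parametrized by the tag $X \in \{R, G\}$, with the sole exception that $^G$ admits one additional axiom, namely (Emp-Reroll), which $^R$ lacks. Consequently, every rule that can conclude an $^R$-judgment has a syntactically identical counterpart concluding the corresponding $^G$-judgment, obtained by instantiating the polymorphic $X$ to $G$. Thus $^G$ is strictly more permissive than $^R$, and the induction will show that any $^R$-derivation can be ``relabeled'' into a $^G$-derivation with the same conclusion $(\actiontrace', \domtrace')$.

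For the inductive step, I would consider the last rule applied in the $^R$-derivation; its conclusion carries the tag $X = R$. The premises of this rule fall into two categories. First, premises that are \emph{explicitly} tagged $^R$ (for instance, the first premise of (Seq-Non-Final-Reroll), and the first two iterations of (ForData-Reroll)): these are unchanged, since the $G$-instance of the rule demands exactly the same $^R$-premises. Second, premises that \emph{inherit} the polymorphic tag $X$ — and are therefore $^R$ in the current derivation — such as the continuation in (Seq-Non-Final-Reroll), the subexpression in (Seq-Final-Reroll), and the tail iterations $i \in [3,|L|]$ of (ForData-Reroll): to these I apply the induction hypothesis to obtain $^G$-derivations of the same judgments. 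Reassembling the unchanged $^R$-premises together with the converted $^G$-premises through the $G$-instance of the same rule yields a $^G$-derivation of the desired conclusion. Rules whose premises are all tag-independent side conditions on the trace semantics — e.g.\ (Click-Reroll), whose premise is merely a selector-evaluation judgment $\eval{\dom_1,\env}{\selectorexpr}{\fullxpath}$ — require no hypothesis and are handled by directly instantiating $X = G$ in the conclusion.

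The step requiring the most care is the family of loop rules — (ForData-Reroll) together with its analogues for $\forselectorsloop$ and $\whileloop$ — because there the first two iterations are \emph{forced} to reproduce while the remaining iterations inherit $X$. I would check that the two forced $^R$-iterations carry over verbatim (they do, since the $G$-version of the loop rule demands the identical reproducing premises) and that the induction hypothesis is available for each tail iteration. Because the induction is on the derivation rather than on program syntax, every sub-derivation — including each individual loop iteration — is strictly smaller, so the hypothesis applies uniformly regardless of the number of iterations. The only global assumption needed is that the rules displayed in the excerpt are representative: every omitted rule likewise follows the polymorphic-$X$ discipline, with (Emp-Reroll) the unique $^G$-only axiom. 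Under this assumption the relabeling argument goes through for all rule shapes, establishing that $\eval{\domtrace,\env}{^R \program}{\actiontrace',\domtrace'\ \textsc{Reroll}}$ implies $\eval{\domtrace,\env}{^G \program}{\actiontrace',\domtrace'\ \textsc{Reroll}}$.
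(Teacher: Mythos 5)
Your proof is correct and takes essentially the same approach as the paper, which simply asserts that the lemma ``can be proved via a straightforward induction'' without giving details. Your relabeling argument — induction on the derivation, noting that the $^G$ judgment is defined by the same $X$-polymorphic rules plus the extra (\textsc{Emp-Reroll}) axiom, with forced $^R$-premises carried over verbatim and $X$-inheriting premises converted by the induction hypothesis — is a faithful and more explicit elaboration of that intended argument.
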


\begin{lemma}
    $\eval{\domtrace,\env}{^G}{\actiontrace',\domtrace'\ \textsc{Reroll}}$
    implies
    $\eval{\domtrace,\env}{P}
    {\actiontrace',\domtrace'}$.
\end{lemma}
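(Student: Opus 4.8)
The plan is to prove the statement by structural induction on the derivation of the generalization-rerollability judgment, reading each rerollability rule as its ordinary trace-semantics counterpart from Figure~\ref{fig:alg:dsl} with the rerollability-specific side conditions erased. First I would observe that the lemma \emph{as stated} does not close under induction on its own: the rule \textsc{(ForData-Reroll)} (and likewise \textsc{(Seq-Non-Final-Reroll)}) has premises phrased with the \emph{reproduction} judgment ${}^{R}$ rather than the generalization judgment ${}^{G}$, so an induction hypothesis about ${}^{G}$ alone would not apply to those sub-derivations. I would therefore strengthen the claim to cover both judgments uniformly: for every $X \in \{R,G\}$,
\[
\eval{\domtrace, \env}{{}^{X} \program}{\actiontrace', \domtrace'\ \textsc{Reroll}}
\ \Longrightarrow\
\eval{\domtrace, \env}{\program}{\actiontrace', \domtrace'}.
\]
The desired lemma is the instance $X=G$; the instance $X=R$ is exactly the composition of Lemma~\ref{theorem:repro-gen} with the $X=G$ conclusion, and having it available makes every premise of every rerollability rule an instance of the same strengthened hypothesis.

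Next I would induct on the height of the rerollability derivation and case-split on the last rule applied. The guiding observation is that each rerollability rule is obtained from a standard trace-semantics rule by (i) adding side conditions (for loops, $|L|>1$ and the requirement that the first two iterations use ${}^{R}$) and (ii) decorating judgments with the $\textsc{Reroll}$ tag and the $R/G$ superscript. Thus for each case I would invoke the matching standard rule, supplying it the standard-semantics conclusions obtained by the induction hypothesis on each premise (every premise is a strictly smaller sub-derivation in either ${}^{R}$ or ${}^{G}$, both covered by the strengthened statement), and then simply discard the extra side conditions. For the straight-line forms \textsc{(Click-Reroll)}, \textsc{(Seq-Non-Final-Reroll)}, and \textsc{(Seq-Final-Reroll)} this is immediate, since their DOM/action threading coincides verbatim with \textsc{Click} and \textsc{Seq}.

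The two cases that need care are the empty-trace base case and the loops. For \textsc{(Emp-Reroll)} I would appeal to the graceful-termination behavior of the trace semantics on an exhausted DOM trace --- the same implicit base case noted after Rule~(7) of \textsc{EvaluateFTA}, namely that every program admits $\eval{\emptytrace, \env}{\program}{\emptytrace, \emptytrace}$ --- so the ${}^{G}$ skipping step maps onto a valid standard evaluation. For \textsc{(ForData-Reroll)} I would rebuild a \textsc{ForData-1}/\textsc{ForData-2} derivation iteration by iteration: the rerollability premises evaluate the body under $\env[\datavar \bindsto L[i]]$ with input DOM trace $\domtrace_{i-1}$ and output DOM trace $\domtrace_{i}$, which is precisely the data-dependent threading that \textsc{ForData-2} demands; applying the induction hypothesis to each of the $|L|$ body sub-derivations (the first two in ${}^{R}$, the remainder in ${}^{X}$) yields exactly the standard per-iteration evaluations, and concatenating their action traces reproduces $\bigoplus_{i=1}^{|L|}\actiontrace_i$. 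The other loop forms $\forselectorloop$ and $\whileloop$ are handled identically.

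The main obstacle I anticipate is entirely in getting the induction to close rather than in any single case: the appearance of ${}^{R}$ premises inside ${}^{G}$ rules is what forces the generalization to $X \in \{R,G\}$, and one must verify that every premise genuinely is a sub-derivation of the \emph{same} strengthened judgment so that the hypothesis is applicable. Once the statement is correctly generalized, each rule reduces to a mechanical rule-to-rule correspondence in which the rerollability side conditions are simply dropped.
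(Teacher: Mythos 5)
Your proposal is correct and follows essentially the same route as the paper, which dismisses both this lemma and its companion with ``a straightforward induction''; your write-up simply makes explicit the details that induction requires, namely strengthening the statement to cover both ${}^{R}$ and ${}^{G}$ simultaneously (so the ${}^{R}$ premises of the loop and sequencing rules fall under the induction hypothesis) and appealing to the implicit $\emptytrace, \_ \inputcontexttooutput \emptytrace, \emptytrace$ behavior for the \textsc{Emp-Reroll} base case.
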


\begin{proof}
    Both lemmas can be proved via a straightforward induction.
\end{proof}

Now we have defined the space of rerollable programs we can synthesize, we are ready to show the completeness of our theorem. To prune away programs like $\seqprog(\action_1, \seqprog(\action_1,\emph{skip}))$ for trace $[\action_1]$,
we additionally restrict our completeness theorem to programs whose final expression can produce at least one action on the given trace, as shown below.

\begin{theorem}[Completeness]
\label{proof:main:completeness}
    Given DOM trace $\domtrace$ and environment $\sigma$,
    if the evaluation of
    a program $P=\seqprog(S_1,\mydots, S_k)$ under $\domtrace_{[1,m]}$ and $\env$ is rerollable (and may generalize the given trace) and satisfies that the evaluation of $S_k$ produces at least one action,
    $P$ will be represented by $\fta$; In other words, if
    \[\eval{\domtrace_{[1,m]}, \env}{^R \seqprog(S_1, \mydots, S_{k-1}) }{\actiontrace_{[1,p]}, \domtrace_{[p+1,m]}\ \textup{\textsc{Reroll}}}\]
    and \[\eval{\domtrace_{[p+1,m]}, \env}{^GS_k}{\actiontrace_{[p+1,m]}, \emptytrace\ \textup{\textsc{Reroll}}}\]
    where $p\neq m$,
    then program \(\seqprog(S_1,\mydots, S_k)\in \ftalang(\fta).\)
\end{theorem}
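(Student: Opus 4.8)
The plan is to prove this completeness theorem by induction on the structure of the while loop in Algorithm~\ref{alg:top-level}, strengthening the statement so that it tracks not just the final program but all rerollable sub-programs at the appropriate FTA states. The key conceptual move is to relate the two rerollability judgments ($\eval{\domtrace,\env}{^R P}{\actiontrace',\domtrace'\ \textsc{Reroll}}$ and $\eval{\domtrace,\env}{^G P}{\actiontrace',\domtrace'\ \textsc{Reroll}}$) to the FTA states produced during synthesis. First I would establish a base-case lemma: any loop-free rerollable program that strictly reproduces a suffix of the trace is captured by the initial FTA. This follows directly from the FTA initialization rule (Figure~\ref{fig:fta-init-rules}), since that rule constructs, for each prefix position, a state whose sub-FTA language contains exactly the programs $\seqprog(\action_i[\sigma_i], \ldots)$ that reproduce $[\action_i, \ldots, \action_m]$ under candidate-selector substitutions $\sigma_i$.

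The heart of the argument is an inductive lemma on the rerollability derivation showing that every rerollable loop gets discovered by some \textsc{SpeculateFTA}/\textsc{EvaluateFTA}/\textsc{MergeFTAs} iteration. Concretely, I would argue: if a program $P$ is rerollable and its derivation uses the \textsc{ForData-Reroll} (or analogous loop) rule at the top, then because that rule forces \emph{two full reproducing iterations} in the trace, the corresponding $2l$ consecutive $\seqprog$ transitions must already be present in $\fta$ (by the inner inductive hypothesis on the loop body and the structure of the FTA). Line~\ref{alg:top-level:matching-transitions} will eventually pick exactly these $2l$ transitions; then I would show that \textsc{SpeculateForData} (Algorithm~\ref{alg:speculate-fta-fordata}) produces a speculated FTA $\fta_s$ whose language contains the loop in question --- this requires checking that anti-unification (Figure~\ref{fig:anti-unify-rules}) succeeds on the two reproducing iterations and that parametrization (Figure~\ref{fig:parametrize-rules}) correctly abstracts the iterated selector/data expression. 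By the completeness of \textsc{EvaluateFTA} (Lemma~\ref{proof:eval}), $\ftalang(\fta_s) = \ftalang(\fta_e)$ and the footprints of $\fta_e$ correctly record the loop's behavior, so the loop that reproduces the trace survives into $\fta_e$; then \textsc{MergeFTAs} (using the prefix constraint $\actiontrace'_1 \traceconcat \actiontrace'_2 = \actiontrace'$) attaches it at the right state, and by monotonicity (Lemma~\ref{proof:merge}) it is never lost afterward.

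I would handle the ``generalizing'' final expression $S_k$ separately, since the theorem distinguishes $\eval{\cdot}{^R \seqprog(S_1,\ldots,S_{k-1})}{\cdots}$ (strict reproduction of the prefix) from $\eval{\cdot}{^G S_k}{\cdots}$ (possibly generalizing final statement). The requirement $p \neq m$ and the side condition that $S_k$ produces at least one action are precisely what rule out spurious decompositions like $\seqprog(\action_1, \seqprog(\action_1, \emph{skip}))$: because $S_k$ emits a genuine action on the remaining trace, it corresponds to a real transition chain in $\fta$ rather than a vacuous \textsc{Emp-Reroll} step. Using Lemmas~\ref{theorem:repro-gen}, I would convert the $^G$ judgment on $S_k$ into a $^R$ derivation on the observed portion, then apply the loop-discovery lemma to $S_k$ as well, concluding that $\seqprog(S_1,\ldots,S_k) \in \ftalang(\fta)$ once the FTA saturates.

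The main obstacle I anticipate is the inductive step for \emph{nested} loops: when $S_k$ or an $S_i$ is itself a loop whose body contains further loops, I must ensure that the inner loops are discovered and merged \emph{before} the outer $2l$ transitions become available for selection at line~\ref{alg:top-level:matching-transitions}. This requires a careful well-founded ordering (e.g., on the total size or nesting depth of the rerollability derivation) together with the saturation argument --- the claim ``upon FTA saturation'' is doing real work here, since discovery of an outer loop may depend on transitions only created after several prior merges. Making this dependency argument precise, and showing the process genuinely saturates (terminates with all required transitions present), is where the bulk of the technical care will go.
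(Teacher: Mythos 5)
Your proposal is correct and follows essentially the same route as the paper's proof: a base case showing loop-free reproducing programs are captured by \textsc{InitFTA} and preserved by monotonicity of \textsc{MergeFTAs}, and an induction on loop-nesting depth in which the two forced reproducing iterations of each loop body yield the $2l$ consecutive $\seqprog$ transitions, anti-unification/parametrization recover the loop during speculation, and completeness of \textsc{EvaluateFTA} plus \textsc{MergeFTAs} finish the step. One minor caveat: Lemma~\ref{theorem:repro-gen} only gives ${}^{R} \Rightarrow {}^{G}$, so rather than converting the ${}^{G}$ judgment on $S_k$ into an ${}^{R}$ one, the paper simply keeps the ${}^{G}$ judgment for the final statement in the strengthened induction hypothesis --- this does not change the structure of your argument.
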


\begin{proof}

To prove the theorem, we prove a generalization of it: If
\begin{equation}
   \eval{\domtrace_{[o,m]},\env}{^R \seqprog(S_1, \mydots, S_{k-1})}{\actiontrace_{[o,p]}, \domtrace_{[p+1,m]}\ \textsc{Reroll}}
\end{equation}
and
\begin{equation}
   \eval{\domtrace_{[p+1,m]},\env}{^P S_k}{\actiontrace_{[p+1,q]}, \domtrace_{[q+1,m]}}\ \textsc{Reroll}
   \end{equation} 
   for some indices $o,p,q$ with $p\neq q$,
then
\[\seqprog(S_1, \mydots, S_k, \action_{j+1}, \mydots \action_m)\in \ftalang(q_o, \fta)\]
for some state $q_o$ of $\fta$.

We prove this generalization by inducting on the nested levels of loops of program $P$.

\begin{itemize}
    \item Base case: All loop-free programs $\program$ satisfying (1) and (2)
    have the shape $\seqprog(a_o[\sigma_o],\mydots,a_q[\sigma_q])$ where $\sigma_i$ is any substitution that maps full XPaths $\absolutexpath$ in $\action_i$ to selectors $\selectorexpr$ satisfying $\eval{\dom_i, \env}{\selectorexpr}{\fullxpath}$.
    By our construction,
    there always exists some state $q$ in
    $\textsc{InitFTA}(\domtrace_{[1,m]},\actiontrace)$ that represents such a program with the corresponding footprints.
    By the monotonicity of \textsc{MergeFTAs}, this state is in $\fta$ and represents such a program as well.
    \item Inductive steps:
    Consider a program $P=\seqprog(S_1,\mydots, S_k)$ where all $S_i$
    has a nested level of $n+1$ with rerollable evaluation 
    $\eval{\domtrace_{[o,m]}, \env}{^R \seqprog(S_1,\mydots, S_{k_1})}{A_{[o,p]}, \domtrace_{[p+1,m]}\ \textsc{Reroll}}$ and $\eval{\domtrace_{[p,m]}, \env}{^G S_k}{A_{[p,q]}, \domtrace_{[q+1,m]}\ \textsc{Reroll}}$.
    Let us only consider the case for $\fordataloop$ and other loop forms can be proved similarly.
    
    To show this statement about $\seqprog(S_1,\mydots, S_k)$, we only need to show it for a single $S$.
    Suppose that $S=\fordataloop( \dataexpr, \lambda \datavar. \program' )$ is an expression from $S_1, \mydots, S_k$ and that $P'$ has the form $\seqprog(S'_1,\ldots, S'_l)$ and a nested level of $n$.
    We will show that if $\eval{\domtrace_{[i,m]},\env}{^G S}{A_{[i, l]}, \domtrace_{[l+1,m]}}$, then $\seqprog(S, a_{l+1},\mydots, a_m)\in\ftalang(q, \fta)$ for some $q$.
    
    By \textsc{ForData-Reroll}, 
    we have
    \[\begin{array}{c}
    \eval{\domtrace_{[i,m]}, \env[z\bindsto L[1]]}{^R \program'}{\actiontrace_{[i, j]}, \domtrace_{[j+1,m]}\ \textsc{Reroll}} \\
\eval{\domtrace_{[j+1,m]}, \env[z\bindsto L[2]]}{^R \program'}{\actiontrace_{[j+1,k]}, \domtrace_{[k+1,m]}\ \textsc{Reroll}}
    \end{array}\]
    for some $j,k$.
Alternatively, we can apply the substitution with $z$ to $P'$ directly.
    \[\begin{array}{c}
    \eval{\domtrace_{[i,m]}, \env}{^R \program'[z\bindsto L[1]]}{\actiontrace_{[i, j]}, \domtrace_{[j+1,m]}\ \textsc{Reroll}} \\
\eval{\domtrace_{[j+1,m]}, \env}{^R \program'[z\bindsto L[2]]}{\actiontrace_{[j+1,k]}, \domtrace_{[k+1,m]}\ \textsc{Reroll}}
    \end{array}\]

It can be proved via a straightforward induction that $\program'_{\listconcat}=\program'[z\bindsto L[1]]\listconcat \program'[z\bindsto L[2]]$ satisfy 
\[
\eval{\domtrace_{[i,m]}, \env}{^R \program'_{\listconcat}}{\actiontrace_{[i, k]}, \domtrace_{[k+1,m]}\ \textsc{Reroll}}.
\]
Moreover, $\program'_{\listconcat}$  has a nested level of $n$, so by Lemma~\ref{theorem:repro-gen} and inductive hypothesis, it follows 
\[\program'_{\listconcat}\listconcat \seqprog(\action_{k+1},\mydots,\action_{m})\in\ftalang(q_i,\fta)\]
for some state $q_0$.
Expanding the definition of $\program'_{\listconcat}$:
\[\begin{array}{rr}
\seqprog(S'_1[z\bindsto L[1]],\ldots, S'_k[z\bindsto L[1]],&\\
S'_1[z\bindsto L[2]],\ldots, S'_k[z\bindsto L[2]],&\\
\action_{k+1},\mydots,\action_{m})&\in\ftalang(q_0,\fta).
\end{array}\]

We just show that there will be two consecutive traces of the loop in the FTA,
so the matching procedure (line~\ref{alg:top-level:matching-transitions} of algorithm~\ref{alg:top-level}) will eventually match the root state $q_0$ and a list of states $q'_1,\mydots,q'_{2l}$ where \begin{enumerate}
    \item $q_0=(P, \{\domtrace_{[i,m]},\env\bindsto\actiontrace_{[i,m]}, \emptytrace\})$,
    \item
$S'_i[z\bindsto L[1]]\in L(q'_i,\fta)$, and
    \item $S'_i[z\bindsto L[2]]\in L(q'_{l+i},\fta)$.
\end{enumerate}
During \textsc{SpeculateFTA}, anti-unifying any pair $q'_i, q'_{l+i}$
 will yield an anti-unifier $\dataexpr$ with $\eval{\env}{\dataexpr}{L}$, because per the construction in \textsc{InitFTA}, all states containing $L[1]$ (resp.\ $L[2]$) will also contain all alternative data expressions $\dataexpr[1]$ (resp.\ $\dataexpr[2]$).
As a result, the FTA after speculation, $\fta_s$, will represent the program
$\fordataloop( \dataexpr, \lambda \datavar. \seqprog(S'_1,\mydots,S'_k))$. 
This is exactly the $S$ we want to synthesize.

Finally, by completeness of \textsc{EvaluateFTA}, $S$ is represented by a final state $q$ in $\fta_e$ as well and 
$q$ has context $\{P, \domtrace_{[i,m]} \}$.
By soundness of \textsc{EvaluateFTA}, $\fta_e$ is well-annotated, 
 so the footprint follows the evaluation and has to be $\{\domtrace_{[i, m]},\env\bindsto \actiontrace_{[i, l]}\}$ for some index $l$ that marks the end of loop evaluation (or the end of the trace)  satisfying $\eval{\domtrace_{[i, m]},\env}{S}{\actiontrace_{[i, l]}, \domtrace_{[l+1, m]}}$.

On the other hand, according to \textsc{InitFTA}, there exists a state $q'$ in $\fta$ that represents program $\seqprog(a_{l+1},\mydots, a_m)$ with footprint $\domtrace_{[l+1, m]},\env\bindsto \actiontrace_{[l+1, m]}, \emptytrace$.
Because the root state $q_0$ has footprint $\domtrace_{[i,m]},\env\bindsto\actiontrace_{[i,m]}, \emptytrace$. By definition \textsc{MergeFTAs} will add a transition $\seqprog(q, q')\rightarrow q_0$ to $\fta$,
so $q_0$ now represents $\seqprog(S, a_{l+1},\mydots,a_m)$.

\end{itemize}

\end{proof}

\subsection{The main theorem}

Combining everything together, we have the following main theorem:

\newtheorem*{theoremmain}{Theorem 4.5}

\begin{theoremmain}
Given action trace $\actiontrace$, DOM trace $\domtrace$ and input data $\inputdata$, 
our synthesis algorithm always terminates. Moreover, if there exists a program in our grammar (shown in Figure~\ref{fig:alg:dsl}) that generalizes $\actiontrace$ (given $\domtrace$ and $\inputdata$) and satisfies the condition that (1) every loop has at least two iterations exhibited in $\actiontrace$ and (2) its final expression is a loop, then our synthesis algorithm (shown in Algorithm~\ref{alg:top-level}) would return a program that generalizes $\actiontrace$ (given $\domtrace$ and $\inputdata$) upon FTA saturation. 
\end{theoremmain}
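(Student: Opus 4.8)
The plan is to decompose the statement into three independent obligations --- termination, soundness (the returned program indeed generalizes $\actiontrace$), and completeness (existence of a well-shaped generalizing program forces the algorithm to find one) --- and to prove each via invariants maintained on the FTA across the main loop of Algorithm~\ref{alg:top-level}.

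For \textbf{termination}, I would argue that the FTA inhabits a finite lattice. Each state is a pair of a grammar symbol and a footprint, where a footprint is a set of behaviors over the \emph{reachable} contexts. Under the paper's standing assumption that reachable contexts are finite (together with the size bound that finitizes the grammar), there are only finitely many distinct states and hence finitely many possible transitions. Since $\textsc{MergeFTAs}$ only ever adds states and transitions --- a monotonicity property I would record as a lemma --- the FTA grows monotonically and must saturate after finitely many iterations, at which point the loop exits.

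For \textbf{soundness}, I would introduce a \emph{well-annotatedness} invariant: every state's footprint agrees with the trace semantics of the programs that state accepts. The strategy is (i) to check that the initial FTA produced by $\textsc{InitFTA}$ is well-annotated by directly unfolding its construction rule; (ii) to prove, by induction on the $\textsc{EvaluateFTA}$ rules of Figure~\ref{fig:evaluation-rules}, that evaluation only adds footprint entries consistent with the semantics and so preserves well-annotatedness; and (iii) to prove the same for $\textsc{MergeFTAs}$, whose new $\seqprog$ transitions respect the $\seqprog$ semantics. An induction over the while loop then shows the final $\fta$ is well-annotated, and since its unique final state maps the full context to $(\actiontrace, \emptytrace)$, every $\program \in \ftalang(\fta)$ reproduces $\actiontrace$; a short argument over the $\textsc{Ranking}$ step, which re-runs $\textsc{EvaluateFTA}$ on the extra DOM $\dom_{m+1}$, extends this to generalization.

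The genuinely hard part is \textbf{completeness}. I would first isolate exactly which programs the algorithm can recover by defining a \emph{rerollability} judgment in two flavors --- one demanding strict reproduction of each sub-trace and one permitting generalization --- so that the two syntactic side conditions in the theorem (every loop unrolls at least twice in $\actiontrace$, and the final statement is a loop) become literally premises of the judgment. The proof then proceeds by induction on the maximum loop-nesting depth of the witness $\program = \seqprog(S_1, \dots, S_k)$, and I expect to need a strengthened induction hypothesis quantified over arbitrary trace slices $\domtrace_{[o,m]}$ rather than the whole trace. The base case (loop-free $\program$) is immediate since $\textsc{InitFTA}$ enumerates every candidate-selector substitution. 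For the inductive step on a loop $\fordataloop(\dataexpr, \lambda \datavar.\, P')$, the key move is to observe that concatenating its first two unrolled iterations, $P'[\datavar \mapsto L[1]] \listconcat P'[\datavar \mapsto L[2]]$, is a program of strictly smaller nesting depth and hence, by the induction hypothesis, already represented in $\fta$; this places two consecutive copies of the loop body in the FTA, exactly the $2l$-transition pattern that line~\ref{alg:top-level:matching-transitions} matches. I must then verify that $\textsc{SpeculateFTA}$'s anti-unification recovers the correct iterated expression --- which relies on $\textsc{InitFTA}$ having pre-populated all alternative data and selector expressions --- that $\textsc{EvaluateFTA}$ validates the rerolled loop (by its completeness lemma), and that $\textsc{MergeFTAs}$ attaches it to the correct suffix state so that the saturated $\fta$ accepts the witness; $\textsc{Ranking}$ then returns a generalizing program. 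Threading the rerollability bookkeeping through this chain, so that the synthesized loop's footprint provably matches the witness's evaluation, is where I expect most of the effort to concentrate.
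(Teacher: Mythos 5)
Your proposal matches the paper's own proof in all essentials: the same decomposition into termination, soundness via a well-annotatedness invariant preserved by \textsc{EvaluateFTA} and \textsc{MergeFTAs} (with induction over the main while loop), and completeness via a two-flavor rerollability judgment with induction on loop-nesting depth, where the key step is exactly yours --- the concatenation of the first two unrolled iterations of a loop body has strictly smaller nesting depth, hence by the induction hypothesis appears in the FTA as the $2l$ consecutive transitions that speculation matches, after which anti-unification (relying on \textsc{InitFTA} pre-populating all alternative data/selector expressions), the completeness of \textsc{EvaluateFTA}, and \textsc{MergeFTAs} finish the argument. The only cosmetic difference is the termination argument (you argue finiteness of the state space plus monotonicity of the FTA; the paper simply notes there are finitely many ways to reroll a trace), which does not change the substance.
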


\begin{proof}
    Our algorithm always terminates since there are only finitely many ways to reroll a trace.
    Moreover, if there exists a program $P$ that generalizes $A$ and satisfies condition (1) and (2), such a program also reproduces $A$ and its last loop expression produces at least one action (since any loop expression has at least two exhibited iterations). 
    By completeness (theorem~\ref{proof:main:completeness}) of our algorithm, $P$ is represented by $\fta$. 
    As a result, the set of generalizing programs is not empty, and
    \textsc{Rank} will heuristically pick such a program with smallest size.
    By soundness (theorem~\ref{proof:main:soundness}) of our algorithm, the returned program is always correct with respect to the given $\domtrace$ and $I$.
\end{proof}

\label{sec:proofs}
\newpage

\section{Case Study: Large Language Models}

Given the recent advances in large language models (LLMs) and exploding interests in applying them for program synthesis, we perform a case study where we use LLMs to solve program synthesis problems in our domain of web automation. 

\newpara{Setup.}
Given a benchmark (consisting of an action trace $\actiontrace$ together with candidate selectors for each action in $\actiontrace$), we construct a prompt that contains (i) an English description of our web automation language's syntax and semantics, (ii) a brief explanation of the web automation program synthesis problem (i.e., reproduce the demonstration) together with a small example, and (iii) the input action trace $\actiontrace$ --- including candidate selectors --- that we aim to synthesize a program for. 
Here, (i) and (ii) are shared across benchmarks while (iii) is created in a per-benchmark manner. 
Once constructed, the prompt is fed to the model which is then queried for \finalized{3} completions. We define a benchmark {as} ``solved'' if any of the completions gives an intended program.
We conduct this evaluation using the GPT-3.5 API~\cite{chatgpt-link}, which is one of the most popular LLMs publicly available nowadays. 

A unique technical challenge in our domain is that, we have a large number of candidate selectors as part of the problem specification --- as shown in RQ2, it is necessary to consider multiple hundreds of them in order to solve a decent number of benchmarks. 
The GPT-3.5 model we use, however, supports (only) up to 16,385 tokens, which unfortunately is not large enough to hold all necessary selectors for most of our benchmarks. 
Therefore, in this experiment, we use a very small subset of our language that contains only $\forselectorsloop$ loops (with loop index starting at 1) and primitive expressions (like $\clickstatement$). 
We further focus on benchmarks that can be solved using absolute selectors. 
This leads us to a total of 14 benchmarks, where the complexity of intended programs {ranges} from trivial 1-level loops with two expressions to a nested loop with 10 expressions. 
We truncate the input action trace to {include} at most 200 actions to fit within the model's context window. We also make sure this max length is sufficient to generate all intended programs for our 14 benchmarks.

\newpara{Results.}
While the model oftentimes generates syntactically reasonable programs (which is quite impressive), it fails to generate semantically correct programs for \finalized{7} (out of 14) benchmarks.
The \finalized{7} successfully solved cases all involve single-level loops with a maximum of 6 statements --- these are the simplest benchmarks in our suite. 
Among the \finalized{7} unsolved benchmarks, the model frequently fails to parametrize expressions in {the} loop body and struggles with constructing desired loop structures. 
{The model sometimes produces unstable results, claiming that a benchmark is unsolvable in one run while outputing programs in the other two trials.}

While these results are clearly poor, it is well-known that LLMs are sensitive to the prompting strategy~\cite{si2022prompting} and there might be a better method to prompt the model. Nevertheless, we believe these results demonstrate that our benchmarks are quite hard for state-of-the-art LLMs and require further research in relevant areas in order to better solve these problems.

\end{document}